\numberwithin{equation}{section}
\newtheorem{thm}{Theorem}[section]
\newtheorem{prop}[thm]{Proposition}
\newtheorem{lem}[thm]{Lemma}
\newtheorem{cor}[thm]{Corollary}
\newtheorem{conj}[thm]{Conjecture}
\theoremstyle{definition} 
\newtheorem{eg}[thm]{Example}
\theoremstyle{remark}
\newtheorem{rem}[thm]{Remark}
\newcommand{\beq}{\begin{equation}}
\newcommand{\eeq}{\end{equation}}
\newcommand{\be}{\begin{equation*}}
\newcommand{\ee}{\end{equation*}}
\newcommand{\bs}{\boldsymbol}
\newcommand{\C}{\mathbb{C}}
\newcommand{\Z}{\mathbb{Z}}
\newcommand{\bB}{\mathbb{B}}
\newcommand{\mc}{\mathcal}
\newcommand{\cT}{\mathcal{T}}
\newcommand{\g}{\mathfrak{gl}(1|1)}
\newcommand{\gl}{\mathfrak{gl}}
\newcommand{\h}{\mathfrak{h}}
\newcommand{\fkS}{\mathfrak{S}}
\newcommand{\ugt}{\mathrm{U}(\mathfrak{gl}(1|1)[t])}
\newcommand{\End}{\mathrm{End}}
\newcommand{\sing}{{\mathrm{sing}}}   
\newcommand{\ch}{{\mathrm{ch}}}
\newcommand{\pa}{\partial}
\newcommand{\gge}{\geqslant}
\newcommand{\lle}{\leqslant}
\newcommand{\la}{\lambda}
\newcommand{\bla}{{\bm\lambda}}
\newcommand{\bLa}{{\bm\Lambda}}
\newcommand{\glMN}{\mathfrak{gl}(m|n)}
\newcommand{\Uone}{\mathrm{U}(\mathfrak{gl}_{1|1})}
\newcommand{\bmx}{\begin{pmatrix}}    
\newcommand{\emx}{\end{pmatrix}}
\newcommand{\qedd}{\tag*{$\square$}}
\newcommand{\vSi}{\varSigma}
\begin{document}
\pagestyle{myheadings}
\setcounter{page}{1}

\title[Completeness of Bethe ansatz for Gaudin models]{Completeness of Bethe ansatz for Gaudin models\\ associated with $\gl(1|1)$}

\author{Kang Lu}
\address{K.L.: Department of Mathematics, University of Denver, 
\newline
\strut\kern\parindent 2390 S. York St., Denver, CO 80208, USA}\email{kang.lu@du.edu}

\maketitle
\vspace{-0.8cm}
\begin{abstract} We study the Gaudin models associated with $\gl(1|1)$. We give an explicit description of the algebra of Hamiltonians (Gaudin Hamiltonians) acting on tensor products of polynomial evaluation $\gl(1|1)[t]$-modules. It follows that there exists a bijection between common eigenvectors (up to proportionality) of the algebra of Hamiltonians and monic divisors of an explicit polynomial written in terms of the highest weights and evaluation parameters. In particular, our result implies that each common eigenspace of the algebra of Hamiltonians has dimension one. Therefore, we confirm \cite[Conjecture 8.3]{HMVY}. We also give dimensions of the generalized eigenspaces. Moreover, we express the generating pseudo-differential operator of Gaudin transfer matrices associated to antisymmetrizers in terms of the quadratic Gaudin transfer matrix and the center of $\ugt$.
\medskip

\noindent
{\bf Keywords:} Gaudin models, Bethe ansatz, pseudo-differential operators.
\end{abstract}

\thispagestyle{empty}

\section{Introduction}
Supersymmetric integrable models have been studied extensively for the last 4 decades since their introduction back to 1980s \cite{KS,K}. However, the vast majority of work are done for spin chains while there is far less information about Gaudin models associated to Lie superalgebras.

This paper is devoted to the study of the Gaudin models associated to the simplest Lie superalgebra $\g$. Our main motivation is to prove the Bethe ansatz conjecture for the $\g$ Gaudin models defined on tensor products of polynomial modules, which is formulated in the form of \cite[Conjecture 8.3]{HMVY}. We expect the results of this paper is important to understand the completeness of Bethe ansatz for $\glMN$ Gaudin models, see \cite[Conjecture 8.2]{HMVY} and cf. \cite{CLV20,MNV}.

Our main tool is the statements of Bethe ansatz for $\g$ Gaudin models that are established in the generic situation, see \cite{MVY15}. For generic parameters or more specially in the case when the Bethe ansatz equation has no multiple roots, the Bethe ansatz describes completely the spectrum of the Gaudin Hamiltonians, and the Gaudin Hamiltonians are simultaneously diagonalizable with simple spectrum. The more subtle case is when the Bethe ansatz equation has roots of non-trivial multiplicity, then the Gaudin Hamiltonians develop Jordan blocks. To attack this, we describe the image of algebra of Hamiltonians (Bethe algebra) explicitly and show that the Gaudin model for $\g$ is perfectly integrable, see \cite{Lu20}. The perfect integrability introduced in \cite{Lu20} is motivated by \cite{MTV09} to study the completeness of Bethe ansatz. Here by perfect integrability, we mean the algebra of Hamiltonians acts on the Hilbert space $\mathbb V$ cyclically and the image of the algebra of Hamiltonians in $\End(\mathbb V)$ is a Frobenius algebra.

\medskip 

Let us discuss our findings in more detail. 

We consider tensor products $L_{\bLa}(\bm b)=\bigotimes_{s=1}^k L_{\bla^{(s)}}(b_s)$ of polynomial evaluation modules of $\ugt$, where $\bLa=(\bla^{(1)},\dots,\bla^{(k)})$ is a sequence of polynomial $\g$-weights and $\bm b=(b_1,\dots,b_k)$ is a sequence of distinct complex numbers. Set $\bla^{(s)}=(\alpha_s,\beta_s)$.

Define the Bethe algebra to be the unital subalgebra of $\ugt$ generated by the coefficients of the (quadratic) Gaudin transfer matrix and the center of $\ugt$. Here the Gaudin transfer matrix, see \eqref{eq:H-transfer}, can be thought as the generating function of Gaudin Hamiltonians, see \eqref{eq:H-gen}. We need to find the spectrum of the Gaudin transfer matrix acting on the subspace $(L_{\bLa}(\bm b))_{(n-l,l)}^\sing$ of singular vectors in $L_{\bLa}(\bm b)$ of weight $(n-l,l)$ where $n=\sum_{s=1}^k(\alpha_s+\beta_s)$ and $0\lle l<k$. Indeed, we obtain more than that. Let us list our main results about completeness of Bethe ansatz, see Theorem \ref{thm tensor irr}.
\begin{itemize}
\item We prove that the Bethe algebra acts on $(L_{\bLa}(\bm b))_{(n-l,l)}^\sing$ cyclically and hence its image in $\End((L_{\bLa}(\bm b))_{(n-l,l)}^\sing)$ has dimension ${k-1}\choose{l}$.
    \item We further show that the image of the Bethe algebra in $\End((L_{\bLa}(\bm b))_{(n-l,l)}^\sing)$ is isomorphic to
\[
\C[w_1,\dots,w_{k-1}]^{\fkS_{l}\times \fkS_{k-l-1}}/\langle n\prod_{i=1}^k(x-w_i)-\varphi_{\bLa,\bm b}(x)\rangle,\]
where
$$
\varphi_{\bLa,\bm b}(x):=\prod_{s=1}^k(x-b_s)\sum_{s=1}^k\frac{\alpha_s+\beta_s}{x-b_s}.
$$
In particular, the image of the Bethe algebra is a Frobenius algebra. Thus, we establish the perfect integrability of the Gaudin models for $\g$ in the sense of \cite{Lu20}.
\item Consequently, we obtain that the eigenvectors (up to proportionality) of the Bethe algebra in $(L_{\bLa}(\bm b))^\sing_{(n-l,l)}$ are in a bijective correspondence with the monic 
polynomials of degree $l$ which divide $\varphi_{\bLa,\bm b}(x)$. And to each monic divisor $y$ we have exactly one eigenvector of the Bethe algebra and a generalized eigenspace of dimension given by the product of binomial coefficients
$$\prod_{a\in\C} {{\textrm{Mult}_a(\varphi_{\bLa,\bm b}(x))}\choose{\textrm{Mult}_a(y(x))}},$$ 
where $\textrm{Mult}_a(f)$ denotes the order of zero of $f(x)$ at $x=a$.
\end{itemize}

The proof of aforementioned statements are based on the idea of \cite{MTV09}, cf. \cite{LM21}. Note that the perfect integrability for the inhomogeneous $\glMN$ Gaudin models with diagonal twists defined on tensor products of symmetric powers of the vector representations was established in \cite[Corollary 5.3]{HM20} by studying duality between $\glMN$ and $\gl(k)$ Gaudin models and using the known results from \cite{MTV08}.

\medskip

Finally, we give a description of the eigenvalues/eigenvectors of the Bethe algebra in terms of "opers" under the philosophy of the standard geometric Langlands. Given a monic divisor $y$ of $\varphi_{\bLa,\bm b}(x)$, or, in other words, an eigenvector $v_y$ of the Bethe algebra, following \cite{HMVY}, we have a pseudo-differential operator $\mc {D}_y$, see the right hand side of \eqref{eq:eigen-D}. From the explicit formula for the eigenvalue, one sees that the coefficients of the pseudo-differential operator in this case are essentially eigenvalues of the first two Gaudin transfer matrices acting on $v_y$. Again, we improve on that as follows. Let $L(x)=((-1)^{|i|}e_{ij}(x))_{i,j=1,2}$ be the generating matrix of the algebra $\ugt$. By \cite{MR14}, the Berezinian  $\textrm{Ber}(\pa_x-L(x))$ is a generating function for the Gaudin transfer matrices, see \eqref{eq diff trans} and \eqref{eq:Ber-Gaudin}. By $\textrm{Ber}(\pa_x-L(x))v_y=\mathcal{D}_y v_y$, see Theorem \ref{thm:D-eigenvalue-b}, it follows that there is a universal formula for the pseudo-differential operator in terms of the first two Gaudin transfer matrices, which produces $\mc D_y$ when applied to the vector $v_y$ for all $y$, see Corollary \ref{universal oper}.

\medskip 

The theory of Bethe ansatz has been developed for Gaudin models associated with $\mathfrak{osp}(1|2)$ in \cite{KM01}. We plan to further study the completeness of Bethe ansatz for $\mathfrak{osp}(1|2)$ Gaudin models, cf. \cite{LM:2021}.

\medskip 

The paper is organized as follows. In Section \ref{sec notation}, we fix notations and discuss basic facts of the algebraic Bethe ansatz for $\g$ Gaudin models. Then we study the space $\mc V^{\fkS}$ and Weyl modules, and their properties in Section \ref{sec space VS}. Section \ref{sec main thms} contains the main theorems where we also discuss the  higher Gaudin transfer matrices and the relations between higher Gaudin transfer matrices and the first two Gaudin transfer matrices. Section \ref{sec proof} is dedicated to the proofs of main theorems. 

\medskip

{\bf Acknowledgments.} The author thanks C.-L. Huang and E. Mukhin for interesting discussions. 

\section{Preliminaries}\label{sec notation}
\subsection{Lie superalgebra $\g$ and its representations}
A \emph{vector superspace} $V = V_{\bar 0}\oplus V_{\bar 1}$ is a $\Z_2$-graded vector space. Elements of $V_{\bar 0}$ are called \emph{even}; elements of
$V_{\bar 1}$ are called \emph{odd}. We write $|v|\in\{\bar 0,\bar 1\}$ for the parity of a homogeneous element $v\in V$. Set $(-1)^{\bar 0}=1$ and $(-1)^{\bar 1}=-1$.

Consider the vector superspace $\C^{1|1}$, where $\dim(\C^{1|1}_{\bar 0})=1$ and  $\dim(\C^{1|1}_{\bar 1})=1$. We choose a homogeneous basis $v_1,v_2$ of $\C^{1|1}$ such that $|v_1|=\bar 0$ and $|v_2|=\bar 1$. For brevity we shall write their
parities as $|v_i|=|i|$. Denote by $E_{ij}\in\End(\C^{1|1})$ the linear operator of parity $|i|+|j|$ such that $E_{ij}v_r=\delta_{jr}v_i$ for $i,j,r=1,2$. 

The Lie superalgebra $\g$ is spanned by elements $e_{ij}$, $i,j=1, 2$, with parities $e_{ij}=|i|+|j|$ and the supercommutator relations are given by
\[
[e_{ij},e_{rs}]=\delta_{jr}e_{is}-(-1)^{(|i|+|j|)(|r|+|s|)}\delta_{is}e_{rj}.
\]Let $\h$ be the commutative Lie subalgebra of $\g$ spanned by $e_{11},e_{22}$. Denote the universal enveloping algebras of $\gl_{1|1}$ and $\h$ by $\Uone$ and $\mathrm{U}(\h)$, respectively.

We call a pair $\bla=(\la_1,\la_2)$ of complex numbers a $\g$-\emph{weight}. Set $|\bla|=\la_1+\la_2$. A $\g$-weight $\bla$ is \textit{non-degenerate} if $\la_1+\la_2\ne 0$.

Let $M$ be a $\g$-module. A non-zero vector $v\in M$ is called \emph{singular} if $e_{12}v=0$. Denote the subspace of all singular vectors of $M$ by $(M)^{\rm sing}$.  A non-zero vector $v\in M$ is called \emph{of weight} $\bla=(\la_1,\la_2)$ if $e_{11}v=\la_1 v$ and $e_{22}v=\la_2 v$. Denote by $(M)_\bla$ the subspace of $M$ spanned by vectors of weight $\bla$. Set $(M)^{\sing}_\bla=(M)^{\sing}\cap (M)_\bla$.

Let $\bLa=(\bla^{(1)},\dots,\bla^{(k)})$ be a sequence of $\g$-weights. Set $|\bLa|=\sum_{s=1}^k|\bla^{(s)}|$.

Denote by $L_\bla$ the irreducible $\g$-module generated by an even singular vector $v_\bla$ of weight $\bla$. Then $L_{\bla}$ is two-dimensional if $\bla$ is non-degenerate and one-dimensional otherwise. Clearly, $\C^{1|1}\cong L_{\omega_1}$, where $\omega_1=(1,0)$. 

A $\g$-module $M$ is called a \emph{polynomial module} if $M$ is a submodule of $(\C^{1|1})^{\otimes n}$ for some $n\in \Z_{\gge 0}$. We say that $\bla$ is a \emph{polynomial weight} if $L_\bla$ is a polynomial module. Weight $\bla=(\la_1,\la_2)$ is a polynomial weight if and only if $\la_1,\la_2\in \Z_{\gge 0}$ and either $\la_1>0$ or $\la_1=\la_2=0$. We also write $L_{(\la_1,\la_2)}$ for $L_{\bla}$.

For non-degenerate polynomial weights $\bla=(\la_1,\la_2)$ and $\bm\mu=(\mu_1,\mu_2)$, we have
\[
L_{(\la_1,\la_2)}\otimes L_{(\mu_1,\mu_2)}=L_{(\la_1+\mu_1,\la_2+\mu_2)}\oplus L_{(\la_1+\mu_1-1,\la_2+\mu_2+1)}.
\]

\subsection{Current superalgebra $\g[t]$}
Denote by $\g[t]$ the Lie superalgebra $\g\otimes\C[t]$ of $\gl_{1|1}$-valued polynomials with the point-wise supercommutator. Call $\g[t]$ the \emph{current superalgebra} of $\g$. We identify $\g$ with the subalgebra $\g\otimes 1$ of constant polynomials in $\g[t]$.

We write $e_{ij}[r]$ for $e_{ij}\otimes t^r$, $r\in \Z_{\gge 0}$. A basis of $\g[t]$ is given by $e_{ij}[r]$, $i,j=1,2$ and $r\in \Z_{\gge 0}$. They satisfy the supercommutator relations
\[
[e_{ij}[r],e_{kl}[s]]=\delta_{jk}e_{il}[r+s]-(-1)^{(|i|+|j|)(|k|+|l|)}\delta_{il}e_{kj}[r+s].
\]In particular, one has 
\beq\label{eq:anti}
(e_{12}[r])^2=(e_{21}[r])^2=0,\quad e_{21}[r]e_{21}[s]=-e_{21}[s]e_{21}[r]
\eeq
in the universal enveloping superalgebra $\mathrm{U}(\g[t])$. The universal enveloping superalgebra $\mathrm{U}(\g[t])$ is a Hopf superalgebra with the coproduct given by
\[
\Delta(X)=X\otimes 1+1\otimes X,\quad \text{for }\ X\in \g[t].
\]

Let $e_{ij}(x)=\sum_{r=0}^\infty e_{ij}[r]x^{-r-1}$, where $x$ is a formal variable. For each $a\in \C$, there exists an automorphism of $\ugt$, $\rho_a:e_{ij}(x)\to e_{ij}(x-a)$. Given a $\g[t]$-module $M$, denote by $M(a)$ the pull-back of $M$ through the automorphism $\rho_a$.

For each $a\in \C$, we have the evaluation map $\mathsf{ev}_a: e_{ij}(x)\mapsto e_{ij}/(x-a)$. For a $\g$-module $L$, denote by $L(a)$ the $\g[t]$-module obtained by pulling back $L$ through the evaluation map $\mathsf{ev}_a$. We call $L(a)$ an {\it evaluation module} at $a$. 

Given any series $\zeta(x)\in x^{-1}\C[x^{-1}]$, we have the one-dimensional $\g[t]$-module generated by an even vector $v$ satisfying $e_{ij}(x)v=\delta_{ij}(-1)^{|j|}\zeta(x)v$. We denote this module by $\C_{\zeta}$.

If $b_1,\dots,b_n$ are pairwise distinct complex numbers and $L_1,\dots,L_n$ are finite-dimensional irreducible $\g$-modules, then the $\g[t]$-module $\bigotimes_{s=1}^n L_s(b_s)$ is irreducible.

There is a natural $\Z_{\gge 0}$-gradation on $\mathrm{U}(\g[t])$ such that $\deg(e_{ij}[r])=r$. Let $M$ be a $\Z_{\gge 0}$-graded space with finite-dimensional homogeneous components. Let $M_j\subset M$ be the homogeneous component of degree $j$. We call the formal power series in variable $q$,
\beq\label{eq grade ch}
\ch(M)=\sum_{j=0}^\infty \dim(M_j)\,q^j
\eeq
the \emph{graded character} of $M$.

\subsection{Gaudin Hamiltonians and Bethe ansatz}\label{sec BA}
In this section, we recall the Gaudin Hamiltonians and the corresponding Bethe ansatz from \cite{MVY15}.

Let $\bLa=(\bla^{(1)},\dots,\bla^{(k)})$ be a sequence of polynomial $\g$-weights and $\bm b=(b_1,\dots,b_k)$ a sequence of distinct complex numbers, where $\bla^{(s)}=(\alpha_s,\beta_s)$. Set $n=|\bLa|=\sum_{s=1}^k(\alpha_s+\beta_s)$ and $L_{\bLa}=\bigotimes_{s=1}^k L_{\bla^{(s)}}$. The {\it quadratic Gaudin Hamiltonians} are the linear maps $\mc H_r \in \End(L_{\bLa})$ given by
\beq\label{eq:Gaudin-H}
\mc H_r:=\sum_{s=1}^k \frac{e_{11}^{(r)}e_{11}^{(s)}-e_{12}^{(r)}e_{21}^{(s)}+e_{21}^{(r)}e_{12}^{(s)}-e_{22}^{(r)}e_{22}^{(s)}}{b_r-b_s},\quad 1\lle r\lle k.
\eeq
where $e_{ab}^{(r)}=1^{\otimes(r-1)}\otimes e_{ab}\otimes 1^{\otimes(k-r)}$.
\begin{lem}[{\cite[Proposition 3.1]{MVY15}}]
The Gaudin Hamiltonians $\mc H_r$
\begin{enumerate}
    \item are mutually commuting: $[\mc H_r,\mc H_s]=0$ for all $r,s$;
    \item commute with the action of $\g$: $[\mc H_r,X]=0$ for all $r$ and $X\in \g$.
\end{enumerate}
\end{lem}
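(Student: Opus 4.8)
The plan is to recognize the numerator of $\mc H_r$ in \eqref{eq:Gaudin-H} as the $\g$-invariant quadratic Casimir tensor and then run the classical Gaudin commutativity argument, taking care of the Koszul signs throughout. Concretely, set
\[
\Omega=\sum_{i,j=1}^{2}(-1)^{|j|}\,e_{ij}\otimes e_{ji}\ \in\ \g\otimes\g,
\]
and for distinct indices $r,s$ let $\Omega_{rs}\in\End(L_{\bLa})$ denote the operator placing the two tensor legs of $\Omega$ into the $r$-th and $s$-th factors (leaving the others untouched). Expanding $\Omega_{rs}$ reproduces exactly the numerator of \eqref{eq:Gaudin-H}, so that $\mc H_r=\sum_{s\ne r}\Omega_{rs}/(b_r-b_s)$, and the whole statement is reduced to structural properties of $\Omega$.

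First I would record two facts about $\Omega$. It is \emph{super-symmetric}: the graded flip $\tau(a\otimes b)=(-1)^{|a||b|}b\otimes a$ fixes $\Omega$ (using $(|i|+|j|)^2\equiv|i|+|j|\pmod 2$), whence $\Omega_{rs}=\Omega_{sr}$ as operators. It is \emph{$\g$-invariant}: $[\Delta(X),\Omega]=0$ for all $X\in\g$, with $\Delta(X)=X\otimes 1+1\otimes X$; this is precisely the invariance of the supertrace form whose Casimir is $\Omega$. Part (2) then follows at once: $\g$ acts diagonally by $X\mapsto\sum_t X^{(t)}$, and $\Omega_{rs}$ commutes with $X^{(t)}$ for $t\notin\{r,s\}$ since those legs are disjoint, so invariance gives $[\Omega_{rs},\sum_t X^{(t)}]=[\Omega_{rs},X^{(r)}+X^{(s)}]=0$; summing against the scalars $1/(b_r-b_s)$ yields $[\mc H_r,X]=0$.

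Part (1) is the real content. I would first upgrade invariance to the three-factor identity $[\Omega_{rs},\Omega_{rt}+\Omega_{st}]=0$ for pairwise distinct $r,s,t$: writing $\Omega=\sum_a x_a\otimes x^a$ in super-dual bases, the leg sitting in factor $t$ commutes (up to sign) with factors $r,s$, so the bracket collapses to $\sum_a[\Omega_{rs},\Delta(x_a)_{rs}]\,x^{a,(t)}=0$ by invariance. Then I would expand
\[
[\mc H_r,\mc H_{r'}]=\sum_{s\ne r}\sum_{t\ne r'}\frac{[\Omega_{rs},\Omega_{r't}]}{(b_r-b_s)(b_{r'}-b_t)},
\]
in which only terms whose index pairs $\{r,s\}$ and $\{r',t\}$ overlap survive. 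Using super-symmetry, the three-factor identity, and the partial-fraction relation $\tfrac{1}{(b_i-b_j)(b_i-b_k)}+\tfrac{1}{(b_j-b_k)(b_j-b_i)}+\tfrac{1}{(b_k-b_i)(b_k-b_j)}=0$, the surviving contributions cancel in triples of distinct indices, giving $[\mc H_r,\mc H_{r'}]=0$.

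The main obstacle, I expect, is the bookkeeping of the Koszul signs rather than any conceptual difficulty: one must verify that $\Omega$ is genuinely super-symmetric and invariant with the signs shown, and then check that the graded commutators $[\Omega_{rs},\Omega_{r't}]$ recombine with the correct signs so that the rational cancellation really goes through. Once the sign conventions are fixed consistently, both parts reduce to the invariance and symmetry of $\Omega$ together with an elementary identity for rational functions.
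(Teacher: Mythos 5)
Your argument is correct: the paper itself gives no proof of this lemma, quoting it directly from \cite[Proposition 3.1]{MVY15}, and your reduction to the super-symmetry and $\g$-invariance of the Casimir tensor $\Omega=\sum_{i,j}(-1)^{|j|}e_{ij}\otimes e_{ji}$ (whose $(r,s)$-insertion does reproduce the numerator of \eqref{eq:Gaudin-H}), followed by the three-factor identity and the partial-fraction cancellation, is exactly the standard proof used in that reference. The Koszul signs work out as you anticipate, since each summand $e_{ij}\otimes e_{ji}$ has even total parity, so $\Omega_{rs}$ genuinely commutes with operators supported on disjoint tensor legs.
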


Instead of working on Gaudin Hamiltonians $\mc H_s$, we work on the generating function of Gaudin Hamiltonians,
\beq\label{eq:H-transfer}
\mathscr H(x):=\sum_{r=2}^\infty \mathscr H_rx^{-r}=\frac{1}{2}\sum_{a,b=1}^2 e_{ab}(x)e_{ba}(x)(-1)^{|b|}.
\eeq
The operator $\mathscr H(x)$ acts on the tensor product of the evaluation $\g[t]$-modules 
$$
L_{\bLa}(\bm b):=\bigotimes_{s=1}^k L_{\bla^{(s)}}(b_s).
$$ Note that $L_{\bLa}(\bm b)$ and $L_{\bLa}$ are isomorphic as $\g$-modules via the identity map, then we have
\beq\label{eq:H-gen}
\mathscr H(x)=\frac{1}{2}\sum_{s=1}^k\frac{\alpha_s(\alpha_s-1)-\beta_s(\beta_s+1)}{(x-b_s)^2}\mathrm{Id}+\sum_{s=1}^k\frac{1}{x-b_s}\mc H_s,
\eeq
as operators in $\End(L_{\bLa})=\End(L_{\bLa}(\bm b))$. We call $\mc H(x)$ the {\it Gaudin transfer matrix}.

We are interested in finding the eigenvalues and eigenvectors of the Gaudin transfer matrix in $L_{\bLa}(\bm b)$. To be more precise, we call \beq\label{eq series}
\xi(x)= \sum_{r=2}^\infty \xi_rx^{-r},\qquad \xi_r\in \C,
\eeq 
an \emph{eigenvalue} of $\mathscr H(x)$ if there exists a non-zero vector $v\in L_{\bLa}(\bm b)$ such that $\mathscr H_rv=\xi_r v$ for all $r\in\Z_{>1}$. If $\xi(x)$ is a rational function, we consider it as a power series in $x^{-1}$ as \eqref{eq series}. The vector $v$ is called an \textit{eigenvector} of $\mathscr H(x)$ corresponding to eigenvalue $\xi(x)$. We also define the \emph{eigenspace of $\mathscr H(x)$ in $L_{\bLa}(\bm b)$ corresponding to eigenvalue $\xi(x)$} as $\bigcap_{r=2}^\infty \ker(\mathscr H_r|_{L_{\bLa}(\bm b)}-\xi_r)$.

\medskip 

It is sufficient to consider $L_\bLa$ with $\beta_s=0$ for all $s$. Indeed, if $L_{\bLa}(\bm b)$ is an arbitrary tensor product and 
\[
\xi(x)=\sum_{s=1}^k\frac{\beta_s}{x-b_s},
\]
then 
$$
L_{\bLa}(\bm b)\otimes \C_{\xi}\cong L_{\widetilde\bLa}(\bm b), \quad \tilde{\bla}^{(s)}=(\alpha_s+\beta_s,0).
$$

Identify $L_{\bLa}(\bm b)\otimes \C_{\xi}$ with $L_{\bLa}(\bm b)$ as vector spaces. 
Then $\mathscr H(x)$ acting on $L_{\bLa}(\bm b)\otimes \C_{\xi}$ coincides with $\mathscr H(x)+2\zeta(x)(e_{11}(x)+e_{22}(x))$ acting on  $L_{\bLa}(\bm b)$. Note that the coefficients of $e_{11}(x)+e_{22}(x)$ are central in $\ugt$ and hence acts on $L_{\bLa}(\bm b)$ by scalars,  therefore the problem of diagonalization of the Gaudin transfer matrix in $L_{\bLa}(\bm b)$ is reduced to diagonalization of the Gaudin transfer matrix in $L_{\widetilde\bLa}(\bm b)$.

Since $L_{\bla}$ is one-dimensional if $\bla$ is degenerate, similarly, it suffices to consider the case that all participant $\g$-weights are non-degenerate. Hence, we shall always assume throughout the paper that $\bla^{(s)}$ are non-degenerate for all $1\lle s\lle k$.

\medskip

The main method to find eigenvalues and eigenvectors of the Gaudin transfer matrix in $L_{\bLa}$ is the algebraic Bethe ansatz which we recall from \cite{MVY15}. 

Fix a non-negative integer $l$. Let $\bm t=(t_1,\dots,t_l)$ be a sequence of complex numbers. Define the polynomial $y_{\bm t}=\prod_{i=1}^l(x-t_i)$. We say that polynomial $y_{\bm t}$ \emph{represents} $\bm t$.

Set 
\beq\label{eq:zeta}
\zeta_{\bLa,\bm b}(x):=\sum_{s=1}^k\frac{\alpha_s+\beta_s}{x-b_s}.
\eeq

A sequence of complex numbers $\bm t$ is called a \emph{solution to the Bethe ansatz equation associated to} $\bLa$, $\bm b$, $l$ if
\beq\label{eq gl11 BAE}
y_{\bm t}(x) \quad\text{ divides the polynomial }\quad \varphi_{\bLa,\bm b}(x):=\zeta_{\bLa,\bm b}(x)\prod_{s=1}^k(x-b_s).
\eeq
We do not distinguish solutions which differ by a permutation of coordinates (that is represented by the same polynomial). 

Let $v_s$ be the highest weight vector of $L_{\bla^{(s)}}$, and set $|0\rangle =v_1\otimes\cdots\otimes v_k$. We call $|0\rangle$ the {\it vacuum vector}.

Define the \emph{off-shell Bethe vector} $\bB_l(\bm t)\in (L_\bLa)_{(n-l,l)}$ by
\beq\label{eq bv gl11}
\bB_l(\bm t)=e_{12}(t_1)\cdots e_{12}(t_l)\,|0\rangle.
\eeq
Since $e_{12}(x)e_{12}(u)=-e_{12}(u)e_{12}(x)$, the order of $t_i$ is not important. Moreover, the off-shell Bethe vector is zero if $t_i=t_j$ for some $1\lle i\ne j\lle l$. 

If $\bm t$ is a solution of the Bethe ansatz equation \eqref{eq gl11 BAE}, we call $\bB_l(\bm t)$ an \emph{on-shell Bethe vector}.

Let $\bm t$ be a solution of the Bethe ansatz equation associated to $\bLa$, $\bm b$, $l$.  The following statements are known, see \cite[Section VI]{MVY15}.
\begin{thm}[\cite{MVY15}]\label{thm gl11 eigenvalue in y}
If the on-shell Bethe vector ${\bB}_l(\bm t)$ is non-zero, then ${\bB}_l(\bm t)$ is an eigenvector of the Gaudin transfer matrix $\mathscr H(x)$ with the corresponding eigenvalue 
\beq\label{eq gl11 eigenvalue in y}
\mc E_{y_{\bm t},\bLa,\bm b}(x)=\frac{1}{2}\zeta_{\bLa,\bm b}'(x)-\zeta_{\bLa,\bm b}(x)\frac{y_{\bm t}'(x)}{y_{\bm t}(x)}+\sum_{r,s=1}^k\frac{\alpha_r\alpha_s-\beta_r\beta_s}{2(x-b_r)(x-b_s)}.
\eeq
where $\zeta_{\bLa,\bm b}(x)$ is given by \eqref{eq:zeta}.\qed
\end{thm}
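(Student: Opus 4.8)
The plan is to establish the eigenvalue formula by the algebraic Bethe ansatz: commute the Gaudin transfer matrix $\mathscr H(x)$ to the right through the $l$ creation operators defining $\bB_l(\bm t)$, read off the eigenvalue from the terms that rebuild $\bB_l(\bm t)$, and show that the remaining ``unwanted'' terms vanish precisely under \eqref{eq gl11 BAE}. First I would record the current commutation relations obtained by multiplying the bracket of $\g[t]$ by $x^{-r-1}u^{-s-1}$ and summing over $r,s$, namely
\[
[e_{ij}(x),e_{kl}(u)]=\frac{\delta_{jk}\big(e_{il}(x)-e_{il}(u)\big)-(-1)^{(|i|+|j|)(|k|+|l|)}\delta_{il}\big(e_{kj}(x)-e_{kj}(u)\big)}{x-u}
\]
for $x\neq u$. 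The consequences I need are: the odd creation currents anticommute and square to zero, by \eqref{eq:anti}, so the order of the $t_i$ is irrelevant and $\bB_l(\bm t)=0$ if two of the $t_i$ coincide; the diagonal currents mutually commute and satisfy $[e_{11}(x)-e_{22}(x),e_{21}(u)]=-2\,(e_{21}(x)-e_{21}(u))/(x-u)$; and the mixed relation
\[
\{e_{12}(x),e_{21}(u)\}=\frac{S(x)-S(u)}{x-u},\qquad S(x):=e_{11}(x)+e_{22}(x),
\]
where $S(x)$ has central coefficients in $\ugt$. On the vacuum $e_{11}(x)|0\rangle=\alpha(x)|0\rangle$ and $e_{22}(x)|0\rangle=\beta(x)|0\rangle$ with $\alpha(x)=\sum_s\alpha_s/(x-b_s)$ and $\beta(x)=\sum_s\beta_s/(x-b_s)$, while $e_{12}(x)|0\rangle=0$ since $|0\rangle$ is singular; in particular $S(x)$ acts on $\bB_l(\bm t)$ by the scalar $\zeta_{\bLa,\bm b}(x)=\alpha(x)+\beta(x)$.

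Next I would rewrite $\mathscr H(x)$ in a form adapted to these relations. Using that the diagonal currents commute, \eqref{eq:H-transfer} becomes
\[
\mathscr H(x)=\tfrac12\big(e_{11}(x)-e_{22}(x)\big)S(x)+\tfrac12\big(e_{21}(x)e_{12}(x)-e_{12}(x)e_{21}(x)\big).
\]
I would then dispose of the base case $l=0$ first, as it fixes the normal-ordering contribution. Here the only delicate point is the coincident-point product $e_{12}(x)e_{21}(x)$: it must be evaluated from its modes rather than by setting $u=x$ in the mixed relation, and a short computation gives $e_{12}(x)e_{21}(x)|0\rangle=-\zeta_{\bLa,\bm b}'(x)|0\rangle$, which differs in sign from the naive coincident limit. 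Combined with the diagonal part, this yields $\mathscr H(x)|0\rangle=\mc E_{y_{\bm t},\bLa,\bm b}(x)|0\rangle$ with $y_{\bm t}=1$; in particular it produces the term $\tfrac12\zeta_{\bLa,\bm b}'(x)$ and the double sum $\tfrac12\sum_{r,s}(\alpha_r\alpha_s-\beta_r\beta_s)/\big((x-b_r)(x-b_s)\big)$ in \eqref{eq gl11 eigenvalue in y}.

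For general $l$ I would push each factor of $\mathscr H(x)$ through $\bB_l(\bm t)$ one creation operator at a time. Because $S(x)$ is central it passes through freely, contributing $\tfrac12\zeta_{\bLa,\bm b}(x)\big(e_{11}(x)-e_{22}(x)\big)\bB_l(\bm t)$; commuting $e_{11}(x)-e_{22}(x)$ across the creation operators then rebuilds $\bB_l(\bm t)$ with a scalar involving $\alpha(x)-\beta(x)$ and $\sum_i 1/(x-t_i)$, together with unwanted terms in which some $e_{21}(t_i)$ is replaced by $e_{21}(x)$. The off-diagonal part is controlled by the mixed relation: each time $e_{12}(x)$ crosses a creation operator it produces the central factor $(S(x)-S(t_i))/(x-t_i)$, which acts by the scalars $\zeta_{\bLa,\bm b}(x)$ and $\zeta_{\bLa,\bm b}(t_i)$. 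Collecting all contributions, the rebuilding terms assemble into $\mc E_{y_{\bm t},\bLa,\bm b}(x)$, the term $-\zeta_{\bLa,\bm b}(x)\,y_{\bm t}'(x)/y_{\bm t}(x)$ arising from the identity $\sum_i 1/(x-t_i)=y_{\bm t}'(x)/y_{\bm t}(x)$, while each unwanted term carries an explicit factor $\zeta_{\bLa,\bm b}(t_i)$ in its residue at $x=t_i$. Since the unwanted terms, being proportional to $e_{21}(x)$ with a free variable $x$, are linearly independent of $\bB_l(\bm t)$, they must vanish on their own; this happens exactly when $\zeta_{\bLa,\bm b}(t_i)=0$ for every $i$, that is, when $y_{\bm t}$ divides $\varphi_{\bLa,\bm b}=\zeta_{\bLa,\bm b}\prod_s(x-b_s)$, which is \eqref{eq gl11 BAE}.

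I expect the main obstacle to be the bookkeeping in this last step, on two fronts. First, the super (Koszul) signs generated when the odd currents $e_{12}(x),e_{21}(x)$ are moved past the odd creation operators must be tracked consistently, and the nilpotency and anticommutativity of \eqref{eq:anti} must be invoked repeatedly to discard vanishing terms. Second, and more essentially, one must show that the unwanted terms genuinely decouple over the Bethe roots, so that the residue at each $x=t_i$ is proportional to the single scalar $\zeta_{\bLa,\bm b}(t_i)$ with no cross-terms coupling distinct $t_i$. This decoupling is what makes the Bethe ansatz equation take the simple divisibility form \eqref{eq gl11 BAE} rather than a coupled nonlinear system; it rests on the centrality of $S(x)$ together with the mutual anticommutativity of the creation operators, and isolating $S(x)$ in the rewriting of $\mathscr H(x)$ above is precisely what renders it transparent.
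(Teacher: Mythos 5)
The paper does not prove this theorem: it is quoted from \cite{MVY15} without proof, so there is no internal argument to compare against. Your proposal is the standard algebraic Bethe ansatz computation, which is precisely the route of \cite[Section VI]{MVY15}, and the strategy is sound: isolating the central current $S(x)=e_{11}(x)+e_{22}(x)$, treating the coincident-point product by modes to get $e_{12}(x)e_{21}(x)|0\rangle=-\zeta_{\bLa,\bm b}'(x)|0\rangle$, and reducing the vanishing of the unwanted terms to $\zeta_{\bLa,\bm b}(t_i)=0$, which (since $\varphi_{\bLa,\bm b}(b_s)\ne0$ for non-degenerate weights and the $t_i$ are distinct) is equivalent to the divisibility condition \eqref{eq gl11 BAE}. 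Your $l=0$ computation and the identification of all three terms of \eqref{eq gl11 eigenvalue in y} check out, including the cancellation of the cross terms $\alpha_r\beta_s$ in the double sum. One concrete caution: with the paper's convention $e_{ij}(x)=\sum_{r\gge 0}e_{ij}[r]x^{-r-1}$, one has $\sum_{r+s=m}x^{-r-1}u^{-s-1}=(u^{-m-1}-x^{-m-1})/(x-u)$, so all three displayed commutation relations carry the wrong overall sign --- the correct mixed relation is $\{e_{12}(x),e_{21}(u)\}=\bigl(S(u)-S(x)\bigr)/(x-u)$, and $[e_{11}(x)-e_{22}(x),e_{21}(u)]=+2\bigl(e_{21}(x)-e_{21}(u)\bigr)/(x-u)$. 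Used literally these would flip the sign of the $-\zeta_{\bLa,\bm b}(x)\,y_{\bm t}'(x)/y_{\bm t}(x)$ term, so the signs must be fixed before the final bookkeeping; your stated coincident-point value and final eigenvalue are nonetheless the correct ones, so this is a repairable slip rather than a structural gap.
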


Consider another Gaudin transfer matrix, see Example \ref{eg:transfer},
\beq\label{eq:T-transfer}
\mc T(x)=\frac{1}{2}\big(\dot{e}_{11}(x)+\dot{e}_{22}(x)\big)+\frac{1}{2}\big(e_{11}(x)+e_{22}(x)\big)^2-\mathscr H(x),
\eeq
where $\dot{e}_{ii}(x)=\pa_x(e_{ii}(x))$, $i=1,2$. Then the eigenvalue of $\mc T(x)$ acting on the on-shell Bethe vector $\mathbb B_l(\bm t)$ is 
\beq\label{eq:T-eigenvalue}
\begin{split}
\mathscr E_{y_{\bm t},\bLa,\bm b}(x)=&\ \zeta_{\bLa,\bm b}(x)\frac{y_{\bm t}'(x)}{y_{\bm t}(x)}+\sum_{r,s=1}^k\frac{\alpha_r\beta_s+\alpha_s\beta_r+2\beta_r\beta_s}{2(x-b_r)(x-b_s)}\\ =&\ \zeta_{\bLa,\bm b}(x)\Big(\frac{y_{\bm t}'(x)}{y_{\bm t}(x)}+\sum_{s=1}^k\frac{\beta_s}{x-b_s}\Big).    
\end{split}
\eeq

\begin{prop}[\cite{MVY15}]\label{prop bv sing}
The on-shell Bethe vector $ {\bB}_l(\bm t)$ is singular.\qed
\end{prop}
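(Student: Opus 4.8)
The statement to prove is that $e_{12}\bB_l(\bm t)=0$, where $e_{12}=e_{12}[0]$ is the generator of $\g\subset\g[t]$ acting on $L_{\bLa}(\bm b)$. The plan is to carry this odd operator from the left through the string of odd creation operators $e_{21}(t_i)$ defining the Bethe vector $\bB_l(\bm t)$, see \eqref{eq bv gl11}, all the way onto the vacuum $|0\rangle$; there it acts by zero, since $e_{12}[0]$ acts as $\sum_s e_{12}^{(s)}$ and each $v_s$ is singular. The only thing preventing a free (anticommuting) passage is the supercommutator of $e_{12}[0]$ with a single creation operator, so I compute that first.

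The key input is the supercommutator
\[
[e_{12}[0],e_{21}(x)]=e_{11}(x)+e_{22}(x),
\]
which is immediate from the defining relations of $\g[t]$. The decisive point is that the coefficients of $e_{11}(x)+e_{22}(x)$ are central in $\ugt$, so this series commutes with every creation operator and hence may be slid directly onto the vacuum. There it acts by a scalar series: since $e_{11}(x)$ and $e_{22}(x)$ act on the $s$-th tensor factor of $L_{\bLa}(\bm b)$ by $\alpha_s/(x-b_s)$ and $\beta_s/(x-b_s)$ respectively, one finds $(e_{11}(x)+e_{22}(x))|0\rangle=\zeta_{\bLa,\bm b}(x)|0\rangle$ with $\zeta_{\bLa,\bm b}$ as in \eqref{eq:zeta}.

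Moving $e_{12}[0]$ to the right one factor at a time, each transposition past an $e_{21}(t_j)$ costs a sign (all operators being odd), while each collision replaces the pair by the central factor above, which I then push onto $|0\rangle$ as the scalar $\zeta_{\bLa,\bm b}(t_i)$; the leftover term is discarded using $e_{12}[0]|0\rangle=0$. This yields a telescoping identity of the form
\[
e_{12}[0]\,\bB_l(\bm t)=\sum_{i=1}^l (-1)^{i-1}\,\zeta_{\bLa,\bm b}(t_i)\;\bB_{l-1}(\bm t\setminus\{t_i\}),
\]
where $\bm t\setminus\{t_i\}$ is the sequence with $t_i$ deleted.

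It remains to invoke the Bethe ansatz equation \eqref{eq gl11 BAE}: on shell, $y_{\bm t}$ divides $\varphi_{\bLa,\bm b}=\zeta_{\bLa,\bm b}\prod_{s=1}^k(x-b_s)$, so each $t_i$ is a root of $\varphi_{\bLa,\bm b}$. Because $\varphi_{\bLa,\bm b}(b_s)=(\alpha_s+\beta_s)\prod_{r\ne s}(b_s-b_r)\ne 0$ for non-degenerate weights and pairwise distinct $b_s$, no $t_i$ can equal any $b_s$, and therefore $\zeta_{\bLa,\bm b}(t_i)=0$ for every $i$. Each summand vanishes and $e_{12}[0]\bB_l(\bm t)=0$, as desired. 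I expect the main obstacle to be the bookkeeping in the third step, and in particular the observation that the annihilation--creation supercommutator lands in the center of $\ugt$: this is precisely what converts the passage into the scalar factors $\zeta_{\bLa,\bm b}(t_i)$ and lets the Bethe equation enter.
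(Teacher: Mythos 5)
Your proof is correct and is exactly the standard argument: the paper itself gives no proof of this proposition, importing it from \cite{MVY15}, and the computation there is the same anticommutator telescoping you carry out, with the two key observations being that $[e_{12}[0],e_{21}(x)]=e_{11}(x)+e_{22}(x)$ is central (so it slides onto the vacuum as the scalar $\zeta_{\bLa,\bm b}(t_i)$) and that the Bethe ansatz equation forces $\zeta_{\bLa,\bm b}(t_i)=0$ because no $t_i$ can coincide with a $b_s$ when the weights are non-degenerate. One remark: you silently read the Bethe vector as $e_{21}(t_1)\cdots e_{21}(t_l)|0\rangle$ rather than the $e_{12}(t_i)$ string displayed in \eqref{eq bv gl11}; this is the right reading (it is the only one consistent with the stated weight $(n-l,l)$ and with Lemma \ref{lem explicit basis}), but it is worth flagging explicitly that \eqref{eq bv gl11} contains a typo rather than adopting the correction without comment.
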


It is important to know if the on-shell Bethe vectors are non-zero. 

\begin{prop}[\cite{MVY15}]\label{prop bv nonzero}
Suppose the polynomial $\varphi_{\bLa,\bm b}(x)$ only has simple roots, then the on-shell Bethe vector $\mathbb B_l(\bm t)$ is nonzero.\qed
\end{prop}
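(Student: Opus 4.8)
The plan is to expand the on-shell Bethe vector $\bB_l(\bm t)$ in an explicit weight basis and to recognize its coefficients as Cauchy determinants, which are nonzero precisely because the simple-roots hypothesis forces $t_1,\dots,t_l$ to be distinct and disjoint from $b_1,\dots,b_k$.

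First I would fix a convenient basis of the relevant weight space. Recall that we may assume $\beta_s = 0$, so each factor $L_{\bla^{(s)}}$ is two-dimensional, spanned by the cyclic vector $v_s$ of weight $(\alpha_s,0)$ and a vector $w_s$ spanning the complementary weight space, with an odd generator $X$ satisfying $Xv_s = w_s$, $Xw_s = 0$, and $X^2 = 0$. Then $(L_\bLa)_{(n-l,l)}$ has a basis $\{\,|S\rangle : S\subseteq\{1,\dots,k\},\ |S|=l\,\}$, where $|S\rangle$ is obtained from $|0\rangle$ by replacing $v_s$ with $w_s$ for each $s\in S$; since $l<k$, such subsets exist.

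Next I would expand $\bB_l(\bm t)=e_{12}(t_1)\cdots e_{12}(t_l)\,|0\rangle$ through the coproduct, so that each $e_{12}(t_i)$ acts as $\sum_{s=1}^k (t_i-b_s)^{-1}X^{(s)}$. Because $X$ is odd and squares to zero on every factor (cf. \eqref{eq:anti}), only terms exciting $l$ distinct factors survive, and a fixed basis vector $|S\rangle$ with $S=\{s_1<\cdots<s_l\}$ receives one contribution for each bijection $\{1,\dots,l\}\to S$. Tracking the Koszul signs produced whenever an odd generator is moved past an already-excited (hence odd) factor, I expect the coefficient of $|S\rangle$ to equal, up to an overall sign,
\[
\det\!\left(\frac{1}{t_i-b_{s_j}}\right)_{1\le i,j\le l}.
\]
I would establish this by induction on $l$, the inductive step being a cofactor expansion along the row created by the outermost operator $e_{12}(t_1)$. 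This sign bookkeeping is the main obstacle: although routine, it must be carried out precisely to yield a genuine determinant rather than a permanent, since it is exactly the antisymmetry that produces nonvanishing below.

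Finally I would invoke the hypothesis. As $\bm t$ solves \eqref{eq gl11 BAE}, the polynomial $y_{\bm t}$ divides $\varphi_{\bLa,\bm b}$; since $\varphi_{\bLa,\bm b}$ has only simple roots, so does $y_{\bm t}$, whence $t_1,\dots,t_l$ are pairwise distinct. Moreover $\varphi_{\bLa,\bm b}(b_s)=(\alpha_s+\beta_s)\prod_{r\neq s}(b_s-b_r)\neq 0$ by non-degeneracy of $\bla^{(s)}$ together with the distinctness of the $b_r$, so no $t_i$ equals any $b_s$. Fixing any $l$-subset $S$ and using the Cauchy evaluation
\[
\det\!\left(\frac{1}{t_i-b_{s_j}}\right)=\frac{\prod_{i<i'}(t_{i'}-t_i)\,\prod_{j<j'}(b_{s_j}-b_{s_{j'}})}{\prod_{i,j}(t_i-b_{s_j})},
\]
every factor in the numerator and denominator is nonzero, so the coefficient of $|S\rangle$ in $\bB_l(\bm t)$ is nonzero; hence $\bB_l(\bm t)\neq 0$.
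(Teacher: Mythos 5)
The paper does not prove this proposition; it is quoted from [MVY15] with a \(\qed\). Your argument is correct and is essentially the one in the cited reference: expand \(\bB_l(\bm t)=\prod_i\bigl(\sum_s (t_i-b_s)^{-1}X^{(s)}\bigr)|0\rangle\), use that the odd operators \(X^{(s)}\) anticommute so that the coefficient of each basis vector \(|S\rangle\) is (up to sign) the Cauchy determinant \(\det\bigl((t_i-b_{s_j})^{-1}\bigr)\), and observe that the simple-roots hypothesis together with non-degeneracy of the \(\bla^{(s)}\) forces the \(t_i\) to be pairwise distinct and disjoint from the \(b_s\), so every such coefficient is nonzero. Two small points worth making explicit: the operator in \eqref{eq bv gl11} must be read as the lowering operator (consistent with the target weight \((n-l,l)\) and with the basis in Lemma \ref{lem free sing graded}), which is how you treat it; and the reduction to \(\beta_s=0\) is harmless but not even needed, since \(e_{12}e_{21}v_s=(\alpha_s+\beta_s)v_s\neq 0\) guarantees \(e_{21}v_s\neq 0\) in each non-degenerate two-dimensional factor, so the same determinant appears in general.
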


The conjecture of completeness of Bethe ansatz for Gaudin models associated with $\g$ was reformulated in \cite[Conjecture 8.2]{HMVY} as follows.
\begin{conj}
Suppose all weights $\bla^{(s)}$, $1\lle s\lle k$, are polynomial $\g$-weights. Then the Gaudin transfer matrix $\mathscr H(x)$ has a simple spectrum in $(L_{\bLa}(\bm b))^\sing$. There exists a bijectively correspondence between the monic divisors $y$ of the polynomial $\varphi_{\bLa,\bm b}$ and the eigenvectors $v$ of the
Gaudin transfer matrices (up to multiplication by a non-zero
constant). Moreover, this bijection is such that $\mathscr H(x)v = \mc E_{y,\bLa,\bm b}(x)v$, where $\mc E_{y,\bLa,\bm b}(x)$
is given by \eqref{eq gl11 eigenvalue in y}.\qed
\end{conj}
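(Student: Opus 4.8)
The plan is to deduce the conjecture not vector-by-vector but from an explicit description of the commutative Bethe algebra $\mathcal B$ — generated by the coefficients of $\mathscr{H}(x)$ together with the center of $\ugt$ — acting on the finite-dimensional space $(L_{\bLa}(\bm b))^\sing_{(n-l,l)}$. Two reductions come first. By the discussion following \eqref{eq series} it suffices to treat $\beta_s=0$ for all $s$, so that $\varphi_{\bLa,\bm b}(x)=\sum_{s=1}^k\alpha_s\prod_{t\neq s}(x-b_t)$ is, after dividing by $n$, monic of degree $k-1$. In the generic situation, when $\varphi_{\bLa,\bm b}$ has $k-1$ distinct roots, the conjecture already follows from \cite{MVY15}: by Proposition \ref{prop bv nonzero} each of the $\binom{k-1}{l}$ solutions of \eqref{eq gl11 BAE} gives a nonzero on-shell Bethe vector, these are singular by Proposition \ref{prop bv sing}, the eigenvalue assignment $y\mapsto\mc E_{y,\bLa,\bm b}$ of \eqref{eq gl11 eigenvalue in y} is injective on monic divisors, and a dimension count makes them an eigenbasis. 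Hence the entire difficulty is concentrated in the degenerate case, where $\varphi_{\bLa,\bm b}$ has multiple roots, Bethe vectors may vanish or collide, and $\mathscr{H}(x)$ develops Jordan blocks.

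To treat the degenerate case uniformly I would prove that $\mathcal B$ acts cyclically on $(L_{\bLa}(\bm b))^\sing_{(n-l,l)}$ and identify its image in $\End\big((L_{\bLa}(\bm b))^\sing_{(n-l,l)}\big)$ with the Artinian algebra
\[
R_l=\C[w_1,\dots,w_{k-1}]^{\fkS_l\times\fkS_{k-l-1}}\big/\big\langle\, n{\textstyle\prod_{i=1}^{k-1}}(x-w_i)-\varphi_{\bLa,\bm b}(x)\,\big\rangle .
\]
The motivating picture is that the $w_i$ are the roots of $\varphi_{\bLa,\bm b}$, the factor $\fkS_l\times\fkS_{k-l-1}$ records the splitting of a degree-$l$ monic divisor $y$ from its complementary factor, and the relation — an equality of degree-$(k-1)$ polynomials in $x$, equivalently $k-1$ equations among elementary symmetric functions of the $w_i$ — says precisely that these symmetric data reproduce $\varphi_{\bLa,\bm b}$. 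The map $R_l\to\mathcal B$ sends each invariant $w$-polynomial to the operator whose eigenvalue on a Bethe vector is read off from \eqref{eq gl11 eigenvalue in y}. I expect this to be executed through the auxiliary space $\mc V^{\fkS}$ and the Weyl modules of Section \ref{sec space VS}: their graded characters \eqref{eq grade ch} supply the dimension $\dim(L_{\bLa}(\bm b))^\sing_{(n-l,l)}=\binom{k-1}{l}=\dim R_l$, furnish a cyclic vector, and give a surjection $R_l\twoheadrightarrow\mathcal B$, which is then an isomorphism by equality of dimensions.

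Granting this isomorphism, the conjecture follows from commutative algebra. The ring $R_l$ is a complete intersection: the invariant ring $\C[w_1,\dots,w_{k-1}]^{\fkS_l\times\fkS_{k-l-1}}$ is itself polynomial in $k-1$ variables (the elementary symmetric functions of the two blocks of $w$'s), and the defining relation contributes the $k-1$ coefficients of $x^{k-2},\dots,x^0$, a regular sequence; hence $R_l$ is Artinian Gorenstein, that is Frobenius, which establishes perfect integrability in the sense of \cite{Lu20}. A commutative Frobenius algebra acting cyclically splits as a product of local Frobenius algebras indexed by its maximal ideals, each with one-dimensional socle. The maximal ideals of $R_l$ correspond to the ways of choosing a size-$l$ sub-multiset of the roots of $\varphi_{\bLa,\bm b}$, i.e. to monic degree-$l$ divisors $y$. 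Thus each $y$ yields exactly one eigenvector $v_y$ up to scalar, proving simple spectrum, while the generalized eigenspace attached to $y$ has dimension equal to the length of the local factor, namely $\prod_{a\in\C}\binom{\mathrm{Mult}_a(\varphi_{\bLa,\bm b})}{\mathrm{Mult}_a(y)}$. Finally $\mathscr{H}(x)v_y=\mc E_{y,\bLa,\bm b}(x)v_y$, since this holds at all semisimple parameters and both sides are algebraic in $(\bLa,\bm b)$.

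The main obstacle is the construction of the isomorphism $R_l\cong\mathcal B$ in the degenerate case, precisely where the naive eigenbasis of Bethe vectors is unavailable. The robust route is a flatness and specialization argument: fix $\bLa$, work over the parameter space of $\bm b$ (or with the universal Weyl-module version, where the roots of $\varphi_{\bLa,\bm b}$ are treated as independent), establish the isomorphism at generic $\bm b$ by the counting above, and then specialize. The crux is flatness — showing that neither $\dim R_l$ nor the dimension of the image of $\mathcal B$ jumps at special parameters — which is exactly what the graded-character computations for $\mc V^{\fkS}$ and the Weyl modules are designed to control; managing this degeneration, rather than the final Frobenius bookkeeping, is where the real work lies.
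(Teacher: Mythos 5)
Your proposal follows essentially the same route as the paper: reduce to $\beta_s=0$, realize the tensor product as a quotient of the cyclic $\g[t]$-module $\mc V^{\fkS}$ (via Weyl modules), use the generic-parameter Bethe ansatz of \cite{MVY15} together with the freeness of $(\mc V^{\fkS})^\sing_{(n-l,l)}$ over $\C[z_1,\dots,z_n]^{\fkS}$ and a dimension count to identify the Bethe algebra with $\C[w_1,\dots,w_{k-1}]^{\fkS_l\times\fkS_{k-l-1}}/\langle\sigma_i(\bm w)-\varepsilon_i\rangle$, and then read off simplicity of the spectrum, the bijection with monic divisors, and the generalized eigenspace dimensions from the Frobenius (complete intersection) structure of this algebra. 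The only cosmetic difference is that the paper controls the degeneration over the parameters $\bm a=\sigma(\bm z)$ rather than directly over $\bm b$, but this is the same flatness mechanism you describe.
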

By simple spectrum we mean that if $v_1$, $v_2$ are eigenvectors of $\mathscr H(x)$ and
$v_1\ne cv_2$, $c\in\C^\times$, then the eigenvalues of $\mathscr H(x)$ on $v_1$ and $v_2$ are different.

The conjecture follows from Theorem \ref{thm tensor irr} proved in Section \ref{sec third}.

The conjecture is clear for the case when $\varphi_{\bLa,\bm b}$ only has simple roots which we also recall from \cite{MVY15}. Note that $\dim L_{\bLa}(\bm b)=2^k$ and $\dim (L_{\bLa}(\bm b))^\sing=2^{k-1}$. If the polynomial $\varphi_{\bLa,\bm b}$ has no multiple roots, then $\varphi_{\bLa,\bm b}$ has the desired number of distinct monic divisors. Therefore, we have desired number of on-shell Bethe vectors which are also nonzero by Proposition \ref{prop bv nonzero}. By Theorem \ref{thm gl11 eigenvalue in y}, it implies that we do have an eigenbasis of the Gaudin transfer matrix consisting of on-shell Bethe vectors in $(L_{\bLa}(\bm b))^\sing$, see Proposition \ref{prop bv sing}, with different eigenvalues. Thus the algebraic Bethe ansatz works well for this situation. 

\begin{thm}[\cite{MVY15}]\label{thm complete}
Suppose all weights $\bla^{(s)}$, $1\lle s\lle k$, are polynomial $\g$-weights. If the polynomial $\varphi_{\bLa,\bm b}$ has no multiple roots, then the Gaudin transfer matrix $\mathscr H(x)$ is diagonalizable and the Bethe ansatz is complete. In particular, for any given $\bLa$ and generic $\bm b$, the Gaudin transfer matrix $\mathscr H(x)$ is diagonalizable and the Bethe ansatz is complete. \qed
\end{thm}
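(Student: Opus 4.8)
The plan is to reduce the statement to a dimension count on the space of singular vectors, matching $\dim(L_{\bLa}(\bm b))^{\sing}=2^{k-1}$ against the number of monic divisors of $\varphi_{\bLa,\bm b}$, and then to transport diagonalizability back to the full module. First I would record that, since every $\bla^{(s)}$ is a non-degenerate polynomial weight, one has $\alpha_s\gge 1$, so in $\varphi_{\bLa,\bm b}(x)=\sum_{s=1}^k(\alpha_s+\beta_s)\prod_{t\ne s}(x-b_t)$ the leading coefficient $n=\sum_s(\alpha_s+\beta_s)$ is positive and $\deg\varphi_{\bLa,\bm b}=k-1$. Under the hypothesis that $\varphi_{\bLa,\bm b}$ has no multiple root it therefore has $k-1$ distinct roots, and its monic divisors are exactly the products over subsets of these roots; there are $\sum_{l=0}^{k-1}\binom{k-1}{l}=2^{k-1}$ of them, with $\binom{k-1}{l}$ of degree $l$.

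Next, to each monic divisor $y$ of degree $l$ I would associate the tuple $\bm t$ of its roots, which by \eqref{eq gl11 BAE} is a solution of the Bethe ansatz equation for weight $(n-l,l)$. By Proposition \ref{prop bv nonzero} the on-shell vector $\bB_l(\bm t)$ is non-zero (this is exactly where the simple-root hypothesis is used), by Proposition \ref{prop bv sing} it is singular, and by Theorem \ref{thm gl11 eigenvalue in y} it is an eigenvector of $\mathscr H(x)$ with eigenvalue $\mc E_{y,\bLa,\bm b}(x)$. The crucial point is that distinct divisors produce distinct eigenvalues: in \eqref{eq gl11 eigenvalue in y} the only $y$-dependent term is $-\zeta_{\bLa,\bm b}(x)\,y'(x)/y(x)$, and since $\zeta_{\bLa,\bm b}\ne 0$ as a rational function, equality of two such eigenvalues forces $y_1'/y_1=y_2'/y_2$, i.e. $y_1/y_2$ is constant, hence $y_1=y_2$ for monic divisors. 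Eigenvectors of the commuting family $\{\mathscr H_r\}$ with pairwise distinct joint eigenvalues are linearly independent, so these $2^{k-1}$ on-shell Bethe vectors form a basis of $(L_{\bLa}(\bm b))^{\sing}$. This yields simple spectrum on the singular space and completeness of the Bethe ansatz.

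To upgrade diagonalizability from the singular space to all of $L_{\bLa}(\bm b)$, I would use that a tensor product of non-degenerate polynomial modules is completely reducible into two-dimensional irreducibles: this follows by induction from the decomposition $L_{(\la_1,\la_2)}\otimes L_{(\mu_1,\mu_2)}=L_{(\la_1+\mu_1,\la_2+\mu_2)}\oplus L_{(\la_1+\mu_1-1,\la_2+\mu_2+1)}$ recorded in Section \ref{sec notation}, both summands being again non-degenerate. Each such irreducible contributes a one-dimensional singular weight space, so the $\g$-isotypic multiplicity space of weight $(n-l,l)$ is canonically identified with the corresponding weight component of $(L_{\bLa}(\bm b))^{\sing}$. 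Since $\mathscr H(x)$ is an even operator commuting with $\g$, on each isotypic component it acts as $\mathrm{id}\otimes A$ for an operator $A$ on the multiplicity space; under the identification $A$ is precisely the restriction of $\mathscr H(x)$ to the singular weight component. Being simultaneously diagonalizable there, the family $\{\mathscr H_r\}$ is simultaneously diagonalizable on the whole module.

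Finally, for the \emph{in particular} statement I would argue that the discriminant of $\varphi_{\bLa,\bm b}$ in $x$ is a non-identically-zero polynomial in $\bm b$ (for fixed $\bLa$), hence nonzero on a Zariski-dense open set; non-vanishing at one point follows from an interlacing argument, since for distinct real $b_1<\cdots<b_k$ the function $\zeta_{\bLa,\bm b}$ is strictly decreasing from $+\infty$ to $-\infty$ on each interval $(b_s,b_{s+1})$ (as $\alpha_s+\beta_s>0$), giving exactly one zero per interval and thus $k-1$ distinct, necessarily simple, roots of $\varphi_{\bLa,\bm b}$. Applying the first part to such $\bm b$ finishes the proof. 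I expect the main obstacle to be the passage from the singular space to the full module, namely the complete-reducibility input and the multiplicity-space identification; the verification that distinct divisors give distinct eigenvalues is the other point requiring care, while the remaining steps are direct counting combined with the cited results.
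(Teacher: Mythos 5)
Your proposal is correct and follows essentially the same route as the paper's own (brief) justification preceding the theorem: count the $2^{k-1}$ monic divisors of $\varphi_{\bLa,\bm b}$, produce for each a non-zero singular on-shell Bethe vector via Propositions \ref{prop bv nonzero} and \ref{prop bv sing} and Theorem \ref{thm gl11 eigenvalue in y}, and observe that the resulting eigenvalues are pairwise distinct, giving an eigenbasis of $(L_{\bLa}(\bm b))^{\sing}$. Your added verifications (distinctness of eigenvalues via logarithmic derivatives, the lift of diagonalizability to all of $L_{\bLa}(\bm b)$ through complete reducibility, and the interlacing argument for generic $\bm b$) are sound details that the paper leaves implicit or defers to \cite{MVY15}.
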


\section{Space $\mc{V}^\fkS$ and Weyl modules}\label{sec space VS}
In this section, we discuss the super-analog of $\mc{V}^S$ in \cite[Section 2.5]{MTV09}, cf. \cite[Section 3]{LM21}. 

The symmetric group $\fkS_n$ acts naturally on $\C[z_1,\dots,z_n]$ by permuting variables. Denote by $\sigma_i(\bm z)$ the $i$-th elementary symmetric polynomial in $z_1,\dots,z_n$. The algebra of symmetric polynomials $\C[z_1,\dots,z_n]^\fkS$ is freely generated by $\sigma_1(\bm z),\dots,\sigma_n(\bm z)$. 

Fix $\ell\in \{0,1,\dots,n\}$. We have a subgroup $\fkS_{\ell}\times \fkS_{n-\ell}\subset \fkS_n$.
Then $\fkS_{\ell}$ permutes the first $\ell$ variables while $\fkS_{n-\ell}$ permutes the last $n-\ell$ variables. Denote by $\C[z_1,\dots,z_n]^{\fkS_{\ell}\times \fkS_{n-\ell}}$ the subalgebra of $\C[z_1,\dots,z_n]$ consisting of $\fkS_{\ell}\times \fkS_{n-\ell}$-invariant polynomials. It is known that $\C[z_1,\dots,z_n]^{\fkS_{\ell}\times \fkS_{n-\ell}}$ is a free $\C[z_1,\dots,z_n]^\fkS$-module of rank $n\choose{\ell}$.

\subsection{Definition of $\mc{V}^\fkS$}
Let $V=(\C^{1|1})^{\otimes n}$ be the tensor power of the vector representation of $\g$. The $\g$-module $V$ has weight decomposition
\[
V=\bigoplus_{\ell=0}^{n}(V)_{(n-\ell,\ell)}.
\]
Let $\mc V$ be the space of polynomials in variables $\bm z=(z_1,z_2,\dots,z_n)$ with coefficients in $V$,
\[
\mc V=V\otimes \C[z_1,z_2,\dots,z_n].
\]The space $V$ is identified with the subspace $V\otimes 1$ of constant polynomials in $\mc V$. The space $\mc V$ has a natural grading induced from the grading on $\C[z_1,\dots,z_n]$ with $\deg(z_i)=1$. Namely, the degree of an element $v\otimes p$ in $\mc V$ is given by the degree of the polynomial $p$, $\deg(v\otimes p)=\deg\,p$. Clearly, the space $\End(\mc V)$ has a gradation structure induced from that on $\mc V$.

Let $P^{(i,j)}$ be the graded flip operator which acts on the $i$-th and $j$-th factors of $V$. Let $s_1$, $s_2$, $\dots$, $s_{n-1}$ be the simple permutations of the symmetric group $\fkS_n$. Define the $\fkS_n$-action on $\mc V$ by the rule:
\[
s_i:\bm f(z_1,\dots,z_n)\mapsto P^{(i,i+1)}\bm f(z_1,\dots,z_{i+1},z_i,\dots,z_n),
\]
for $\bm f(z_1,\dots,z_n)\in\mc V$. Note that the $\fkS_n$-action respects the gradation on $\mc V$. Denote the subspace of all vectors in $\mc V$ invariant with respect to the  $\fkS_n$-action by $\mc V^\fkS$. 

Clearly, the $\g$-action on $\mc V$ commutes with the $\fkS_n$-action on $\mc V$ and preserves the grading. Therefore, $\mc V^\fkS$ is a graded $\g$-module. Hence we have the weight decomposition for both $\mc V^\fkS$ and $(\mc V^\fkS)^\sing$,
\[
\mc V^\fkS=\bigoplus_{\ell=0}^{n}(\mc V^\fkS)_{(n-\ell,\ell)},\qquad (\mc V^\fkS)^\sing=\bigoplus_{\ell=0}^{n}(\mc V^\fkS)^\sing_{(n-\ell,\ell)}.
\]
Note that $(\mc V^\fkS)_{(n-\ell,\ell)}$ and $(\mc V^\fkS)^\sing_{(n-\ell,\ell)}$ are also graded $\C[z_1,\dots,z_n]^\fkS$-modules.

The space $\mc V$ is a $\g[t]$-module where $e_{ij}[r]$ acts by
\beq\label{eq:glt-action}
\begin{split}
e_{ij}[r](&p(z_1,\dots,z_n)w_1\otimes\dots\otimes w_n)\\=\ &p(z_1,\dots,z_n)\sum_{s=1}^n(-1)^{(|w_1|+\cdots+|w_{s-1}|)(|i|+|j|)}z_s^r\, w_1\otimes\dots\otimes e_{ij}w_s\otimes\dots\otimes w_n,
\end{split}
\eeq
for $p(z_1,\dots,z_n)\in\C[z_1,\dots,z_n]$ and $w_s\in \C^{1|1}$.

The following lemma is straightforward.
\begin{lem}\label{lem:graded-com}
The $\g[t]$-action on $\mc V$ commutes with the $\fkS_n$-action on $\mc V$. Both $\mc V$ and $\mc V^\fkS$ are graded $\g[t]$-modules.\qed
\end{lem}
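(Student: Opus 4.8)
The plan is to reduce the statement to one commutation relation together with a single Koszul-sign check. Since $\fkS_n$ is generated by the simple transpositions $s_1,\dots,s_{n-1}$ and the elements $e_{kl}[r]$ ($k,l\in\{1,2\}$, $r\in\Z_{\gge 0}$) span $\g[t]$, it suffices to show that each $s_i$ commutes with each $e_{kl}[r]$. I would assemble the generators into the series $e_{kl}(x)=\sum_{r\gge 0}e_{kl}[r]x^{-r-1}$ and prove the cleaner statement that $s_i$ commutes with $e_{kl}(x)$; extracting the coefficient of $x^{-r-1}$ then gives the relation for every $r$ simultaneously. Using $\sum_{r\gge 0}z_s^rx^{-r-1}=(x-z_s)^{-1}$, formula \eqref{eq:glt-action} becomes
\[
e_{kl}(x)=\sum_{s=1}^n\frac{1}{x-z_s}\,\varrho_s(e_{kl}),
\]
where $\varrho_s(e_{kl})$ denotes the super-action of $e_{kl}$ on the $s$-th tensor factor of $V$ (carrying the Koszul sign $(-1)^{(|k|+|l|)(|w_1|+\cdots+|w_{s-1}|)}$) and $(x-z_s)^{-1}$ acts by multiplication on the polynomial factor.

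Next I would conjugate $e_{kl}(x)$ by $s_i=P^{(i,i+1)}\tau_i$, where $P^{(i,i+1)}$ is the graded flip on $V$ and $\tau_i$ transposes the variables $z_i,z_{i+1}$; both realize the transposition $s_i$ on the index set $\{1,\dots,n\}$. Since $P^{(i,i+1)}$ and $\tau_i$ act on the two different tensor factors of $\mc V=V\otimes\C[\bm z]$, the variable swap sends the scalar $(x-z_s)^{-1}$ to $(x-z_{s_i(s)})^{-1}$ while fixing the operators $\varrho_s(e_{kl})$, and the graded flip fixes the scalars while permuting the operators. The crux is the intertwining identity
\[
P^{(i,i+1)}\,\varrho_s(e_{kl})\,\bigl(P^{(i,i+1)}\bigr)^{-1}=\varrho_{s_i(s)}(e_{kl}),\qquad 1\lle s\lle n.
\]
Granting it and reindexing the sum, the two copies of the transposition $s_i$ (one from $\tau_i$, one from $P^{(i,i+1)}$) cancel, the index $s$ is restored, and $s_ie_{kl}(x)s_i^{-1}=e_{kl}(x)$.

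The main obstacle is exactly this intertwining identity for $s\in\{i,i+1\}$, where the Koszul signs on the two sides are generated differently: on one side the sign records the parities strictly to the left of the active factor before the flip, on the other side after it. (For $s\neq i,i+1$ the identity is immediate, since swapping two factors lying on the same side of $s$ does not change the parity to the left of $s$.) Concretely, one compares the exponents
\[
(|k|+|l|)(|w_1|+\cdots+|w_{i-1}|)+(|w_i|+|k|+|l|)\,|w_{i+1}|
\]
and
\[
|w_i|\,|w_{i+1}|+(|k|+|l|)(|w_1|+\cdots+|w_{i-1}|+|w_{i+1}|),
\]
and checks that they coincide modulo $2$; this is a short finite verification.

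Finally, the grading assertions follow at once. As $e_{kl}[r]$ multiplies the polynomial coefficient by $z_s^r$ and $\deg(e_{kl}[r])=r$, the action carries $(\mc V)_j$ into $(\mc V)_{j+r}$, so $\mc V$ is a graded $\g[t]$-module. Since the $\fkS_n$-action preserves the grading and, by the above, commutes with $\g[t]$, the invariant subspace $\mc V^\fkS$ is a graded $\g[t]$-submodule, which finishes the proof.
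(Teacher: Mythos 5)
Your proof is correct: the reduction to conjugating $e_{kl}(x)$ by $s_i=P^{(i,i+1)}\tau_i$, the intertwining identity $P^{(i,i+1)}\varrho_s(e_{kl})(P^{(i,i+1)})^{-1}=\varrho_{s_i(s)}(e_{kl})$, and the mod-$2$ comparison of the two Koszul-sign exponents all check out, as do the grading assertions. The paper leaves this lemma as a straightforward verification with no written proof, and your argument is exactly the standard computation that justifies it.
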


\subsection{Properties of $\mc{V}^\fkS$ and $(\mc V^\fkS)^\sing$}
In this section, we recall properties of $\mc{V}^\fkS$ and $(\mc V^\fkS)^\sing$ from \cite[Section 3]{LM21}.
\begin{lem}\label{lem free graded}
The space $(\mc V^\fkS)_{(n-\ell,\ell)}$ is a free $\C[z_1,\dots,z_n]^\fkS$-module of rank $n\choose{\ell}$. In particular, the space $\mc V^\fkS$ is a free $\C[z_1,\dots,z_n]^\fkS$-module of rank $2^n$.\qed
\end{lem}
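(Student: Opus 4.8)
The plan is to compute the $\fkS_n$-invariants one weight space at a time, reducing via Frobenius reciprocity to the anti-invariants of a Young subgroup and then invoking the freeness of $\C[z_1,\dots,z_n]^{\fkS_\ell\times\fkS_{n-\ell}}$ recalled just above. Throughout, write $R=\C[z_1,\dots,z_n]^\fkS$, and observe that $R$ acts on $\mc V$ through the polynomial factor and commutes with the $\fkS_n$-action, so that each $(\mc V^\fkS)_{(n-\ell,\ell)}=\big((V)_{(n-\ell,\ell)}\otimes\C[\bm z]\big)^{\fkS_n}$ is naturally an $R$-module.

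First I would pin down the $\fkS_n$-module structure of the weight space $(V)_{(n-\ell,\ell)}$. It has the basis $\{w_S\}$ indexed by the $\ell$-element subsets $S\subset\{1,\dots,n\}$, where $w_S$ has $v_2$ in the slots indexed by $S$ and $v_1$ elsewhere. Since the $\fkS_n$-action on the constant polynomials $V\otimes 1$ is by the graded flips $P^{(i,i+1)}$, a simple transposition $s_i$ sends $w_S$ to $w_{s_i(S)}$, up to a sign that equals $-1$ exactly when both $i,i+1\in S$ (two odd vectors $v_2$ are interchanged) and $+1$ otherwise. Fixing the standard subset $\{1,\dots,\ell\}$, whose stabilizer is $H:=\fkS_\ell\times\fkS_{n-\ell}$ with $\fkS_\ell$ permuting the first $\ell$ coordinates, this identifies
\[(V)_{(n-\ell,\ell)}\cong\mathrm{Ind}_{H}^{\fkS_n}(\mathrm{sgn}\boxtimes\mathbf 1),\]
the module induced from the sign representation on the $\ell$ odd slots. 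Getting these fermionic signs right is the one genuinely $\g$-specific point of the argument.

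Next, using the projection formula $\mathrm{Ind}_H^{\fkS_n}(A)\otimes B\cong\mathrm{Ind}_H^{\fkS_n}(A\otimes\mathrm{Res}_H B)$ together with Frobenius reciprocity for invariants, I would obtain an isomorphism of graded $R$-modules
\[(\mc V^\fkS)_{(n-\ell,\ell)}\cong\big(\C[\bm z]\otimes(\mathrm{sgn}\boxtimes\mathbf 1)\big)^{H}.\]
The right-hand side is the space of polynomials antisymmetric in $z_1,\dots,z_\ell$ and symmetric in $z_{\ell+1},\dots,z_n$; by the classical description of $\fkS_\ell$-anti-invariants it equals $\Delta_\ell\cdot\C[\bm z]^{H}$, where $\Delta_\ell=\prod_{1\le i<j\le\ell}(z_i-z_j)$ is the Vandermonde determinant. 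In particular it is free of rank one over $\C[\bm z]^{H}=\C[\bm z]^{\fkS_\ell\times\fkS_{n-\ell}}$, with generator $\Delta_\ell$.

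Finally I would invoke the fact recalled before the statement that $\C[\bm z]^{\fkS_\ell\times\fkS_{n-\ell}}$ is a free $R$-module of rank $\binom{n}{\ell}$. Composing with the previous step shows that $(\mc V^\fkS)_{(n-\ell,\ell)}$ is a free $R$-module of rank $\binom{n}{\ell}$; all the identifications respect the grading up to the overall degree shift by $\deg\Delta_\ell=\binom{\ell}{2}$, which does not affect freeness, so the statement holds for graded modules as well. Summing over the weight decomposition $\mc V^\fkS=\bigoplus_{\ell=0}^n(\mc V^\fkS)_{(n-\ell,\ell)}$ and using $\sum_{\ell=0}^n\binom{n}{\ell}=2^n$ gives the second assertion. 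The main obstacle is the first step, namely correctly tracking the signs in the graded $\fkS_n$-action so as to pin down the sign representation on the odd coordinates; once that identification is in place, the remaining steps are standard representation theory combined with the cited freeness of $\C[\bm z]^{\fkS_\ell\times\fkS_{n-\ell}}$.
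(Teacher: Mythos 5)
Your argument is correct: the identification of $(V)_{(n-\ell,\ell)}$ with $\mathrm{Ind}_{\fkS_\ell\times\fkS_{n-\ell}}^{\fkS_n}(\mathrm{sgn}\boxtimes\mathbf 1)$ (the sign coming from the graded flips on the odd slots), the reduction via the projection formula and Frobenius reciprocity to polynomials antisymmetric in $z_1,\dots,z_\ell$ and symmetric in the rest, and the factorization through the Vandermonde $\Delta_\ell$ are all sound, and the resulting degree shift $\ell(\ell-1)/2$ is consistent with the graded character in Proposition \ref{prop ch}. The paper itself gives no proof, simply recalling the lemma from \cite[Section 3]{LM21}, and your argument is essentially the one used there.
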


Set $v^+=v_1^{\otimes n}=v_1\otimes\dots\otimes v_1$.

\begin{lem}\label{lem cyclic sym graded}
The $\g[t]$-module $\mc V^\fkS$ is a cyclic module generated by $v^+$.\qed
\end{lem}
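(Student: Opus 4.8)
The plan is to prove that the $\g[t]$-submodule $M:=\mathrm{U}(\g[t])\,v^+$ coincides with $\mc V^\fkS$. Since $v^+\in\mc V^\fkS$ and the $\g[t]$-action preserves $\mc V^\fkS$ by Lemma \ref{lem:graded-com}, we have $M\subseteq\mc V^\fkS$, so only the reverse inclusion, checked weight by weight, is at issue. The key preliminary observation is that the current $e_{11}[r]+e_{22}[r]$ acts on all of $\mc V$ as multiplication by the power sum $p_r(\bm z)=\sum_{s=1}^n z_s^r$: indeed $(e_{11}+e_{22})w=w$ for every $w\in\C^{1|1}$, and the sign in \eqref{eq:glt-action} is trivial because $|i|+|i|=\bar 0$. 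As the power sums generate $\C[z_1,\dots,z_n]^\fkS$, the submodule $M$ is stable under multiplication by $R:=\C[z_1,\dots,z_n]^\fkS$; thus $M$ is an $R$-submodule of $\mc V^\fkS$, and by Lemma \ref{lem free graded} it suffices to show that $M\cap(\mc V^\fkS)_{(n-\ell,\ell)}$ generates the free $R$-module $(\mc V^\fkS)_{(n-\ell,\ell)}$ for each $\ell$.

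Next I would set up an explicit model of $(\mc V^\fkS)_{(n-\ell,\ell)}$. For a subset $S\subseteq\{1,\dots,n\}$ with $|S|=\ell$ write $u_S$ for the vector of $V$ with $v_2$ in the factors indexed by $S$ and $v_1$ elsewhere, so that $\{u_S\}$ is a weight basis of $(V)_{(n-\ell,\ell)}$. Writing an element of this weight as $\sum_S g_S(\bm z)\,u_S$, the $\fkS_n$-invariance condition imposed by the graded flips forces the coefficient attached to $S_0=\{1,\dots,\ell\}$ to be antisymmetric in $z_1,\dots,z_\ell$ and symmetric in $z_{\ell+1},\dots,z_n$, the odd parity of $v_2$ producing the antisymmetry; moreover the whole invariant is determined by this single coefficient, since the $\fkS_n$-orbit of $S_0$ is all of the $S$. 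This identifies $(\mc V^\fkS)_{(n-\ell,\ell)}$, as an $R$-module, with $\Delta_\ell\cdot\C[z_1,\dots,z_n]^{\fkS_\ell\times\fkS_{n-\ell}}$, where $\Delta_\ell=\prod_{1\lle i<j\lle\ell}(z_i-z_j)$, the isomorphism being extraction of the $u_{S_0}$-coefficient.

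Then I would feed in the lowering currents. Using $(e_{21}[r])^2=0$ and the anticommutativity \eqref{eq:anti}, the product $\Phi_{\bm r}:=e_{21}[r_1]\cdots e_{21}[r_\ell]\,v^+$ lies in $M\cap(\mc V^\fkS)_{(n-\ell,\ell)}$, and a direct fermionic (Slater-determinant) computation from \eqref{eq:glt-action} shows that its $u_{S_0}$-coefficient equals, up to an overall sign, the alternant $\det(z_i^{r_j})_{1\lle i,j\lle\ell}$. As $\bm r$ runs over strictly increasing tuples these alternants form a $\C$-basis of the antisymmetric polynomials in $z_1,\dots,z_\ell$, that is, of $\Delta_\ell\cdot\C[z_1,\dots,z_\ell]^{\fkS_\ell}$. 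Using the $R$-stability of $M$, it then remains to verify the purely commutative-algebra fact that $\C[z_1,\dots,z_\ell]^{\fkS_\ell}$ generates $\C[z_1,\dots,z_n]^{\fkS_\ell\times\fkS_{n-\ell}}$ as an $R$-module; this follows from the factorization $\prod_{s=1}^n(1+z_su)=\prod_{s=1}^\ell(1+z_su)\cdot\prod_{s=\ell+1}^n(1+z_su)$, which expresses the elementary symmetric functions of $z_{\ell+1},\dots,z_n$ as polynomials in the $\sigma_i(\bm z)\in R$ and the elementary symmetric functions of $z_1,\dots,z_\ell$ after inverting the power series $\prod_{s=1}^\ell(1+z_su)$. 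Combining these facts, the $R$-span of $\{\Phi_{\bm r}\}$ exhausts $(\mc V^\fkS)_{(n-\ell,\ell)}$, and summing over $\ell$ gives $M=\mc V^\fkS$.

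I expect the main obstacle to be the determinant bookkeeping in the third paragraph: one must track the signs coming both from the anticommuting currents $e_{21}[r_j]$ and from the graded flips in \eqref{eq:glt-action} in order to identify the $u_{S_0}$-coefficient of $\Phi_{\bm r}$ cleanly with the alternant $\det(z_i^{r_j})$. Everything else is either the soft observation that the Cartan currents realize multiplication by power sums, or the standard symmetric-function manipulation underlying the generation statement.
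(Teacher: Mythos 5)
Your proof is correct. Note that the paper itself gives no argument for this lemma --- it is recalled from \cite[Section 3]{LM21} with a \qed --- so the only comparison available is with the neighboring Lemma \ref{lem explicit basis}, which already implies cyclicity and whose content your argument essentially reproves from scratch: the identification of $(\mc V^\fkS)_{(n-\ell,\ell)}$ with $\Delta_\ell\cdot\C[z_1,\dots,z_n]^{\fkS_\ell\times\fkS_{n-\ell}}$, the observation that $e_{11}[r]+e_{22}[r]$ acts by the power sum $p_r(\bm z)$, and the Slater-determinant computation of $e_{21}[r_1]\cdots e_{21}[r_\ell]v^+$ are exactly the standard ingredients (I checked the sign bookkeeping in the $n=\ell=2$ case and it does produce $\det(z_i^{r_j})$). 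The one place you diverge is cosmetic: you use \emph{all} alternants $\det(z_i^{r_j})$ with unbounded exponents plus the symmetric-function identity $\sum_j e_j''u^j=(\sum_k\sigma_k u^k)(\sum_i e_i'u^i)^{-1}$ to get generation over $\C[\bm z]^{\fkS}$, whereas Lemma \ref{lem explicit basis} asserts the sharper statement that the finitely many tuples with $0\lle r_1<\dots<r_\ell\lle n-1$ already form a \emph{free} generating set; your route buys cyclicity more cheaply but does not recover freeness or the rank count.
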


\begin{lem}\label{lem explicit basis} The set 
\beq\label{eq explicit basis}
\{e_{21}[r_1]e_{21}[r_2]\cdots e_{21}[r_\ell]v^+~|~0\lle r_1<r_2<\dots<r_{\ell}\lle n-1\}
\eeq
is a free generating set of $(\mc V^\fkS)_{(n-\ell,\ell)}$ over $\C[z_1,\dots,z_n]^\fkS$.\qed
\end{lem}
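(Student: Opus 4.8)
The plan is to prove Lemma~\ref{lem explicit basis} by combining the two preceding structural results: the freeness of $(\mc V^\fkS)_{(n-\ell,\ell)}$ as a $\C[z_1,\dots,z_n]^\fkS$-module of rank $\binom{n}{\ell}$ (Lemma~\ref{lem free graded}) and the cyclicity of $\mc V^\fkS$ generated by $v^+$ (Lemma~\ref{lem cyclic sym graded}). First I would verify that the proposed set \eqref{eq explicit basis} has exactly the right cardinality, namely $\binom{n}{\ell}$, which is the number of strictly increasing sequences $0\lle r_1<r_2<\dots<r_\ell\lle n-1$ chosen from $\{0,1,\dots,n-1\}$; this matches the rank from Lemma~\ref{lem free graded}. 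Since the module is free of that rank, it suffices to show that these $\binom{n}{\ell}$ elements \emph{generate} $(\mc V^\fkS)_{(n-\ell,\ell)}$ over $\C[z_1,\dots,z_n]^\fkS$; a generating set of size equal to the rank of a free module over this (Noetherian, graded) base ring is automatically a free basis, so the real content is spanning, not independence.

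To establish that \eqref{eq explicit basis} spans, I would start from cyclicity: by Lemma~\ref{lem cyclic sym graded}, every element of $(\mc V^\fkS)_{(n-\ell,\ell)}$ is obtained by applying $\mathrm{U}(\g[t])$ to $v^+$ and extracting the weight-$(n-\ell,\ell)$ component. Because $v^+$ is the top weight vector, the weight is lowered only by the operators $e_{21}[r]$, and reaching weight $(n-\ell,\ell)$ requires exactly $\ell$ of them; the Cartan generators $e_{11}[r],e_{22}[r]$ act on $v^+$ and on the resulting vectors by scalars that are symmetric-polynomial multiples (via the $\g[t]$-action formula \eqref{eq:glt-action}), and $e_{12}[r]$ raises the weight so cannot appear in a reduced expression landing in this weight space. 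Thus every vector in $(\mc V^\fkS)_{(n-\ell,\ell)}$ is a $\C[z_1,\dots,z_n]^\fkS$-linear combination of monomials $e_{21}[r_1]\cdots e_{21}[r_\ell]v^+$ with arbitrary indices $r_i\in\Z_{\gge 0}$. The anticommutation relations \eqref{eq:anti}, which give $(e_{21}[r])^2=0$ and $e_{21}[r]e_{21}[s]=-e_{21}[s]e_{21}[r]$, let me reorder and antisymmetrize so that only strictly increasing tuples $r_1<r_2<\dots<r_\ell$ survive.

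The key remaining point, and the step I expect to be the main obstacle, is reducing arbitrarily large indices $r_i$ down to the range $r_i\lle n-1$. The $\g[t]$-action \eqref{eq:glt-action} shows that $e_{21}[r]$ inserts a factor $z_s^r$ at the $s$-th tensor slot, so a product $e_{21}[r_1]\cdots e_{21}[r_\ell]v^+$ produces, after symmetrization, coefficients built from power sums and Schur-like antisymmetric expressions in the $z_s$. I would argue that any occurrence of an exponent $r\gge n$ can be rewritten, modulo lower-index terms with symmetric-polynomial coefficients, using the fundamental fact that $z_s^n$ is a $\C[z_1,\dots,z_n]^\fkS$-linear combination of $1,z_s,\dots,z_s^{n-1}$ (the characteristic-polynomial / Newton-identity relation expressing high powers in terms of lower ones with elementary-symmetric coefficients). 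Carrying this reduction through the antisymmetrized products requires checking that the symmetric coefficients genuinely lie in $\C[z_1,\dots,z_n]^\fkS$ and that the reduction terminates; this is where the analogy with \cite[Section 3]{LM21} and \cite[Section 2.5]{MTV09} should be invoked to control the combinatorics. Once spanning by the strictly increasing tuples with $r_i\lle n-1$ is established, the cardinality match with the free rank from Lemma~\ref{lem free graded} forces these vectors to be a free generating set, completing the proof.
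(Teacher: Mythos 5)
Your argument is correct, and since the paper states this lemma without proof (it is quoted from \cite[Section 3]{LM21}), there is no in-paper proof to compare against; your reconstruction is the standard one and it goes through. Two remarks. First, the step you flag as the ``main obstacle'' is in fact immediate: since each $z_s$ is a root of $\prod_{j=1}^n(x-z_j)=x^n+\sum_{i=1}^n(-1)^i\sigma_i(\bm z)x^{n-i}$, one has $z_s^r=\sum_{i=1}^n(-1)^{i-1}\sigma_i(\bm z)z_s^{r-i}$ for all $s$ and all $r\gge n$, and because $e_{21}[r]$ acts on $\mc V$ by $\sum_s(\pm)z_s^r e_{21}^{(s)}$ with signs independent of $r$, this gives the operator identity $e_{21}[r]=\sum_{i=1}^n(-1)^{i-1}\sigma_i(\bm z)\,e_{21}[r-i]$ on all of $\mc V$, with coefficients manifestly in $\C[z_1,\dots,z_n]^\fkS$; the paper itself uses precisely this identity (applied to $v^+$) in the proof of Lemma~\ref{lem:weyl-0}, and termination is clear since each application strictly lowers the maximal index. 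Second, your reduction from ``basis'' to ``generating set'' is legitimate: a surjective endomorphism of a finitely generated module over a commutative ring is an isomorphism, so $\binom{n}{\ell}$ generators of a free module of rank $\binom{n}{\ell}$ form a free basis (no Noetherian or graded hypothesis is even needed). The remaining ingredients --- PBW ordering with $e_{12}[r]$ on the right killing $v^+$, the Cartan part acting by power sums, and the anticommutation relations \eqref{eq:anti} sorting the indices into strictly increasing tuples --- are all used correctly.
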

\begin{lem}\label{lem free sing graded}
The space $(\mc V^\fkS)^\sing_{(n-\ell,\ell)}$ is a free $\C[z_1,\dots,z_n]^\fkS$-module of rank ${n-1}\choose{\ell}$ with a free generating set given by
\beq\label{eq basis}
\{e_{12}[0]e_{21}[0]e_{21}[r_1]\cdots e_{21}[r_{\ell}]v^+,\quad 1\lle r_1<r_2<\dots<r_{\ell}\lle n-1\}.
\eeq
In particular, the space $(\mc V^\fkS)^\sing$ is a free $\C[z_1,\dots,z_n]^\fkS$-module of rank $2^{n-1}$.\qed
\end{lem}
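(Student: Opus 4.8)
The plan is to identify the action of $e_{12}[0]$ on $(\mc V^\fkS)_{(n-\ell,\ell)}$ with a Koszul differential and to exploit the fact that one entry of the associated sequence is an invertible scalar. Since $e_{12}=e_{12}[0]$ commutes with multiplication by symmetric polynomials, preserves $\mc V^\fkS$, and lowers the weight by $(-1,1)$, the singular space $(\mc V^\fkS)^\sing_{(n-\ell,\ell)}$ is exactly the kernel of the $\C[z_1,\dots,z_n]^\fkS$-linear map
\[
d_\ell\colon (\mc V^\fkS)_{(n-\ell,\ell)}\longrightarrow (\mc V^\fkS)_{(n-\ell+1,\ell-1)},\qquad d_\ell=e_{12}[0].
\]
By Lemma \ref{lem explicit basis} both modules are free over $\C[z_1,\dots,z_n]^\fkS$ with bases $e_R:=e_{21}[r_1]\cdots e_{21}[r_\ell]v^+$ indexed by subsets $R=\{r_1<\cdots<r_\ell\}\subseteq\{0,1,\dots,n-1\}$, and by \eqref{eq:anti} the assignment $e_R\mapsto e_{21}[r_1]\wedge\cdots\wedge e_{21}[r_\ell]$ identifies $(\mc V^\fkS)_{(n-\ell,\ell)}$ with the $\ell$-th exterior power of the free module $\bigoplus_{r=0}^{n-1}\C[z_1,\dots,z_n]^\fkS\, e_{21}[r]$.

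Next I would compute $d_\ell$ in these bases. The structural input is the supercommutator $[e_{12}[0],e_{21}[r]]=e_{11}[r]+e_{22}[r]$ together with the observation, immediate from \eqref{eq:glt-action}, that the central element $e_{11}[r]+e_{22}[r]$ acts on $\mc V$ as multiplication by the power sum $p_r(\bm z)=\sum_{s=1}^n z_s^{\,r}$; in particular $e_{11}[0]+e_{22}[0]$ acts as the scalar $n$. Commuting $e_{12}[0]$ to the right through the anticommuting factors $e_{21}[r_i]$ and using $e_{12}[0]v^+=0$ then gives
\[
d_\ell(e_R)=\sum_{i=1}^\ell(-1)^{i-1}\,p_{r_i}(\bm z)\,e_{R\setminus\{r_i\}},
\]
which is precisely the Koszul differential attached to the sequence $(p_0,p_1,\dots,p_{n-1})=(n,p_1,\dots,p_{n-1})$.

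Because $p_0=n$ is a unit in $\C[z_1,\dots,z_n]^\fkS$, this Koszul complex is contractible, and I would exhibit the contraction explicitly as $h(\omega):=n^{-1}e_{21}[0]\,\omega$. The single relation $e_{12}[0]e_{21}[0]+e_{21}[0]e_{12}[0]=n$ yields, as an operator identity, $d\,h+h\,d=\id$ on every weight component. Hence for $z\in(\mc V^\fkS)^\sing_{(n-\ell,\ell)}=\ker d_\ell$ one gets $z=d_\ell h(z)=n^{-1}e_{12}[0]e_{21}[0]\,z$, so $\ker d_\ell=e_{12}[0]e_{21}[0]\,(\mc V^\fkS)_{(n-\ell,\ell)}$. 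Substituting the basis of Lemma \ref{lem explicit basis} and discarding the terms with a repeated $e_{21}[0]$, which vanish by \eqref{eq:anti}, shows that $\ker d_\ell$ is generated by the vectors $e_{12}[0]e_{21}[0]e_{21}[r_1]\cdots e_{21}[r_\ell]v^+$ with $1\lle r_1<\cdots<r_\ell\lle n-1$, that is, by the set \eqref{eq basis}.

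It remains to check that \eqref{eq basis} is a \emph{free} generating set. Expanding a generator in the basis $\{e_S\}$ via the formula for $d_\ell$, one finds $e_{12}[0]e_{21}[0]e_R=n\,e_R+(\text{terms }e_S\text{ with }0\in S)$ for $R\subseteq\{1,\dots,n-1\}$. Ordering the basis of $(\mc V^\fkS)_{(n-\ell,\ell)}$ so that subsets avoiding $0$ come first, the coefficient matrix of the generators is $[\,nI\mid\ast\,]$ with $n$ invertible; thus the $\binom{n-1}{\ell}$ generators are linearly independent over $\C[z_1,\dots,z_n]^\fkS$, and being also a generating set they form a free basis. This gives rank $\binom{n-1}{\ell}$, and summing over $\ell$ yields rank $\sum_\ell\binom{n-1}{\ell}=2^{n-1}$ for $(\mc V^\fkS)^\sing$. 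I expect the main obstacle to lie in the second paragraph: correctly tracking the Koszul signs when commuting $e_{12}[0]$ through the product and verifying that the central elements $e_{11}[r]+e_{22}[r]$ act as the power sums $p_r$. Once the differential is identified, the contraction homotopy, the explicit generators, and the rank count all follow cleanly from $p_0=n$ being invertible.
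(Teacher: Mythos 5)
Your argument is correct. Note that the paper itself gives no proof of this lemma: it is recalled verbatim from \cite[Section 3]{LM21}, so there is no in-paper argument to compare against. What you wrote is a clean, self-contained reconstruction. The two load-bearing facts are exactly right: the supercommutator $[e_{12}[0],e_{21}[r]]=e_{11}[r]+e_{22}[r]$ acts on $\mc V$ as multiplication by the power sum $p_r(\bm z)$ (with $p_0=n$), so $e_{12}[0]$ becomes the Koszul differential for the sequence $(n,p_1,\dots,p_{n-1})$ on the free module of Lemma \ref{lem explicit basis}; and the homotopy identity $e_{12}[0]e_{21}[0]+e_{21}[0]e_{12}[0]=n$ forces $\ker e_{12}[0]=\mathrm{im}\,(e_{12}[0]e_{21}[0])$, which immediately yields the generating set \eqref{eq basis}. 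Your freeness check via the triangular expansion $e_{12}[0]e_{21}[0]e_R=n\,e_R+(\text{terms indexed by subsets containing }0)$ is also valid, since the putative relation's coefficient of $e_R$ in the free basis of $(\mc V^\fkS)_{(n-\ell,\ell)}$ is $n$ times the corresponding coefficient. The edge cases $\ell=0$ and $\ell=n$ come out consistently (rank $1$ and rank $0$, respectively), and the sum $\sum_\ell\binom{n-1}{\ell}=2^{n-1}$ gives the final claim.
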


Set $(q)_r=\prod_{i=1}^r(1-q^i)$.

\begin{prop}\label{prop ch}
We have
\[
\ch\big((\mc V^\fkS)_{(n-\ell,\ell)}\big)=\frac{q^{\ell(\ell-1)/2}}{(q)_{\ell}(q)_{n-\ell}},\qquad \ch\big( (\mc V^\fkS)^\sing_{(n-\ell,\ell)} \big)=\frac{q^{\ell(\ell+1)/2}}{(q)_\ell(q)_{n-1-\ell}(1-q^n)}.
\]
\end{prop}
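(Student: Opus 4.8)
The plan is to compute both graded characters from the free-module descriptions already established in Lemmas~\ref{lem explicit basis} and \ref{lem free sing graded}. The key observation is that a free generating set over $\C[z_1,\dots,z_n]^\fkS$ turns the graded character of the module into the graded character of the free generators multiplied by $\ch\big(\C[z_1,\dots,z_n]^\fkS\big)$. Since $\C[z_1,\dots,z_n]^\fkS$ is the polynomial algebra on $\sigma_1(\bm z),\dots,\sigma_n(\bm z)$ with $\deg\sigma_i=i$, its graded character is
\[
\ch\big(\C[z_1,\dots,z_n]^\fkS\big)=\prod_{i=1}^n\frac{1}{1-q^i}=\frac{1}{(q)_n}.
\]
So the whole problem reduces to reading off the degrees of the chosen free generators and summing the resulting monomials in $q$.

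First I would handle $(\mc V^\fkS)_{(n-\ell,\ell)}$. By Lemma~\ref{lem explicit basis}, a free generating set is $\{e_{21}[r_1]\cdots e_{21}[r_\ell]v^+ : 0\le r_1<\cdots<r_\ell\le n-1\}$. Since $v^+$ has degree $0$ and each $e_{21}[r]$ raises the grading by $r$ (recall $\deg e_{ij}[r]=r$ and the action \eqref{eq:glt-action}), the generator indexed by $(r_1,\dots,r_\ell)$ has degree $r_1+\cdots+r_\ell$. Hence the generating-set character is $\sum q^{r_1+\cdots+r_\ell}$ over all $0\le r_1<\cdots<r_\ell\le n-1$. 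This is the standard Gaussian-binomial generating function: shifting $r_i\mapsto r_i-(i-1)$ to turn the strict chain into a weakly increasing one gives
\[
\sum_{0\le r_1<\cdots<r_\ell\le n-1}q^{r_1+\cdots+r_\ell}=q^{\ell(\ell-1)/2}\binom{n}{\ell}_q=q^{\ell(\ell-1)/2}\frac{(q)_n}{(q)_\ell(q)_{n-\ell}}.
\]
Multiplying by $1/(q)_n$ cancels the $(q)_n$ and yields exactly $q^{\ell(\ell-1)/2}/\big((q)_\ell(q)_{n-\ell}\big)$, as claimed.

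For $(\mc V^\fkS)^\sing_{(n-\ell,\ell)}$ the same method applies to the generating set of Lemma~\ref{lem free sing graded}, namely $\{e_{12}[0]e_{21}[0]e_{21}[r_1]\cdots e_{21}[r_\ell]v^+ : 1\le r_1<\cdots<r_\ell\le n-1\}$. The prefactor $e_{12}[0]e_{21}[0]$ contributes degree $0$, and the tail $e_{21}[r_1]\cdots e_{21}[r_\ell]$ contributes degree $r_1+\cdots+r_\ell$ with indices now strictly increasing in the range $1$ to $n-1$. Summing $q^{r_1+\cdots+r_\ell}$ over $1\le r_1<\cdots<r_\ell\le n-1$, I would substitute $r_i\mapsto r_i-1$ to land in the range $0$ to $n-2$, producing a factor $q^\ell$ times the previous-type sum over $\binom{n-1}{\ell}_q$:
\[
\sum_{1\le r_1<\cdots<r_\ell\le n-1}q^{r_1+\cdots+r_\ell}=q^{\ell}\,q^{\ell(\ell-1)/2}\frac{(q)_{n-1}}{(q)_\ell(q)_{n-1-\ell}}=q^{\ell(\ell+1)/2}\frac{(q)_{n-1}}{(q)_\ell(q)_{n-1-\ell}}.
\]
Multiplying by $\ch\big(\C[z_1,\dots,z_n]^\fkS\big)=1/(q)_n=1/\big((q)_{n-1}(1-q^n)\big)$ cancels $(q)_{n-1}$ and gives $q^{\ell(\ell+1)/2}/\big((q)_\ell(q)_{n-1-\ell}(1-q^n)\big)$, matching the statement.

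The only genuine content beyond bookkeeping is the Gaussian-binomial identity $\sum_{0\le r_1<\cdots<r_\ell\le m}q^{\sum r_i}=q^{\ell(\ell-1)/2}\binom{m+1}{\ell}_q$, which I regard as standard; the mild subtlety I would watch is the correct degree count for the generators, i.e.\ confirming via \eqref{eq:glt-action} that $e_{21}[r]$ indeed acts with graded degree $r$ and that the extra $e_{12}[0]e_{21}[0]$ in the singular case adds nothing to the degree. Granting those, everything reduces to the product formula for $\ch\big(\C[z_1,\dots,z_n]^\fkS\big)$ times the explicit generator sums, and the two asserted formulas follow directly.
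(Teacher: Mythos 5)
Your argument is correct, and the paper itself gives no proof of this proposition (it is recalled from \cite[Section 3]{LM21} along with the surrounding lemmas), so there is nothing to diverge from: deriving both characters from the free generating sets of Lemmas \ref{lem explicit basis} and \ref{lem free sing graded}, via $\ch\big(\C[z_1,\dots,z_n]^\fkS\big)=1/(q)_n$ and the Gaussian-binomial identity $\sum_{0\le r_1<\cdots<r_\ell\le n-1}q^{\sum r_i}=q^{\ell(\ell-1)/2}(q)_n/\big((q)_\ell(q)_{n-\ell}\big)$, is exactly the intended computation. Your two checkpoints (that $e_{21}[r]$ raises degree by $r$ per \eqref{eq:glt-action}, and that the prefactor $e_{12}[0]e_{21}[0]$ contributes degree $0$) are the right ones, and both hold.
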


Given $\bm a=(a_1,\dots,a_n)\in\C^n$, let $I_{\bm a}$ be the ideal of $\C[z_1,\dots,z_n]^\fkS$ generated by $\sigma_i(\bm z)- \bs a $, $i=1,\dots,n$. Then for any $\bm a$, by Lemmas \ref{lem:graded-com} and \ref{lem free graded}, the quotient space $\mc V^\fkS/I_{\bm a}\mc V^\fkS$ is a $\g[t]$-module of dimension $2^n$ over $\C$. Denote by $\bar v^+$ be the image of $v^+$ under this quotient.

\subsection{Weyl modules}
In this section, we discuss a special family of Weyl modules. We are not ware of references about Weyl modules of $\ugt$, cf. \cite{CLS19}, except the quantum affine case \cite{Zha17}.

Let $\eta(x)$ be a monic polynomial of degree $m$, where $m\in\Z_{\gge 0}$,
\[
\eta(x)=\sum_{i=0}^{m}\gamma_ix^i,\qquad \gamma_m=1.
\]
Denote by $W_{\eta}$ the $\g[t]$-module generated by an even vector $w$ subject to the relations: 
\beq\label{eq:weyl1}
e_{11}(x)w=\eta'(x)/\eta(x)w, \qquad e_{22}(x)w=e_{12}(x)w=0,
\eeq
\beq\label{eq:weyl2}
\sum_{i=0}^{m}\gamma_ie_{21}[i]w=0.
\eeq
It is convenient to write \eqref{eq:weyl2} as $(e_{21}\otimes \eta(t))w=0$. 

Clearly, we have $\dim W_\eta \lle 2^m$ by PBW theorem and \eqref{eq:anti}, \eqref{eq:weyl2}. The module $W_\eta$ is the universal $\g[t]$-module satisfying \eqref{eq:weyl1}, \eqref{eq:weyl2} which we call a {\it Weyl module}. 

If $\eta(x)=(x-b)^m$, we write $W_\eta$ as $W_m(b)$.

\begin{lem}\label{lem:weyl-0}
Let $\bm a=(0,\dots,0)\in \C^n$. Then $\mc V^\fkS/I_{\bm a}\mc V^\fkS$ is isomorphic to $W_n(0)$ as $\g[t]$-modules.
\end{lem}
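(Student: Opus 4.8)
The plan is to build a surjective $\g[t]$-module homomorphism $\pi\colon W_n(0)\to \mc V^\fkS/I_{\bm a}\mc V^\fkS$ sending the generator $w$ to $\bar v^+$, and then to upgrade it to an isomorphism by comparing dimensions. Since $\mc V^\fkS$ is cyclic over $\g[t]$ with generator $v^+$ by Lemma \ref{lem cyclic sym graded}, its image $\bar v^+$ generates the quotient, so any homomorphism with $w\mapsto \bar v^+$ is automatically surjective. To produce such a map through the universal property of the Weyl module, I must check that $\bar v^+$ satisfies the defining relations \eqref{eq:weyl1}, \eqref{eq:weyl2} with $\eta(x)=x^n$; here $\eta'(x)/\eta(x)=n/x$ and \eqref{eq:weyl2} reduces to $e_{21}[n]\,\bar v^+=0$.

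I would first dispatch the relations \eqref{eq:weyl1}. From \eqref{eq:glt-action} and $e_{22}v_1=e_{12}v_1=0$ one gets $e_{22}[r]v^+=e_{12}[r]v^+=0$ in $\mc V^\fkS$ for all $r$, so these hold already before passing to the quotient. Since $e_{11}v_1=v_1$, formula \eqref{eq:glt-action} gives $e_{11}[r]v^+=p_r(\bm z)\,v^+$, where $p_r=\sum_{s=1}^n z_s^r$ is the $r$-th power sum. As $p_0=n$ and, for $r\gge 1$, Newton's identities place $p_r$ in the augmentation ideal $\langle\sigma_1(\bm z),\dots,\sigma_n(\bm z)\rangle=I_{\bm a}$ of $\C[z_1,\dots,z_n]^\fkS$, we obtain $e_{11}[0]\bar v^+=n\bar v^+$ and $e_{11}[r]\bar v^+=0$ for $r\gge 1$; that is, $e_{11}(x)\bar v^+=(n/x)\bar v^+=\eta'(x)/\eta(x)\,\bar v^+$.

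The key step is relation \eqref{eq:weyl2}, i.e. $e_{21}[n]\bar v^+=0$. Writing $(v_2)_s$ for the tensor obtained from $v^+$ by replacing the $s$-th factor with $v_2=e_{21}v_1$, formula \eqref{eq:glt-action} (with trivial signs, as all factors of $v^+$ are even) gives $e_{21}[r]v^+=\sum_{s=1}^n z_s^r\,(v_2)_s$. The point is the polynomial identity obtained by substituting $T=z_s$ into $\prod_{j=1}^n(T-z_j)=\sum_{i=0}^n(-1)^i\sigma_i(\bm z)\,T^{n-i}$, namely $z_s^{\,n}=\sum_{i=1}^n(-1)^{i-1}\sigma_i(\bm z)\,z_s^{\,n-i}$, which holds for every $s$. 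Summing against $(v_2)_s$ yields
\[
\begin{aligned}
e_{21}[n]v^+&=\sum_{s=1}^n z_s^{\,n}(v_2)_s=\sum_{i=1}^n(-1)^{i-1}\sigma_i(\bm z)\,\Big(\sum_{s=1}^n z_s^{\,n-i}(v_2)_s\Big)\\
&=\sum_{i=1}^n(-1)^{i-1}\sigma_i(\bm z)\,e_{21}[n-i]v^+,
\end{aligned}
\]
which lies in $I_{\bm a}\mc V^\fkS$ since each $e_{21}[n-i]v^+\in\mc V^\fkS$. Hence $e_{21}[n]\bar v^+=0$ in the quotient, and the universal property of $W_n(0)$ delivers the surjection $\pi$.

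Finally I would conclude by a rank count. By Lemma \ref{lem free graded} together with $\C[z_1,\dots,z_n]^\fkS/I_{\bm a}\cong\C$, the quotient $\mc V^\fkS/I_{\bm a}\mc V^\fkS$ has dimension $2^n$, while $\dim W_n(0)\lle 2^n$ by the PBW theorem and the relations \eqref{eq:anti}, \eqref{eq:weyl2}. A surjection from a space of dimension at most $2^n$ onto one of dimension exactly $2^n$ must be an isomorphism; thus $\pi$ is an isomorphism of $\g[t]$-modules, and $\dim W_n(0)=2^n$ falls out as a byproduct. I expect the only genuinely delicate point to be the verification of \eqref{eq:weyl2}: one must recognize $\sum_s z_s^{\,n-i}(v_2)_s$ as $e_{21}[n-i]v^+$ and invoke the characteristic-polynomial identity to land inside $I_{\bm a}\mc V^\fkS$; the remaining relations and the dimension comparison are routine.
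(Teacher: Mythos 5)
Your proposal is correct and follows essentially the same route as the paper: verify that $\bar v^+$ satisfies \eqref{eq:weyl1} and \eqref{eq:weyl2} for $\eta(x)=x^n$ (the paper uses the very same identity $e_{21}[n]v^+=\sum_{i=1}^{n}(-1)^{i-1}\sigma_i(\bm z)e_{21}[n-i]v^+$), obtain a surjection $W_n(0)\twoheadrightarrow \mc V^\fkS/I_{\bm a}\mc V^\fkS$, and conclude by comparing $\dim \mc V^\fkS/I_{\bm a}\mc V^\fkS=2^n$ with $\dim W_n(0)\lle 2^n$. Your write-up simply makes explicit the details (power sums, Newton's identities, the characteristic-polynomial substitution) that the paper leaves as ``clear.''
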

\begin{proof}
It is clear that $\bar v^+$ satisfies the relations \eqref{eq:weyl1} for $\eta(x)=x^n$. In addition, it follows from the equation
\[
e_{21}[n]v^+=\sum_{i=1}^{n}(-1)^{i-1}\sigma_i(\bm z)e_{21}[n-i]v^+
\]
that $\bar v^+$ also satisfies the relation \eqref{eq:weyl2} for $\eta(x)=x^n$. Therefore, we have a surjective $\g[t]$-module homomorphism $W_{n}(0)\twoheadrightarrow \mc V^\fkS/I_{\bm a}\mc V^\fkS$. By $\dim \mc V^\fkS/I_{\bm a}\mc V^\fkS=2^n\gge \dim W_n(0)$, we obtain that it is also an isomorphism.
\end{proof}

In particular, we have $\dim W_m(b)=2^m$.

\begin{lem}\label{lem:weyl-tensor}
Let $\eta(x)=\prod_{s=1}^k(x-b_s)^{n_s}$, where $b_s\ne b_r$ for $1\lle s\ne r\lle k$. Then $W_\eta$ is isomorphic to $\bigotimes_{s=1}^k W_{n_s}(b_s)$ as $\g[t]$-modules.
\end{lem}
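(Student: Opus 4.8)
The plan is to exhibit an explicit surjective $\g[t]$-module homomorphism from $W_\eta$ onto $M:=\bigotimes_{s=1}^k W_{n_s}(b_s)$ and then to close the argument by a dimension count. Writing $w_s$ for the cyclic generator of $W_{n_s}(b_s)$ and $w:=w_1\otimes\cdots\otimes w_k$, the first step is to check that $w$ satisfies the defining relations \eqref{eq:weyl1}, \eqref{eq:weyl2} of $W_\eta$; by the universal property this produces a homomorphism $\phi\colon W_\eta\to M$ sending the generator of $W_\eta$ to $w$, whose image is the submodule $N$ generated by $w$.

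Checking \eqref{eq:weyl1} is routine: since the coproduct is primitive and $e_{11},e_{22}$ are even, $e_{11}(x)w=\big(\sum_s n_s/(x-b_s)\big)w=(\eta'(x)/\eta(x))\,w$ (the logarithmic derivative of $\eta$), while $e_{22}(x)w=e_{12}(x)w=0$ because each factor is annihilated. The relation \eqref{eq:weyl2}, i.e. $(e_{21}\otimes\eta(t))w=0$, is the first nontrivial point. As every $w_s$ is even no signs appear, and the coproduct reduces it to $(e_{21}\otimes\eta(t))w_s=0$ for each $s$. Since $\eta(t)=(t-b_s)^{n_s}g_s(t)$ with $g_s(t)=\prod_{r\ne s}(t-b_r)^{n_r}$, I would prove the propagation statement that $(e_{21}\otimes p(t))w_s=0$ implies $(e_{21}\otimes t\,p(t))w_s=0$. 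This follows from the bracket $[e_{21}[r],e_{11}[1]]=e_{21}[r+1]$ together with $e_{11}[1]w_s=n_sb_s\,w_s$: writing out $(e_{21}\otimes t\,p(t))w_s$ and replacing each $e_{21}[i+1]$ by $e_{21}[i]e_{11}[1]-e_{11}[1]e_{21}[i]$ turns it into $n_sb_s(e_{21}\otimes p)w_s-e_{11}[1](e_{21}\otimes p)w_s$, which vanishes by hypothesis. Iterating gives $(e_{21}\otimes f(t))w_s=0$ whenever $(t-b_s)^{n_s}\mid f$, and in particular $(e_{21}\otimes\eta(t))w_s=0$.

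The heart of the proof is surjectivity of $\phi$, i.e. cyclicity of $M$ under $w$; this is where the hypothesis that the $b_s$ are pairwise distinct is essential, since a tensor product of cyclic modules is not cyclic in general. I would first upgrade the annihilation statement to all of $W_{n_s}(b_s)$: because the modes $e_{21}[a]$ pairwise anticommute (by \eqref{eq:anti} and the supercommutator), $(e_{21}\otimes f(t))$ anticommutes with every $e_{21}[a]$, so $(e_{21}\otimes f(t))$ kills the whole module $W_{n_s}(b_s)$ whenever $(t-b_s)^{n_s}\mid f$. Next, by the Chinese Remainder Theorem the pairwise coprime polynomials $(t-b_s)^{n_s}$ admit $g_s(t)$ with $g_s\equiv 1$ modulo $(t-b_s)^{n_s}$ and $g_s\equiv 0$ modulo $(t-b_r)^{n_r}$ for $r\ne s$. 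Applying $(e_{21}\otimes h(t)g_s(t))$ to any decomposable tensor then annihilates every factor except the $s$-th (by the upgraded statement) and acts on the $s$-th factor exactly as $(e_{21}\otimes h(t))$; thus this operator is a factorwise-localized action on tensor position $s$. Using these localized operators one builds up each tensor factor independently, reaching every product $e_{21}[r_1^{(s)}]\cdots e_{21}[r_{j_s}^{(s)}]w_s$ in position $s$. Since such products span each $W_{n_s}(b_s)$, their tensor products span $M$, so $N=M$ and $\phi$ is surjective.

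Finally I would finish with dimensions. By Lemma \ref{lem:weyl-0} and the remark following it, $\dim W_{n_s}(b_s)=2^{n_s}$, so $\dim M=2^{\sum_s n_s}=2^{\deg\eta}$, while the PBW bound gives $\dim W_\eta\lle 2^{\deg\eta}$. Surjectivity yields $\dim W_\eta\gge\dim M=2^{\deg\eta}$, forcing $\dim W_\eta=\dim M$ and hence $\phi$ to be an isomorphism. I expect the localization-by-CRT argument for cyclicity to be the main obstacle; the relation checks and the dimension count are comparatively routine once the annihilation propagation is in hand.
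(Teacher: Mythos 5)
Your proof is correct and follows essentially the same route as the paper's: verify that $w^+=w_1\otimes\cdots\otimes w_k$ satisfies the defining relations \eqref{eq:weyl1}, \eqref{eq:weyl2} (via the bracket $[e_{21}[r],e_{11}[s]]=e_{21}[r+s]$ to propagate the annihilation), use the universal property to get a map $W_\eta\to\bigotimes_s W_{n_s}(b_s)$, prove cyclicity of the tensor product, and conclude by the PBW dimension bound $\dim W_\eta\lle 2^{\deg\eta}$. The only difference is that you spell out the cyclicity step with the Chinese-Remainder localization argument, whereas the paper delegates it to the analogous even-case argument in \cite{CP01}.
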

\begin{proof}
For $1\lle s\lle k$, let $w_s$ be the highest weight vector of $W_{n_s}(b_s)$ and set $\eta_s(x)=(x-b_s)^{n_s}$. Similar to Lemma \ref{lem:weyl-0}, since $\dim W_{\eta}\lle \dim \bigotimes_{s=1}^k W_{n_s}(b_s)$, it suffices to show that $\bigotimes_{s=1}^k W_{n_s}(b_s)$ is generated by $w^+:=\bigotimes_{s=1}^k w_s$ and $w^+$ satisfies the relations \eqref{eq:weyl1}, \eqref{eq:weyl2}. 

The proof of the fact that $\bigotimes_{s=1}^k W_{n_s}(b_s)$ is generated by $w^+$ is similar to the even case, see e.g. \cite{CP01}. The relations \eqref{eq:weyl1} are obvious. By $[e_{11}[r],e_{21}[i]]=-e_{21}[r+i]$ and relations \eqref{eq:weyl1}, \eqref{eq:weyl2}, we have 
$$
(e_{21}\otimes \eta_s(t)t^r)w_s=0,\qquad r\gge 0,\qquad 1\lle s\lle k.
$$
Therefore, we also have $(e_{21}\otimes \eta(t))w_s=0$ for all $1\lle s\lle k$. Now the relation \eqref{eq:weyl2} follows immediately.
\end{proof}

Given sequences $\bm n=(n_1,\dots,n_k)$ of nonnegative integers and $\bm b=(b_1,\dots,b_s)$ of distinct complex numbers, by Lemma \ref{lem:weyl-tensor}, we call $\bigotimes_{s=1}^k W_{n_s}(b_s)$ the {\it Weyl module associated with $\bm n$ and $\bm b$}.

\medskip

Given $\bm a=(a_1,\dots,a_n)\in\C^n$, define $k\in\Z_{>0}$, $b_s\in \C$, and $n_s\in \Z_{>0}$ for $1\lle s\lle k$ by
\beq\label{eq:a-b-relations}
x^n+\sum_{i=1}^n (-1)^ia_ix^{n-i}=\prod_{s=1}^k(x-b_s)^{n_s}.
\eeq
Note that $n=\sum_{s=1}^k n_s$.
\begin{lem}\label{lem local weyl}
The $\g[t]$-module $\mc V^\fkS/ I_{\bm a}\mc V^\fkS$ is isomorphic to the Weyl module $\bigotimes_{s=1}^k W_{n_s}(b_s)$.
\end{lem}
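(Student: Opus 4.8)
The plan is to mimic the proof of Lemma \ref{lem:weyl-0}, of which this is the natural generalization: the defining relations of a Weyl module involve the symmetric functions $\sigma_i(\bm z)$ only through their values, and modding out by $I_{\bm a}$ simply replaces $\sigma_i(\bm z)$ by $a_i$. Writing $\eta(x)=\prod_{s=1}^k(x-b_s)^{n_s}$, so that by \eqref{eq:a-b-relations} we have $\eta(x)=x^n+\sum_{i=1}^n(-1)^ia_ix^{n-i}$, I would first show that the image $\bar v^+$ of $v^+$ satisfies the defining relations \eqref{eq:weyl1}, \eqref{eq:weyl2} of $W_\eta$. Granting this, the universal property of $W_\eta$ together with the cyclicity of $\mc V^\fkS$ (Lemma \ref{lem cyclic sym graded}) yields a surjective $\g[t]$-homomorphism $W_\eta\twoheadrightarrow \mc V^\fkS/I_{\bm a}\mc V^\fkS$. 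Since $\dim \mc V^\fkS/I_{\bm a}\mc V^\fkS=2^n$ while by Lemma \ref{lem:weyl-tensor} and $\dim W_m(b)=2^m$ we have $\dim W_\eta=\dim\bigotimes_{s=1}^k W_{n_s}(b_s)=2^{\sum_s n_s}=2^n$, this surjection is an isomorphism, and composing with Lemma \ref{lem:weyl-tensor} gives the claim.

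To verify \eqref{eq:weyl1}, I would compute the action of the Cartan currents on $v^+$ directly from \eqref{eq:glt-action}. Since $e_{22}v_1=e_{12}v_1=0$, one has $e_{22}(x)v^+=e_{12}(x)v^+=0$, and since $e_{11}v_1=v_1$ with trivial sign factors, $e_{11}[r]v^+=p_r(\bm z)\,v^+$ where $p_r(\bm z)=\sum_{s=1}^n z_s^r$ is the $r$-th power sum. Summing the generating series gives $e_{11}(x)v^+=\big(P'(x)/P(x)\big)v^+$ with $P(x)=\prod_{s=1}^n(x-z_s)=\sum_{i=0}^n(-1)^i\sigma_i(\bm z)x^{n-i}$. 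Each coefficient $p_r(\bm z)$ lies in $\C[z_1,\dots,z_n]^\fkS$, so there is a constant $c_r\in\C$ with $p_r(\bm z)-c_r\in I_{\bm a}$; since the specialization $\sigma_i\mapsto a_i$ sends $P(x)$ to $\eta(x)$, we get $\sum_{r\gge 0}c_rx^{-r-1}=\eta'(x)/\eta(x)$, hence $e_{11}(x)\bar v^+=\big(\eta'(x)/\eta(x)\big)\bar v^+$, which is exactly \eqref{eq:weyl1}. For \eqref{eq:weyl2} I would invoke the identity $e_{21}[n]v^+=\sum_{i=1}^n(-1)^{i-1}\sigma_i(\bm z)\,e_{21}[n-i]v^+$ already used in the proof of Lemma \ref{lem:weyl-0}; passing to the quotient turns this into $\sum_{i=0}^n(-1)^ia_i\,e_{21}[n-i]\bar v^+=0$ with $a_0=1$, which is precisely $(e_{21}\otimes\eta(t))\bar v^+=0$, i.e.\ \eqref{eq:weyl2}.

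The only point requiring care — and the closest thing to an obstacle — is the passage of the relation for $e_{11}(x)$ to the quotient: one must note that each $p_r(\bm z)$ is a symmetric polynomial, so that modulo $I_{\bm a}$ it becomes the constant determined by $\sigma_i=a_i$, and that under this specialization the logarithmic derivative $P'(x)/P(x)$ becomes $\eta'(x)/\eta(x)$ coefficientwise as a series in $x^{-1}$. Everything else is a bookkeeping consequence of the earlier lemmas and the equality $n=\sum_{s=1}^k n_s$; in particular the cyclicity, the universal property, and the dimension count combine without further difficulty.
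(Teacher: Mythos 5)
Your proposal is correct and follows exactly the route the paper intends: its proof of Lemma \ref{lem local weyl} is the one-line instruction to argue ``similar to the proof of Lemma \ref{lem:weyl-0} \dots by checking relations and comparing dimensions'' via Lemma \ref{lem:weyl-tensor}, and you have simply carried out those checks (the relations \eqref{eq:weyl1}, \eqref{eq:weyl2} for $\bar v^+$, the surjection from the universal property plus cyclicity, and the count $2^n=2^{\sum_s n_s}$) in full detail.
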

\begin{proof}
Similar to the proof of Lemma \ref{lem:weyl-0}, the statement follows from Lemma \ref{lem:weyl-tensor} by checking relations and comparing dimensions.
\end{proof}

We also need the following statements.
\begin{lem}\label{lem:weyl-0-more}
We have the following properties for $W_m(0)$.
\begin{enumerate}
    \item The module $W_m(0)$ has a unique grading such that $W_m(0)$ is a graded $\g[t]$-module and such that the degree of the highest weight vector $w$ equals $0$.
    \item As a $\g$-module, $W_m(0)$ is isomorphic to $(\C^{1|1})^{\otimes m}$.
    \item A $\g[t]$-module $M$ is an irreducible subquotient of $W_m(0)$ if and only if $M$ has the form $L_{\bla}(0)$, where $\bla$ is a polynomial weight such that $|\bla|=m$.
\end{enumerate}
\end{lem}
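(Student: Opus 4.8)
The plan is to treat the three parts in turn, using as the main bridge the identification $W_m(0)\cong \mc V^\fkS/I_{\bm a}\mc V^\fkS$ with $\bm a=(0,\dots,0)$ from Lemma~\ref{lem:weyl-0} (here $m=n$), and reducing everything to the $\g$-module structure. Throughout I write $w$ for the cyclic generator, which corresponds to $\bar v^+$; the case $m=0$ is trivial ($W_0(0)$ is one-dimensional), so I assume $m\gge 1$. For (i), I would observe that when $\bm a=(0,\dots,0)$ the ideal $I_{\bm a}$ is generated by the \emph{homogeneous} elements $\sigma_i(\bm z)$, so $I_{\bm a}\mc V^\fkS$ is a graded $\g[t]$-submodule of the graded module $\mc V^\fkS$ (Lemma~\ref{lem:graded-com}); the quotient thus inherits a grading with $\deg w=\deg \bar v^+=0$, giving existence. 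Uniqueness is forced by cyclicity: since $W_m(0)$ is generated by $w$, it is spanned by the vectors $e_{i_1j_1}[r_1]\cdots e_{i_pj_p}[r_p]w$, and in any grading making $W_m(0)$ a graded $\g[t]$-module with $\deg w=0$ such a vector must have degree $r_1+\dots+r_p$; hence the grading is determined.

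For (ii), the key point is that $C:=e_{11}[0]+e_{22}[0]$ is central in $\ugt$ (the coefficients of $e_{11}(x)+e_{22}(x)$ are central), so it acts on the cyclic module $W_m(0)$ by the scalar by which it acts on $w$, namely $m$. Since $m\neq 0$, the odd operators $e:=e_{12}[0]$ and $f:=e_{21}[0]$ satisfy $e^2=f^2=0$ and $ef+fe=C=m$ (this is the supercommutator $[e_{12},e_{21}]=e_{11}+e_{22}$), so they generate a copy of the Clifford algebra $\cong\mathrm{Mat}_2(\C)$. A module over a matrix algebra is semisimple, so $W_m(0)$ decomposes as a direct sum of two-dimensional typical $\g$-irreducibles $L_{\bla}$ with $|\bla|=m$. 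Comparing weight multiplicities then finishes: the weight-$(m-\ell,\ell)$ space of $W_m(0)$ has dimension $\binom{m}{\ell}$, since it is the reduction modulo $I_{\bm a}$ of the free $\C[z_1,\dots,z_n]^\fkS$-module $(\mc V^\fkS)_{(n-\ell,\ell)}$ of rank $\binom{n}{\ell}$ (Lemma~\ref{lem explicit basis}), and this matches the weight multiplicities of $(\C^{1|1})^{\otimes m}$. Two semisimple modules with equal weight multiplicities have the same decomposition into irreducibles, hence are isomorphic as $\g$-modules.

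For (iii), the forward direction rests on the grading from (i): because $W_m(0)$ is finite-dimensional and nonnegatively graded with $\deg e_{ij}[r]=r$, the ideal $\g\otimes t\C[t]$ raises degree by at least one and hence acts nilpotently on $W_m(0)$, and so on every subquotient. On an irreducible subquotient $M$ the submodule $(\g\otimes t\C[t])M$ is either $0$ or $M$; the latter is impossible by nilpotency, so $\g\otimes t\C[t]$ annihilates $M$. Thus $M$ factors through the evaluation $\g[t]\to\g$ at $0$ and has the form $L_{\bla}(0)$, and by (ii) its $\g$-restriction occurs in $(\C^{1|1})^{\otimes m}$, forcing $\bla$ to be a polynomial weight with $|\bla|=m$. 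Conversely, by (ii) the $\g$-composition factors of $W_m(0)$ are exactly the $L_{\bla}$ with $\bla$ polynomial and $|\bla|=m$, each with positive multiplicity; since every $\g[t]$-composition factor is some $L_{\bm\mu}(0)$ restricting to $L_{\bm\mu}$ over $\g$, the multiset of these $\bm\mu$ coincides with the $\g$-composition factors, so each such $L_{\bla}(0)$ does appear.

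The main obstacle is the semisimplicity in (ii): parts (i) and (iii) are formal consequences of the grading and a multiplicity bookkeeping, but complete reducibility as a $\g$-module genuinely uses the nonvanishing scalar $m$ of the central element $C$. The Clifford-algebra observation is the cleanest route I see; alternatively one may invoke that typical finite-dimensional $\g$-irreducibles are projective, so that any module on which $C$ acts by a nonzero scalar is automatically semisimple.
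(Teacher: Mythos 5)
Your proof is correct, and it follows the same route the paper intends: everything is funneled through the identification $W_m(0)\cong \mc V^\fkS/I_{\bm a}\mc V^\fkS$ of Lemma \ref{lem:weyl-0} (the paper's own proof is a one-line reference to exactly that). The one place where you add genuine content is part (ii): the paper's "follows from the construction" leaves implicit why equality of weight multiplicities with $(\C^{1|1})^{\otimes m}$ determines the $\g$-module structure, and for $\gl(1|1)$ this really does require semisimplicity, since non-split extensions of atypical simples share characters with semisimple modules. Your observation that $e_{12},e_{21}$ generate a copy of $\mathrm{Mat}_2(\C)$ once the central element $e_{11}+e_{22}$ acts by the nonzero scalar $m$ (equivalently, that the typical block is semisimple) is the correct and cleanest way to close that gap; one should just note in passing that the resulting two-dimensional simple constituents are $\h$-stable because $W_m(0)$ is a weight module, so the decomposition is one of $\g$-modules and not merely of modules over the Clifford subalgebra. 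Your treatment of (i) via homogeneity of the $\sigma_i(\bm z)$ and of (iii) via nilpotency of $\g\otimes t\C[t]$ on a nonnegatively graded finite-dimensional module is exactly the standard argument the paper is invoking.
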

\begin{proof}
The first two statements follows from Lemma \ref{lem:weyl-0} and the construction of $\mc V^{\fkS}/I_{\bm a}\mc V^{\fkS}$. The last statement follows from the first two.
\end{proof}

\begin{lem}\label{lem:weyl-b-more}
Let $b\in \C$. We have the following properties for $W_m(b)$.
\begin{enumerate}
    \item As a $\g$-module, $W_m(b)$ is isomorphic to $(\C^{1|1})^{\otimes m}$.
    \item A $\g[t]$-module $M$ is an irreducible subquotient of $W_m(b)$ if and only if $M$ has the form $L_{\bla}(b)$, where $\bla$ is a polynomial weight such that $|\bla|=m$.
\end{enumerate}
\end{lem}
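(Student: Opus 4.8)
The plan is to reduce everything to the already-established case $b=0$ (Lemma \ref{lem:weyl-0-more}) by means of the shift automorphism $\rho_b$. The key observation is that $\rho_b$ acts on $\g[t]=\g\otimes\C[t]$ by $\rho_b(e_{ij}\otimes f(t))=e_{ij}\otimes f(t+b)$; this follows directly from $\rho_b(e_{ij}(x))=e_{ij}(x-b)$ by expanding $(x-b)^{-r-1}$ in powers of $x^{-1}$. In particular $\rho_b$ fixes the constant subalgebra $\g=\g\otimes 1$ pointwise, and $\rho_b(e_{21}\otimes(t-b)^m)=e_{21}\otimes t^m=e_{21}[m]$.

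First I would identify $W_m(b)$ with the pullback $W_m(0)(b)$ as $\g[t]$-modules. Let $w$ be the cyclic generator of $W_m(0)$, regarded now as a vector of $W_m(0)(b)$ (which is again cyclic on $w$). Using the displayed formula for $\rho_b$ together with the defining relations of $W_m(0)$, one checks that in $W_m(0)(b)$ the vector $w$ satisfies $e_{11}(x)w=\tfrac{m}{x-b}w=\eta'(x)/\eta(x)\,w$, $e_{22}(x)w=e_{12}(x)w=0$, and $(e_{21}\otimes\eta(t))w=0$ for $\eta(x)=(x-b)^m$, the last one because $\rho_b(e_{21}\otimes(t-b)^m)=e_{21}[m]$ annihilates $w$ in $W_m(0)$. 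By the universal property of $W_m(b)$ this yields a surjective $\g[t]$-homomorphism $W_m(b)\twoheadrightarrow W_m(0)(b)$. Since pulling back along the automorphism $\rho_b$ preserves the underlying space, $\dim W_m(0)(b)=\dim W_m(0)=2^m$ by Lemma \ref{lem:weyl-0}, while $\dim W_m(b)\lle 2^m$; hence the surjection is an isomorphism $W_m(b)\cong W_m(0)(b)$.

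Part (i) is then immediate: since $\rho_b$ restricts to the identity on $\g\otimes 1$, the $\g$-module structures of a $\g[t]$-module and of its $\rho_b$-pullback coincide, so $W_m(b)\cong W_m(0)(b)\cong W_m(0)\cong(\C^{1|1})^{\otimes m}$ as $\g$-modules, the last isomorphism being Lemma \ref{lem:weyl-0-more}(ii). For part (ii), I would use that pulling back along the invertible automorphism $\rho_b$ is an exact equivalence and therefore carries composition series to composition series: a $\g[t]$-module $N$ is an irreducible subquotient of $W_m(0)$ if and only if $N(b)$ is an irreducible subquotient of $W_m(0)(b)\cong W_m(b)$. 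By Lemma \ref{lem:weyl-0-more}(iii) the irreducible subquotients of $W_m(0)$ are exactly the $L_\bla(0)$ with $\bla$ a polynomial weight and $|\bla|=m$, and $L_\bla(0)(b)=L_\bla(b)$ because $\mathsf{ev}_0\circ\rho_b=\mathsf{ev}_b$. This gives the claimed description.

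The only genuinely delicate point is the bookkeeping in the second paragraph: getting the direction of the shift $t\mapsto t+b$ right, so that the relation $(e_{21}\otimes(t-b)^m)w=0$ defining $W_m(b)$ transports to the relation $e_{21}[m]w=0$ defining $W_m(0)$, rather than to some shifted relation. Once this sign is pinned down, the remaining steps are formal consequences of the facts that $\rho_b$ is an automorphism fixing $\g$ and of the $b=0$ results.
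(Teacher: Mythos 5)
Your argument is correct and is exactly the intended route: the paper's proof is the one-line reduction ``It follows from Lemma \ref{lem:weyl-0-more},'' which implicitly uses the same shift automorphism $\rho_b$ (fixing $\g\otimes 1$ and sending $e_{21}\otimes(t-b)^m$ to $e_{21}[m]$) that you spell out. Your identification $W_m(b)\cong W_m(0)(b)$ and the transport of composition series along the pullback equivalence are precisely the details the paper leaves to the reader.
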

\begin{proof}
It follows from Lemma \ref{lem:weyl-0-more}.
\end{proof}

\begin{cor}
A $\g[t]$-module $M$ is an irreducible subquotient of $\bigotimes_{s=1}^k W_{n_s}(b_s)$ if and only if $M$ has the form $\bigotimes_{s=1}^k L_{\bla^{(s)}}(b_s)$, where $\bla^{(s)}$ is a polynomial weight such that $|\bla^{(s)}|=n_s$ for each $1\lle s\lle k$.
\end{cor}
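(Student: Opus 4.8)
The plan is to identify, in one stroke, the complete multiset of Jordan--Hölder factors of $\bigotimes_{s=1}^k W_{n_s}(b_s)$; this yields both implications of the asserted equivalence at once. Since $\dim W_{n_s}(b_s)=2^{n_s}$, the tensor product is finite-dimensional, hence of finite length, so its irreducible subquotients coincide with its composition factors and are independent of the chosen composition series by Jordan--Hölder. The structural heart of the argument is that, for pairwise distinct $b_s$, tensoring commutes with passage to composition series.

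First I would build a filtration of the tensor product out of composition series of its factors. Fix a composition series $0=W^{(0)}_{s}\subset W^{(1)}_{s}\subset\dots\subset W_{n_s}(b_s)$ for each $s$; by Lemma \ref{lem:weyl-b-more}(2) every successive quotient is isomorphic to some $L_{\bla^{(s)}}(b_s)$ with $\bla^{(s)}$ a polynomial weight satisfying $|\bla^{(s)}|=n_s$, and conversely each such $L_{\bla^{(s)}}(b_s)$ occurs. Because the coproduct on $\g[t]$ is primitive, $\Delta(X)=X\otimes 1+1\otimes X$, any submodule tensored into one slot (with the full modules in the remaining slots) is again a $\g[t]$-submodule; refining slot by slot I obtain a filtration of $\bigotimes_{s=1}^k W_{n_s}(b_s)$ whose associated graded pieces are exactly the tensor products $\bigotimes_{s=1}^k M_s$, where $M_s$ runs independently over the composition factors of $W_{n_s}(b_s)$, each combination occurring at least once.

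Next I would note that each of these graded pieces is already irreducible: since $b_1,\dots,b_k$ are pairwise distinct and each $L_{\bla^{(s)}}$ is a finite-dimensional irreducible $\g$-module, the tensor product $\bigotimes_{s=1}^k L_{\bla^{(s)}}(b_s)$ is an irreducible $\g[t]$-module by the irreducibility criterion for tensor products of evaluation modules at distinct points recalled in Section \ref{sec notation}. Hence the filtration constructed above is in fact a composition series, and by Jordan--Hölder its factors are precisely the irreducible subquotients of $\bigotimes_{s=1}^k W_{n_s}(b_s)$. These are exactly the modules $\bigotimes_{s=1}^k L_{\bla^{(s)}}(b_s)$ with $|\bla^{(s)}|=n_s$ for every $s$, which proves both directions simultaneously.

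The only genuinely delicate point is the bookkeeping that the associated graded of the tensor-product filtration is the tensor product of the associated gradeds. This is standard for modules over a Hopf superalgebra, but it must be carried out with care for the Koszul signs coming from the $\Z_2$-grading, since $\Delta(X)(u\otimes v)=Xu\otimes v+(-1)^{|X|\,|u|}u\otimes Xv$; one checks that the signs do not obstruct either the submodule property of the filtration steps or the identification of the subquotients. Everything else reduces to Lemma \ref{lem:weyl-b-more} and to the distinct-points irreducibility, both already at hand.
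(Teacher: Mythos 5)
Your argument is correct and is exactly the route the paper takes: its one-line proof cites part (ii) of Lemma \ref{lem:weyl-b-more}, the irreducibility of $\bigotimes_{s=1}^k L_{\bla^{(s)}}(b_s)$ for distinct $b_s$, and the Jordan--H\"older theorem, which are precisely the three ingredients you assemble. You merely make explicit the standard slot-by-slot filtration (valid here because the coproduct is primitive and the Koszul signs do not affect closure under the action), which the paper leaves implicit.
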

\begin{proof}
It follows from part (ii) of Lemma \ref{lem:weyl-b-more}  and the irreducibility of $\bigotimes_{s=1}^k L_{\bla^{(s)}}(b_s)$, and the Jordan-H\"{o}lder theorem.
\end{proof}

\section{Main theorems}\label{sec main thms}
\subsection{The algebra $\mc{O}_{l}$}\label{sec o_l}
Let $\Omega_{l}$ be the $n$-dimensional affine space with coordinates $f_1,\dots,f_l$, $g_1$, $\dots$, $g_{n-l-1}$ and $\vSi_n$. Introduce two polynomials 
\beq\label{eq poly f g}
f(x)=x^l+\sum_{i=1}^{l}f_ix^{l-i},\quad g(x)=x^{n-l-1}+\sum_{i=1}^{n-l-1}g_ix^{n-l-i-1}.
\eeq
Denote by $\mc O_l$ the algebra of regular functions on $\Omega_{l}$, namely $$\mc O_l=\C[f_1,\dots,f_l,g_1,\dots,g_{n-l-1},\vSi_n].$$ Define the degree function by
\[
\deg f_i=i,\qquad \deg g_j=j,\qquad \deg\vSi_n=n,
\]for all $i=1,\dots,l$ and $j=1,\dots,n-l-1$. The algebra $\mc O_l$ is graded with the graded character given by
\beq\label{eq ch O}
\ch(\mc O_l)=\frac{1}{(q)_l(q)_{n-l-1}(1-q^n)}.
\eeq

Let $\varSigma_1,\dots,\varSigma_{n-1}$ be the elements of $\mc O_l$ such that
\beq\label{eq si q=1}
nf(x)g(x)=nx^{n-1}+\sum_{i=1}^{n-1}(-1)^i(n-i)\varSigma_i x^{n-1-i}.
\eeq
The homomorphism
\beq\label{eq inj sym}
\pi_l:\C[z_1,\dots,z_n]^\fkS\to \mc O_l,\qquad \sigma_i(\bm z)\mapsto \vSi_i,\qquad i=1,\dots,n,
\eeq
is injective and induces a $\C[z_1,\dots,z_n]^\fkS$-module structure on $\mc O_l$.

Express $f'(x)g(x)$ as follows,
\beq\label{eq G q=1}
nf'(x)g(x)=nlx^{n-2}+\sum_{i=1}^{n-2}G_{i}x^{n-2-i},
\eeq
where $G_{i}\in\mc O_l$. 

\begin{lem}
The elements $G_{i}$ and $\varSigma_j$, $i=1,\dots,n-2$, $j=1,\dots, n$, generate the algebra $\mc O_l$.\qed
\end{lem}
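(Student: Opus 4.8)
The plan is to prove the stronger statement that each of the coordinate generators $f_1,\dots,f_l$, $g_1,\dots,g_{n-l-1}$, $\vSi_n$ of $\mc O_l$ already lies in the subalgebra generated by the listed elements $G_i$ and $\varSigma_j$. Since $\vSi_n=\varSigma_n$ is itself one of the $\varSigma_j$, the whole content is to recover the coefficients $f_i$ and $g_j$ as polynomials in the $G$'s and $\varSigma$'s; I would do this by reading off coefficients of $x$ from the two defining identities \eqref{eq si q=1} and \eqref{eq G q=1} and comparing them at matching powers.

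Writing $f(x)=\sum_{a\gge 0}f_ax^{l-a}$ and $g(x)=\sum_{b\gge 0}g_bx^{n-l-1-b}$ with $f_0=g_0=1$, the coefficient of $x^{n-1-i}$ in $f(x)g(x)$ is $c_i:=\sum_{a+b=i}f_ag_b$, and \eqref{eq si q=1} identifies $c_i$ with a nonzero scalar multiple of $\varSigma_i$ for $1\lle i\lle n-1$. Since differentiation weights each $f_a$ by $l-a$, the coefficient of $x^{n-1-i}$ in $xf'(x)g(x)$ is $\sum_{a+b=i}(l-a)f_ag_b$, so \eqref{eq G q=1} shows that $d_i:=\sum_{a+b=i}af_ag_b=lc_i-G_i/n$ lies in the algebra generated by $\varSigma_i$ and $G_i$ for $1\lle i\lle n-2$. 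The crucial observation is that $c_i=f_i+g_i+P_i$ and $d_i=i\,f_i+Q_i$, where $P_i,Q_i$ are polynomials only in the coefficients $f_a,g_b$ with indices $a,b<i$. Thus, modulo such lower-index terms, the passage from the top-degree unknowns $(f_i,g_i)$ to $(c_i,d_i)$ is governed by the matrix $\bigl(\begin{smallmatrix}1&1\\ i&0\end{smallmatrix}\bigr)$, whose determinant $-i$ is nonzero because $i\gge 1$; this invertibility is the heart of the argument.

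From here I would run an induction on $i$. For $i=1$ one reads off $f_1=d_1$ and $g_1=c_1-d_1$ directly. Assuming every $f_a,g_b$ with $a,b<i$ has already been written as a polynomial in the $G$'s and $\varSigma$'s, the two relations above become a triangular linear system: $f_i=(d_i-Q_i)/i$ recovers $f_i$, and then $g_i=(c_i-P_i)-f_i$ recovers $g_i$. This exhibits each $f_i$ and $g_j$ as a polynomial in the listed generators, and since these, together with $\vSi_n=\varSigma_n$, generate $\mc O_l$, the lemma follows.

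The one step needing care — and the main obstacle — is the bookkeeping at the extreme indices, since $G_i$ is defined only for $i\lle n-2$ whereas the coefficients run up to $\max(l,n-l-1)=n-1$. The resolution is that $d_i$ (hence $G_i$) is required only when both $f_i$ and $g_i$ are genuine variables, i.e.\ when $i\lle \min(l,n-l-1)\lle(n-1)/2\lle n-2$, so the needed $G_i$ is always in range; at larger indices exactly one of $f_i,g_i$ vanishes identically and the surviving coefficient is recovered from $c_i$ (equivalently $\varSigma_i$) alone, which is available for all $i\lle n-1$. Checking this case split, together with the degenerate small-$n$ cases where no $G_i$ is even defined, closes the argument.
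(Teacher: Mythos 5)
Your argument is correct. The paper states this lemma without proof (it is marked as immediate), and your triangular-elimination argument --- recovering $c_i=\sum_{a+b=i}f_ag_b$ from $\varSigma_i$ via \eqref{eq si q=1}, recovering $d_i=\sum_{a+b=i}af_ag_b$ from $G_i$ via \eqref{eq G q=1}, and inverting the unitriangular-modulo-lower-terms system $c_i\equiv f_i+g_i$, $d_i\equiv i f_i$ by induction on $i$, with the correct bookkeeping that $G_i$ is only needed for $i\lle\min(l,n-l-1)\lle n-2$ --- is exactly the routine verification being omitted.
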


\begin{lem}\label{lem deg O}
The elements $G_{i}$ and $\varSigma_j$ are homogeneous of degrees $i$ and $j$, respectively, for $i=1,\dots,n-2$, $j=1,\dots, n$.\qed
\end{lem}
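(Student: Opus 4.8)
The plan is to prove Lemma \ref{lem deg O} by tracking how the degree grading on $\mc O_l$ (specified by $\deg f_i=i$, $\deg g_j=j$, $\deg\vSi_n=n$) interacts with the two defining generating identities \eqref{eq si q=1} and \eqref{eq G q=1}. The key observation is that both $f(x)$ and $g(x)$ become \emph{weighted-homogeneous} polynomials once we assign the variable $x$ the degree $1$: in $f(x)=x^l+\sum_{i=1}^l f_ix^{l-i}$ every term $f_ix^{l-i}$ has total degree $i+(l-i)=l$, so $f(x)$ is homogeneous of weighted degree $l$; similarly each term $g_jx^{n-l-1-j}$ of $g(x)$ has total degree $j+(n-l-1-j)=n-l-1$, so $g(x)$ is homogeneous of weighted degree $n-l-1$.

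First I would use this to read off the degrees of the $\vSi_i$. Since $f(x)g(x)$ is a product of weighted-homogeneous polynomials of degrees $l$ and $n-l-1$, it is weighted-homogeneous of total degree $n-1$. Comparing with the right-hand side of \eqref{eq si q=1}, the coefficient of $x^{n-1-i}$ carries weighted degree $(n-1)-(n-1-i)=i$, and this coefficient is (up to the nonzero scalar $(-1)^i(n-i)/n$, valid since $0\lle i\lle n-1<n$) exactly $\vSi_i$. Hence $\vSi_i$ is homogeneous of degree $i$ for $i=1,\dots,n-1$; the degree of $\vSi_n$ is $n$ by definition, completing the claim for all the $\varSigma_j$.

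Next I would treat the $G_i$ by the same device. Differentiating $f(x)$ in $x$ drops the weighted degree by $1$ (each monomial $x^m$ has weighted degree $m$ and $\pa_x x^m=mx^{m-1}$ has weighted degree $m-1$), so $f'(x)$ is weighted-homogeneous of degree $l-1$, and therefore $f'(x)g(x)$ is weighted-homogeneous of total degree $(l-1)+(n-l-1)=n-2$. Comparing with \eqref{eq G q=1}, the coefficient of $x^{n-2-i}$ has weighted degree $(n-2)-(n-2-i)=i$, and that coefficient is $G_i$, so $G_i$ is homogeneous of degree $i$ for $i=1,\dots,n-2$.

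There is no serious obstacle here; the entire content is the bookkeeping observation that assigning $\deg x=1$ makes $f,g,f',g$ weighted-homogeneous, after which the degrees fall out by matching powers of $x$ in \eqref{eq si q=1} and \eqref{eq G q=1}. The only point demanding a little care is to confirm that the scalar factors $(-1)^i(n-i)/n$ relating the $x$-coefficients to $\vSi_i$ are nonzero in the relevant range, which holds because $n-i\ge 1$ for $i\lle n-1$; this guarantees that homogeneity of the coefficient transfers to homogeneity of $\vSi_i$ itself.
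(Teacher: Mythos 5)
Your argument is correct and is precisely the bookkeeping the paper leaves implicit (the lemma is stated with no proof): assigning $x$ degree $1$ makes $f$, $g$, and $f'$ weighted-homogeneous of degrees $l$, $n-l-1$, and $l-1$, so matching coefficients of powers of $x$ in \eqref{eq si q=1} and \eqref{eq G q=1} gives the stated degrees, with the nonvanishing of $(-1)^i(n-i)$ for $i\lle n-1$ handling the passage from the coefficient to $\varSigma_i$ itself.
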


\subsection{Bethe algebra}\label{sec ber}
We call the unital subalgebra of $\ugt$ generated by the coefficients of 
\[
e_{11}(x)+e_{22}(x),\quad 2\mathscr H(x)=e_{11}(x)e_{11}(x)-e_{12}(x)e_{21}(x)+e_{21}(x)e_{12}(x)-e_{22}(x)e_{22}(x)
\]
the \emph{Bethe algebra}. We denote the Bethe algebra by $\mc B$. Note that the coefficients of $e_{11}(x)+e_{22}(x)$ generate the center of $\ugt$.

\begin{lem}[{\cite{MR14}}]
The Bethe algebra $\mc B$ is commutative. The Bethe algebra $\mc B$ commutes with the subalgebra $\mathrm{U}(\g)\subset \ugt$.\qed
\end{lem}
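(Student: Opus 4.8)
The plan is to reduce both assertions to properties of the generating series $\mathscr H(x)$ from \eqref{eq:H-transfer} and to treat these via the $r$-matrix formalism for the generating matrix $L(x)=((-1)^{|i|}e_{ij}(x))_{i,j=1,2}$. Since the coefficients of $e_{11}(x)+e_{22}(x)$ are central in $\ugt$, they commute with everything in $\ugt$; hence $\mc B$ is commutative once $[\mathscr H(x),\mathscr H(y)]=0$, and $\mc B$ commutes with $\mathrm U(\g)$ once $[\mathscr H(x),e_{kl}]=0$ for all $k,l$. A short computation gives $\str L(x)^2=\sum_{a,b}(-1)^{|b|}e_{ab}(x)e_{ba}(x)=2\mathscr H(x)$, so both remaining claims become properties of $\str L(x)^2$.

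First I would establish the $\g$-invariance. From the defining relations one reads off $[e_{kl},e_{ab}(x)]=\delta_{la}e_{kb}(x)-(-1)^{(|k|+|l|)(|a|+|b|)}\delta_{kb}e_{al}(x)$, so the currents $e_{ab}(x)$ transform under $\mathrm{ad}\,\g$ exactly as the constant generators $e_{ab}$ do in the adjoint representation. Consequently $\str L(x)^2$ is the image under $e_{ab}\mapsto e_{ab}(x)$ of the quadratic Casimir $\sum_{a,b}(-1)^{|b|}e_{ab}e_{ba}$ of $\g$, and the cancellation proving this Casimir central in $\Uone$ carries over verbatim to give $[\mathscr H(x),e_{kl}]=0$. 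This settles the commutation of $\mc B$ with $\mathrm U(\g)$.

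For the commutativity $[\mathscr H(x),\mathscr H(y)]=0$ I would use the Gaudin--Sklyanin $r$-matrix relation. Writing $L_1(x)=L(x)\otimes1$ and $L_2(y)=1\otimes L(y)$ on $\ugt\otimes\End(\C^{1|1})^{\otimes2}$, and letting $P$ be the graded permutation of the two auxiliary copies of $\C^{1|1}$, the relations of $\g[t]$ are equivalent to a relation of the form
\[
[L_1(x),L_2(y)]=\frac{1}{x-y}\big[P,\,L_1(x)+L_2(y)\big],
\]
verified directly from the structure constants. Given this, I would expand $[\str_1 L_1(x)^2,\str_2 L_2(y)^2]=\str_{12}[L_1(x)^2,L_2(y)^2]$ using the super-Leibniz rule and the relation above; every resulting term is a double supertrace of a product involving $P$, and using $P^2=\id$, the intertwining property $PL_1(u)=L_2(u)P$, and the cyclicity of the supertrace, these terms cancel in pairs. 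Equivalently, this is the coefficient-wise form of the commutativity of the Berezinian $\mathrm{Ber}(\pa_x-L(x))$ from \cite{MR14}, among whose coefficients $2\mathscr H(x)$ occurs.

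The main obstacle is the commutativity statement, and within it the bookkeeping of parity signs. The graded permutation and the supertrace each contribute parity-dependent signs, and because the entries of $L(x)$ are non-commuting operators the telescoping of the $r$-matrix terms must be carried out at the operator level, with the super-Leibniz rule, rather than as a classical Poisson computation; confirming that every such term cancels with the correct sign is where the care lies. The $\g$-invariance, by contrast, is a routine super-Casimir computation.
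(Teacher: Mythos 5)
Your proposal is correct, but be aware that the paper itself gives no argument for this lemma: it is quoted from \cite{MR14}, where commutativity is established for the whole family of Berezinian coefficients $\mathcal G_r(x)$ of $\mathrm{Ber}(\mathfrak Z(x,\pa_x))$, and the statement for $\mc B$ then follows because $\mathscr H(x)=\tfrac12\mathcal G_1(x)^2-\mathcal G_2(x)+\tfrac12\pa_x\mathcal G_1(x)$ (Example \ref{eg:transfer}) expresses the generators of $\mc B$ through $\mathcal G_1,\mathcal G_2$. Your route is more elementary and self-contained. The reduction to $[\mathscr H(x),\mathscr H(y)]=0$ and $[\mathscr H(x),e_{kl}]=0$ is legitimate since the coefficients of $e_{11}(x)+e_{22}(x)$ are central (a one-line check from the relations), and the identification $\str L(x)^2=2\mathscr H(x)$ is right. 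One point deserves emphasis in your Casimir step: commutators of two currents at the \emph{same} point $x$ are not governed by the $\gl_{1|1}$ relations (they produce derivatives of currents), so ``carries over verbatim'' is only valid because the cancellation in $[e_{kl},\sum_{a,b}(-1)^{|b|}e_{ab}e_{ba}]$ happens term by term in the free expansion, with no reordering of products --- which it does, so your argument stands, but this should be said. For the commutativity, the linear $r$-matrix relation does hold in the super setting with the graded permutation, and since $L(x)$ is an even matrix the supertrace is cyclic on the relevant products, so the Sklyanin-type cancellation goes through; the sign bookkeeping you flag is real but routine. What your approach buys is independence from the Manin-matrix/Berezinian machinery of \cite{MR14}; what it loses is any information about the higher transfer matrices $\mathcal G_r(x)$, $r\gge 3$, which the citation covers for free and which the paper uses later (Corollary \ref{universal oper}).
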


Being a subalgebra of $\ugt$, the Bethe algebra $\mc B$ acts on any $\g[t]$-module $M$. Since $\mc B$ commutes with $\mathrm{U}(\g)$, the Bethe algebra preserves the subspace $(M)_{\bla}^\sing$ for any weight $\bla$. If $K\subset M$ is a $\mc B$-invariant subspace, then we call the image of $\mc B$ in $\End(K)$ the \textit{Bethe algebra associated with} $K$.

Let $\bm a=(a_1,\dots,a_n)\in\C^n$. Define $k\in \Z_{>0}$, a sequence of positive integers $\bm n=(n_1,\dots,n_k)$, and a sequence of distinct complex numbers $\bm b=(b_1,\dots,b_k)$ by \eqref{eq:a-b-relations}. Let $\bLa=(\bla^{(1)},\dots,\bla^{(k)})$ be a sequence of polynomial $\g$-weights such that $|\bla^{(s)}|=n_s$. 

We study the action of the Bethe algebra $\mc B$ on the following $\mc B$-modules:
\[
\mc M_{l}=(\mc V^\fkS)_{(n-l,l)}^\sing, \quad
\mc M_{l,\bm a}=(\bigotimes_{s=1}^k W_{n_s}(b_s))_{(n-l,l)}^\sing,\quad 
\mc M_{l,\bLa,\bm b}=(\bigotimes_{s=1}^k L_{\bla^{(s)}}(b_s))_{(n-l,l)}^\sing.
\]
Denote the Bethe algebras associated with $\mc M_{l}$, $\mc M_{l,\bm a}$, $\mc M_{l,\bLa,\bm b}$ by $\mc B_{l}$, $\mc B_{l,\bm a}$, $\mc B_{l,\bLa,\bm b}$, respectively. For any element $X\in \mc B$, we denote by $X(\bm z)$, $X(\bm a)$, $X(\bLa,\bm b)$ the respective linear operators.

Since by Lemma \ref{lem cyclic sym graded} the $\g[t]$-module $\mc V^\fkS$ is generated by $v_1^{\otimes n}=v_1\otimes\dots\otimes v_1$, the series $e_{11}(x)+e_{22}(x)$ acts on $\mc V^\fkS$ by multiplication by the series
\[
\sum_{i=1}^n \frac{1}{x-z_i}=\sum_{i=1}^n \sum_{j=0}^\infty z_i^j x^{-j-1}.
\]
Therefore there exist unique central elements $C_1,\dots,C_n$ of $\ugt$ of minimal degrees such that each $C_i$ acts on $\mc V^\fkS$ by multiplication by $\sigma_i(\bm z)$. 

Define $B_{i} \in \mc B $ by
\beq\label{eq diff poly}
\Big(x^n+\sum_{i=1}^n(-1)^iC_ix^{n-i}\Big)\mc T(x)=x^{n} \sum_{i=2}^{\infty} B_ix^{-i},
\eeq
where $\mc T(x)$ is defined in \eqref{eq:T-transfer}.

\begin{lem}\label{lem:B-vanish}
We have $B_i(\bm z)=0$ for $i>n$ and $B_2(\bm z)=nl$.
\end{lem}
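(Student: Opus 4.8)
The plan is to determine the action of $\mc T(x)$ on $\mc V^\fkS$ completely by analyzing its pole structure in $x$, and then read off the two claims. First I would record that on $\mc V^\fkS$ the central series $e_{11}(x)+e_{22}(x)$ acts by the scalar $\zeta(x):=\sum_{s=1}^n\frac{1}{x-z_s}$ (because $e_{11}+e_{22}$ acts as the identity on each factor $\C^{1|1}$), and that the operator $D(x):=x^n+\sum_{i=1}^n(-1)^iC_ix^{n-i}$ from \eqref{eq diff poly} acts by multiplication by $P(x):=\prod_{s=1}^n(x-z_s)$, since $C_i$ acts by $\sigma_i(\bm z)$; note $\zeta(x)=P'(x)/P(x)$. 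Thus the whole statement reduces to understanding $P(x)\,\mc T(x)$ as a function of $x$.

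The key step is to show that, acting on $\mc V^\fkS$, the series $\mc T(x)$ has only \emph{simple} poles, located at $z_1,\dots,z_n$. Writing $\mc T(x)=\tfrac12(\dot e_{11}(x)+\dot e_{22}(x))+\tfrac12(e_{11}(x)+e_{22}(x))^2-\mathscr H(x)=\tfrac12\zeta'(x)+\tfrac12\zeta(x)^2-\mathscr H(x)$, the double-pole coefficient of $\tfrac12\zeta'(x)$ at $z_s$ is $-\tfrac12$ while that of $\tfrac12\zeta(x)^2$ is $+\tfrac12$, so they cancel; and $\mathscr H(x)$ has no double pole at $z_s$ at all, since its $(x-z_s)^{-2}$-coefficient is the local $\g$-Casimir on the $s$-th tensor factor $\C^{1|1}=L_{\omega_1}$, equal to $\tfrac12(\alpha_s(\alpha_s-1)-\beta_s(\beta_s+1))$ with $(\alpha_s,\beta_s)=(1,0)$, cf. \eqref{eq:H-gen}, hence zero. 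Consequently $\mc T(x)=\sum_{s=1}^n R_s/(x-z_s)$ for some $x$-independent operators $R_s$, and therefore
\[
P(x)\,\mc T(x)=\sum_{s=1}^n R_s\prod_{i\neq s}(x-z_i)
\]
is a polynomial in $x$ of degree at most $n-1$; its leading coefficient is $\sum_{s}R_s$, which vanishes because $\mc T(x)=O(x^{-2})$. Hence $P(x)\mc T(x)$ has degree at most $n-2$, i.e. $D(x)\mc T(x)=\sum_{i\ge 2}B_ix^{n-i}$ carries no negative powers of $x$, which is exactly the claim $B_i(\bm z)=0$ for $i>n$.

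For the value of $B_2$, I would note that $B_2$ is simply the coefficient of $x^{-2}$ in $\mc T(x)$ (the lower-degree terms of $D(x)$ only feed into lower powers), namely $\mc T_2=-\tfrac12(e_{11}[0]+e_{22}[0])+\tfrac12(e_{11}[0]+e_{22}[0])^2-\mathscr H_2$. On the weight space $(n-l,l)$ one has $e_{11}[0]+e_{22}[0]=n$, while on a singular vector $v$ of that weight the supercommutation relation $e_{12}[0]e_{21}[0]+e_{21}[0]e_{12}[0]=e_{11}[0]+e_{22}[0]$ together with $e_{12}[0]v=0$ gives $\mathscr H_2v=\tfrac12\big((n-l)^2-n-l^2\big)v=\tfrac12(n^2-2nl-n)v$. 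Substituting yields $\mc T_2=-\tfrac{n}{2}+\tfrac{n^2}{2}-\tfrac12(n^2-2nl-n)=nl$, so $B_2(\bm z)=nl$ on $\mc M_l$.

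I expect the only genuinely delicate point to be the pole analysis of the second paragraph — in particular verifying that the double poles of $\tfrac12\zeta'+\tfrac12\zeta^2$ cancel and that $\mathscr H(x)$ contributes none, so that multiplication by the single factor $P(x)$ (rather than $P(x)^2$) already removes all poles. Everything afterward is degree bookkeeping. As a fallback, both identities could instead be checked after specializing $\sigma_i(\bm z)\mapsto a_i$ at generic $\bm a$, where $\mc V^\fkS/I_{\bm a}\mc V^\fkS\cong\bigotimes_s L_{\omega_1}(b_s)$ with distinct $b_s$ and $\varphi_{\bLa,\bm b}=P'$ has simple roots, so that $\mc T(x)$ is diagonalizable with on-shell eigenvalues $P'(x)\,y'(x)/y(x)$ of degree $n-2$ and leading coefficient $nl$; a Zariski-density argument on the free $\C[z_1,\dots,z_n]^\fkS$-module $\mc M_l$ would then conclude. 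The direct pole computation is cleaner and avoids genericity.
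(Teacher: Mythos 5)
Your proof is correct, but it takes a genuinely different route from the paper's. The paper proves both claims by specialization: it evaluates $B_i(\bm z)$ at generic numerical $\bm z=\bm c$, invokes Theorem \ref{thm complete} to get an eigenbasis of on-shell Bethe vectors in $(V(\bm c))^\sing_{(n-l,l)}$, reads off from the explicit eigenvalue \eqref{eq:T-eigenvalue} together with \eqref{eq diff poly} that $\big(x^n+\sum_i(-1)^iC_ix^{n-i}\big)\mc T(x)$ acts as a polynomial in $x$ (of degree $n-2$ with leading coefficient $nl$), and concludes by density since $B_i(\bm z)$ is polynomial in $\bm z$. You instead perform a direct operator-level analysis: the double poles of $\tfrac12\zeta'+\tfrac12\zeta^2$ cancel, and $\mathscr H(x)$ contributes no double pole because the quadratic Casimir $\sum_{a,b}e_{ab}e_{ba}(-1)^{|b|}$ vanishes on $\C^{1|1}$ (equivalently, $\alpha(\alpha-1)-\beta(\beta+1)=0$ for $(\alpha,\beta)=(1,0)$); then degree bookkeeping plus $\mc T(x)=O(x^{-2})$ gives $B_i=0$ for $i>n$, and the weight/singularity computation gives $B_2=nl$. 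This buys you independence from the completeness theorem and the Bethe-ansatz eigenvalue formula, and it works uniformly in $\bm z$ (the identity can be checked formally in $\C[\bm z]((x^{-1}))$, or at distinct $\bm c$ followed by Zariski density, as you note); the paper's argument is shorter given that Theorem \ref{thm complete} and \eqref{eq:T-eigenvalue} are already in hand. Your computation of $B_2$ via $\mathscr H_2$ is a slight detour --- the paper's closing remark that $B_2=(e_{11}+e_{22})e_{22}-e_{21}e_{12}$ gives $nl$ immediately on a singular vector of weight $(n-l,l)$ --- but your version is equivalent and correct.
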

\begin{proof}
Let $V(\bm c)=\bigotimes_{i=1}^n \C^{1|1}(c_i)$, where $c_i\in \C$. Note that $B_i(\bm z)$ is a polynomial in $\bm z$ with values in $\End((V)_{(n-l,l)}^\sing)$. For any sequence of complex numbers $\bm c=(c_1,\dots,c_n)$, we can evaluate $B_i(\bm z)$ at $\bm z=\bm c$ to an operator on $(V(\bm c))_{(n-l,l)}^\sing$. By Theorem \ref{thm complete}, the Gaudin transfer matrix $\mathscr H(x)$ is diagonalizable and the Bethe ansatz is complete for $(V(\bm c))_{(n-l,l)}^\sing$ when $\bm c\in\C^n$ is generic. Hence by \eqref{eq:T-eigenvalue} and \eqref{eq diff poly} that $\big(x^n+\sum_{i=1}^n(-1)^iC_ix^{n-i}\big)\mc T(x)$ acts on $(V(\bm c))_{(n-l,l)}^\sing$ as a polynomial in $x$ for generic $\bm c$. In particular, it implies that $B_i$, $i>n$, acts on $(V(\bm c))_{(n-l,l)}^\sing$ by zero for generic $\bm c$.  Therefore $B_i(\bm z)$, $i>n$, is identically zero.

By the same reasoning, one shows that $B_2(\bm z)=nl$. Alternatively, it also follows from $B_2=(e_{11}+e_{22})e_{22}-e_{21}e_{12}$.
\end{proof}

\begin{lem}\label{lem gen B}
The elements $B_{i} (\bm z)$ and $C_j(\bm z)$, $i=3,\dots, n$, $j=1,\dots,n$, generate the algebra $\mc B_l$.
\end{lem}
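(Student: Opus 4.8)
The plan is to prove the two-sided inclusion $\mc B_l = \mc A$, where $\mc A$ denotes the subalgebra of $\End(\mc M_l)$ generated by $B_3(\bm z),\dots,B_n(\bm z)$ and $C_1(\bm z),\dots,C_n(\bm z)$.

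First I would replace the defining generators of $\mc B$. By \eqref{eq:T-transfer}, the series $\mc T(x)$ differs from $-\mathscr H(x)$ only by terms built from $e_{11}(x)+e_{22}(x)$ and its derivative, and conversely $\mathscr H(x)$ is recovered from $\mc T(x)$ and $e_{11}(x)+e_{22}(x)$. Hence $\mc B$ is equally generated by the coefficients of $e_{11}(x)+e_{22}(x)$ and of $\mc T(x)$, so $\mc B_l$ is generated by the images of these coefficients on $\mc M_l$. For the central half, the coefficients of $e_{11}(x)+e_{22}(x)$ act on $\mc V^\fkS$---and hence on $\mc M_l$---as multiplication by the power sums $p_r(\bm z)=\sum_i z_i^r$, since $e_{11}(x)+e_{22}(x)$ acts as $\sum_{i=1}^n (x-z_i)^{-1}$. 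As the $p_r$ generate $\C[\bm z]^\fkS=\C[\sigma_1(\bm z),\dots,\sigma_n(\bm z)]$ and $C_j(\bm z)$ acts as multiplication by $\sigma_j(\bm z)$, the image of the center coincides with the subalgebra generated by $C_1(\bm z),\dots,C_n(\bm z)$.

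It remains to handle the coefficients of $\mc T(x)(\bm z)$, and here the engine is the defining relation \eqref{eq diff poly}. On $\mc M_l$ the prefactor becomes the scalar monic polynomial $x^n+\sum_{i=1}^n(-1)^iC_i(\bm z)x^{n-i}=\prod_{i=1}^n(x-z_i)$, and by Lemma \ref{lem:B-vanish} the right-hand side truncates, so \eqref{eq diff poly} reduces to the finite identity
\[
\prod_{i=1}^n(x-z_i)\,\mc T(x)(\bm z)=\sum_{i=2}^n B_i(\bm z)\,x^{n-i}.
\]
Inverting the monic polynomial $\prod_{i=1}^n(x-z_i)$ as a power series in $x^{-1}$ produces coefficients that are polynomials in the $\sigma_j(\bm z)=C_j(\bm z)$; thus every coefficient of $\mc T(x)(\bm z)$ is a polynomial in $B_2(\bm z),\dots,B_n(\bm z)$ and $C_1(\bm z),\dots,C_n(\bm z)$. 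Since $B_2(\bm z)=nl$ is scalar by Lemma \ref{lem:B-vanish}, these coefficients in fact lie in $\mc A$, giving $\mc B_l\subseteq \mc A$. The reverse inclusion is immediate: each $B_i$ is by \eqref{eq diff poly} a polynomial in the coefficients of $\mc T(x)$ and the central $C_j$, so $B_i\in\mc B$ and $C_j\in\mc B$, whence $\mc A\subseteq\mc B_l$. The only delicate point---and the step I would write out most carefully---is the generating-function inversion: one must use the vanishing $B_i(\bm z)=0$ for $i>n$ to guarantee that the a priori infinitely many coefficients of $\mc T(x)(\bm z)$ are expressed through the finitely many elements $B_3,\dots,B_n$ together with the central generators.
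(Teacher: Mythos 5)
Your proof is correct and follows essentially the same route as the paper, whose own proof is just the one-line remark that the lemma ``follows from the definition of $\mc B$, \eqref{eq diff poly}, and Lemma \ref{lem:B-vanish}''; your write-up is simply the careful expansion of that remark, including the key inversion of $\prod_i(x-z_i)$ as a power series in $x^{-1}$ using $B_i(\bm z)=0$ for $i>n$.
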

\begin{proof}
It follows from the definition of $\mc B$, \eqref{eq diff poly}, and Lemma \ref{lem:B-vanish}.
\end{proof}

\begin{lem}\label{lem deg B}
The elements $B_i$ and $C_{j}$ are homogeneous of degrees $i-2$ and $j$, respectively, for $i=3,4,\dots$ and $j=1,\dots,n$.\qed
\end{lem}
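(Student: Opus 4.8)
The plan is to upgrade the $\Z_{\gge 0}$-grading on $\ugt$ (with $\deg e_{ij}[r]=r$) to a grading on generating series, and then read off all degrees by inspection of \eqref{eq:T-transfer} and \eqref{eq diff poly}. First I would fix the convention $\deg x=1$ and declare a series $A(x)=\sum_s A_sx^{-s}$ with $A_s\in\ugt$ to be \emph{homogeneous of degree $d$} when each coefficient $A_s$ is homogeneous of degree $d+s$ in $\ugt$. With this convention the expansion $e_{ij}(x)=\sum_{r\gge 0}e_{ij}[r]x^{-r-1}$ shows at once that every $e_{ij}(x)$ is homogeneous of degree $-1$, since the coefficient of $x^{-s}$ is $e_{ij}[s-1]$, of degree $s-1=(-1)+s$. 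The notion is multiplicative, a product of homogeneous series of degrees $d_1,d_2$ being homogeneous of degree $d_1+d_2$, and $\pa_x$ lowers the degree by $1$; both facts follow directly from the definition.

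Next I would read off the degree of $\mc T(x)$ from \eqref{eq:T-transfer}. Each of $\dot e_{11}(x),\dot e_{22}(x)$ is homogeneous of degree $-2$, being a derivative of a degree $-1$ series; $(e_{11}(x)+e_{22}(x))^2$ is a product of two degree $-1$ series, hence of degree $-2$; and $\mathscr H(x)=\tfrac12\sum_{a,b}e_{ab}(x)e_{ba}(x)(-1)^{|b|}$ from \eqref{eq:H-transfer} is likewise homogeneous of degree $-2$. Therefore $\mc T(x)$ is homogeneous of degree $-2$.

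The one point that requires a separate argument, and which I expect to be the main obstacle, is $\deg C_j=j$. Here I would use that $\mc V^\fkS$ is a \emph{graded} $\g[t]$-module (Lemma \ref{lem:graded-com}), so the structure map $\ugt\to\End(\mc V^\fkS)$ preserves degree, while multiplication by $\sigma_j(\bm z)$ is a homogeneous operator of degree $j$ because $\deg z_i=1$. Decomposing any central element $C$ acting as $\sigma_j(\bm z)$ into homogeneous components $C=\sum_d C^{(d)}$, where each $C^{(d)}$ is again central since the center of $\ugt$ is a graded subalgebra, the component $C^{(j)}$ must already act as $\sigma_j(\bm z)$ while all other $C^{(d)}$ act by $0$: a homogeneous central element of degree $d\ne j$ acts by a degree-$d$ operator, which cannot equal the nonzero degree-$j$ operator $\sigma_j(\bm z)$. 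Hence $C^{(j)}$ is central, homogeneous of degree $j$, and acts as $\sigma_j(\bm z)$, so the minimal-degree (and unique) such central element $C_j$ is forced to be $C^{(j)}$, of degree exactly $j$.

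Finally I would combine these. Since $\deg C_i=i$, the polynomial $x^n+\sum_{i=1}^n(-1)^iC_ix^{n-i}$ is homogeneous of degree $n$, each summand $C_ix^{n-i}$ having degree $i+(n-i)=n$, so the left-hand side of \eqref{eq diff poly} is homogeneous of degree $n-2$. Matching this against the right-hand side $\sum_{i\gge 2}B_ix^{n-i}$ and comparing coefficients of $x^{n-i}$ forces $\deg B_i+(n-i)=n-2$, that is $\deg B_i=i-2$, as claimed. Once the homogeneity of $C_j$ is in hand, the remainder is pure bookkeeping on the grading of the generating series, entirely parallel to the commutative computation underlying Lemma \ref{lem deg O}.
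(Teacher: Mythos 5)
Your argument is correct and is exactly the routine degree bookkeeping the paper has in mind when it states Lemma \ref{lem deg B} without proof (in parallel with Lemma \ref{lem deg O}): the grading convention on series in $x^{-1}$, the multiplicativity under products and the shift under $\pa_x$, the identification of $C_j$ with the degree-$j$ homogeneous component via the graded action on $\mc V^\fkS$, and the comparison of coefficients in \eqref{eq diff poly} are all as intended. No gaps.
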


\subsection{Main theorems}\label{sec main thm}
Recall from Proposition \ref{prop ch} that there exists a unique vector (up to proportionality) of degree $l(l+1)/2$ in $\mc M_{l}$ explicitly given by $$\mathfrak u_l:=e_{12}[0]e_{21}[0]e_{21}[1]\cdots e_{21}[l]v^+,$$ see Lemma \ref{lem free sing graded}.

Any commutative algebra $\mc A$ is a module over itself induced by left multiplication. We call it the \emph{regular representation of} $\mc A$. The dual space 
$\mc A^*$ is naturally an $\mc A$-module which is called the \emph{coregular representation}. A bilinear form $(\cdot|\cdot):\mc A\otimes \mc A\to \C$ is called \emph{invariant} if $(ab|c)=(a|bc)$ for all $a,b,c\in\mc A$. A finite-dimensional commutative algebra $\mc A$ admitting an invariant non-degenerate symmetric bilinear form $(\cdot|\cdot):\mc A\otimes \mc A\to \C$ is called a \emph{Frobenius algebra}. The regular and coregular representations of a  Frobenius algebra are isomorphic.

Let $M$ be an $\mathcal A$-module and $\mathcal E:\mathcal A\to \C$ a character, then the {\it $\mc A$-eigenspace associated to $\mc E$} in $M$ is defined by $\bigcap_{a\in \mathcal A}\ker(a|_M-\mathcal E(a))$. The {\it generalized $\mc A$-eigenspace associated to $\mc E$} in $M$ is defined by $\bigcap_{a\in \mathcal A}\big(\bigcup_{m=1}^\infty\ker(a|_M-\mathcal E(a))^m\big)$.

\begin{thm}\label{thm VS}
The action of the  Bethe algebra $\mc B_l$ on $\mc M_{l}$ has the following properties.
\begin{enumerate}
\item The map $\eta_l:G_{i}\mapsto B_{i+2}(\bm z)$, $\varSigma_j\mapsto C_j(\bm z)$, $i=1,\dots,n-2$, $j=1,\dots,n$, extends uniquely to an isomorphism $\eta_l:\mc O_l\to \mc B_l$ of graded algebras. Moreover, the isomorphism $\eta_l$ is an isomorphism of $\C[z_1,\dots,z_n]^\fkS$-modules.
	
\item The map $\rho_l:\mc O_l\mapsto \mc M_{l}$, $F\mapsto \eta_l(F)\mathfrak u_l$, is an isomorphism of graded vector spaces identifying the $\mc B_l $-module $\mc M_{l}$ with the regular representation of $\mc O_l$.
\end{enumerate}
\end{thm}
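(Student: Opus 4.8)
The plan is to deduce both parts from one computation: the action of the generators $C_j(\bm z)$ and $B_{i+2}(\bm z)$ on the Bethe eigenbasis at a generic specialization of $\bm z$, combined with a density argument in $\Omega_l$ and the character identity relating $\mc M_l$ and $\mc O_l$. First I would record the eigenvalue identity. On $\mc V^\fkS$ every tensor factor is a copy of $\C^{1|1}\cong L_{\omega_1}$ and $e_{11}(x)+e_{22}(x)$ acts by $w'(x)/w(x)$ with $w(x)=\prod_{i=1}^n(x-z_i)$, so here $\zeta_{\bLa,\bm b}(x)=w'(x)/w(x)$ and $\varphi_{\bLa,\bm b}(x)=w'(x)$. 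Specializing $\bm z\to\bm c$ for generic $\bm c\in\C^n$, the fiber is $(V(\bm c))^\sing_{(n-l,l)}$ by Lemma \ref{lem local weyl}, and by Theorem \ref{thm complete} the algebra $\mc B_{l,\bm c}$ is diagonalizable with simple joint spectrum and eigenbasis the on-shell Bethe vectors $\mathbb B_f$, indexed by the monic degree-$l$ divisors $f$ of $w'$. Writing $w'=nfg$ with $g$ monic of degree $n-1-l$, formula \eqref{eq:T-eigenvalue} shows that $\mc T(x)$ acts on $\mathbb B_f$ by $(w'/w)(f'/f)$, so $\big(x^n+\sum_i(-1)^iC_ix^{n-i}\big)\mc T(x)=w(x)\mc T(x)$ acts by $nf'(x)g(x)$. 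Comparing with \eqref{eq diff poly}, \eqref{eq G q=1}, \eqref{eq si q=1} and using Lemma \ref{lem:B-vanish}, the joint eigenvalue of $(C_j,B_{i+2})$ on $\mathbb B_f$ equals the value of $(\varSigma_j,G_i)$ at the point $P_f\in\Omega_l$ whose $f$- and $g$-coordinates are the coefficients of $f$ and $g$ and whose $\vSi_n$-coordinate is $\sigma_n(\bm c)$. As $\bm c$ ranges over generic configurations and $f$ over divisors, the $P_f$ are Zariski dense in $\Omega_l$, since any generic $(f,g,\vSi_n)$ reconstructs a squarefree $w'=nfg$ and hence a separable $w$.

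For part (i) I would take the free polynomial algebra on symbols $\tilde G_i,\tilde\varSigma_j$ with its two surjections $p$ onto $\mc O_l$ (by the lemma that $G_i,\varSigma_j$ generate $\mc O_l$) and $q$ onto $\mc B_l$ (by Lemma \ref{lem gen B}), and prove $\ker p=\ker q$. If a relation $R$ lies in $\ker p$ then $R$ vanishes at every $P_f$, so by the eigenvalue identity the operator $q(R)$ has all eigenvalues zero on every generic fiber; since $\mc M_l$ is free, hence torsion-free, over $\C[z_1,\dots,z_n]^\fkS$, this forces $q(R)=0$, giving $\ker p\subseteq\ker q$ and a well-defined graded surjection $\eta_l$. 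Conversely, if $q(R)=0$ then $R$ vanishes at all $P_f$, hence on their dense closure, hence $R=0$ in the reduced ring $\mc O_l$; thus $\ker q\subseteq\ker p$ and $\eta_l$ is injective. The grading is respected because $\deg G_i=i=\deg B_{i+2}$ and $\deg\varSigma_j=j=\deg C_j$ by Lemmas \ref{lem deg O} and \ref{lem deg B}, and $\eta_l\circ\pi_l$ is the $\C[z_1,\dots,z_n]^\fkS$-action by construction, which yields the module statement.

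For part (ii) I would factor $\rho_l=\mu_l\circ\eta_l$, where $\mu_l\colon\mc B_l\to\mc M_l$, $X\mapsto X\mathfrak u_l$. This is a $\C[z_1,\dots,z_n]^\fkS$-linear map, homogeneous of degree $l(l+1)/2$, between free $\C[z_1,\dots,z_n]^\fkS$-modules, and by Proposition \ref{prop ch}, \eqref{eq ch O} and part (i) one has $\ch\mc M_l=q^{l(l+1)/2}\ch\mc O_l=q^{l(l+1)/2}\ch\mc B_l$. Hence in every graded component $\rho_l$ is a map between spaces of equal finite dimension, so it is an isomorphism as soon as it is injective, equivalently surjective; since $\eta_l$ is already an isomorphism this reduces part (ii) to the cyclicity of $\mathfrak u_l$, i.e. to the invertibility of $\mu_l$. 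Because $\mc B_l$ and $\mc M_l$ are free of the same rank $\binom{n-1}{l}$ and $\mu_l$ is $\C[z_1,\dots,z_n]^\fkS$-linear, it suffices to check invertibility after generic specialization $\bar\mu_{l,\bm c}$. There the algebra is semisimple with one-dimensional eigenspaces, so writing $\bar{\mathfrak u}_l=\sum_f c_f\,\mathbb B_f$ one finds $\bar\mu_{l,\bm c}$ diagonal with entries $c_f$ in the bases given by the spectral idempotents and the $\mathbb B_f$; thus $\bar\mu_{l,\bm c}$ is invertible exactly when $c_f\neq 0$ for every divisor $f$.

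Establishing this full-support statement (equivalently, the cyclicity of $\mathfrak u_l$) is the step I expect to be the main obstacle, since it is the one genuinely new input beyond the formal bookkeeping above. I would attack it in one of two ways: either by an explicit triangular computation, ordering the free generators $u_{\bm r}$ of $\mc M_l$ from Lemma \ref{lem free sing graded} by $\bm r$ and showing that repeated application of the operators $B_{i+2}$ to $\mathfrak u_l=u_{(1,\dots,l)}$ produces every $u_{\bm r}$ modulo lower terms and $\C[z_1,\dots,z_n]^\fkS$-multiples; or by pairing $\mathfrak u_l$ against the on-shell Bethe vectors through the invariant Gaudin--Shapovalov form, for which distinct $\mathbb B_f$ are orthogonal, and verifying that the resulting leading coefficient does not vanish. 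Once cyclicity is secured, $\rho_l$ is a graded vector-space isomorphism, and since $\rho_l(F'F)=\eta_l(F')\rho_l(F)$ it intertwines the regular representation of $\mc O_l$ with the $\mc B_l$-module $\mc M_l$, completing the proof.
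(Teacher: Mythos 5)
Your argument for part (i) is essentially the paper's own proof: evaluate the operators $B_{i+2}(\bm z)$, $C_j(\bm z)$ on the on-shell Bethe eigenbasis of a generic fiber $(V(\bm c))^\sing_{(n-l,l)}$, match the joint eigenvalues with the values of $G_i,\varSigma_j$ at the corresponding point of $\Omega_l$, and use Zariski density of these points together with completeness of the Bethe ansatz (Theorem \ref{thm complete}) to get both well-definedness and injectivity of $\eta_l$; surjectivity and gradedness come from Lemmas \ref{lem gen B}, \ref{lem deg O}, \ref{lem deg B}. That part is correct and needs no further comment.

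Part (ii) is where you depart from the paper, and where your proof has a genuine gap --- one you yourself flag. You reduce the statement to the cyclicity of $\mathfrak u_l$, i.e.\ to showing that at a generic specialization $\bm z=\bm c$ the vector $\bar{\mathfrak u}_l=\sum_f c_f\,\mathbb B_f$ has $c_f\neq 0$ for \emph{every} monic degree-$l$ divisor $f$ of $w'$. You then propose two possible attacks (a triangularity computation in the basis of Lemma \ref{lem free sing graded}, or an orthogonality/Shapovalov-form argument) but carry out neither, so the proof is incomplete at precisely its load-bearing step. The full-support statement is not at all obvious --- there is no a priori reason a single distinguished vector should have nonzero projection onto all $\binom{n-1}{l}$ eigenlines --- and the paper deliberately avoids having to prove it. Its route is: $\ker\rho_l$ is an ideal of $\mc O_l$; it meets $\C[z_1,\dots,z_n]^\fkS$ trivially because $\mc M_l$ is a free (hence torsion-free) $\C[z_1,\dots,z_n]^\fkS$-module and $\mathfrak u_l\neq 0$; since $\mc O_l$ is an integral domain that is a finite (integral) extension of $\C[z_1,\dots,z_n]^\fkS$ via \eqref{eq inj sym}, any nonzero ideal of $\mc O_l$ would meet $\C[z_1,\dots,z_n]^\fkS$ nontrivially; hence $\ker\rho_l=0$, and the character identity $\ch(\mc M_l)=q^{l(l+1)/2}\ch(\mc O_l)$ (Proposition \ref{prop ch} and \eqref{eq ch O}) upgrades injectivity to a graded isomorphism. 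Cyclicity of $\mathfrak u_l$ --- your full-support statement --- then comes out as a \emph{consequence} rather than an input. If you want to salvage your outline, replace the reduction to invertibility of $\mu_l$ at a generic fiber by this ideal-plus-integrality argument; otherwise you must actually prove $c_f\neq 0$ for all $f$, which is a substantive new computation.
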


Theorem \ref{thm VS} is proved in Section \ref{sec proof}.

\medskip

Let $\bm a=(a_1,\dots,a_n)\in\C^n$. Define $k\in \Z_{>0}$, a sequence of positive integers $\bm n=(n_1,\dots,n_k)$, and a sequence of distinct complex numbers $\bm b=(b_1,\dots,b_k)$ by \eqref{eq:a-b-relations}. Let $\bLa=(\bla^{(1)},\dots,\bla^{(k)})$ be a sequence of non-degenerate polynomial weights such that $|\bla^{(s)}|=n_s$ for each $1\lle s\lle k$.

\begin{thm}\label{thm tensor irr}
The action of the Bethe algebra $\mc B_{l,\bLa,\bm b}$ on $\mc M_{l,\bLa,\bm b}$ has the following properties.
\begin{enumerate}
\item The Bethe algebra $\mc B_{l,\bLa,\bm b}$ is isomorphic to
\[
\C[w_1,\dots,w_{k-1}]^{\fkS_{l}\times \fkS_{k-l-1}}/\langle \sigma_i(\bm w)-\varepsilon_i\rangle_{i=1,\dots,k-1},\]
where $\varepsilon_i$ are given by 
$$
\varphi_{\bLa,\bm b}(x):=\prod_{s=1}^k(x-b_s)\sum_{s=1}^k\frac{n_s}{x-b_s}=n\Big(x^{k-1}+\sum_{i=1}^{k-1}(-1)^i\varepsilon_ix^{k-1-i}\Big)$$ and $\sigma_i(\bm w)$ are elementary symmetric functions in $w_1,\dots,w_{k-1}$. 
\item The Bethe algebra $\mc B_{l,\bLa,\bm b}$ is a Frobenius algebra. Moreover, the $\mc B_{l,\bLa,\bm b}$-module $\mc M_{l,\bLa,\bm b}$ is isomorphic to the regular representation of $\mc B_{l,\bLa,\bm b}$.
\item The Bethe algebra $\mc B_{l,\bLa,\bm b}$ is a maximal commutative subalgebra in $\mc M_{l,\bLa,\bm b}$ of dimension $\binom{k-1}{l}$.
\item Every $\mc B$-eigenspace in $\mc M_{l,\bLa,\bm b}$ has dimension one.
\item The $\mc B$-eigenspaces in $\mc M_{l,\bLa,\bm b}$ bijectively correspond to the monic degree $l$ divisors $y(x)$ of the polynomial $\varphi_{\bLa,\bm b}(x)$. Moreover, the eigenvalue of $\mathscr H(x)$ corresponding to the monic divisor $y$ is described by $\mc E_{y,\bLa,\bm b}(x)$, see \eqref{eq gl11 eigenvalue in y}.
\item Every generalized $\mc B$-eigenspace in $\mc M_{l,\bLa,\bm b}$ is a cyclic $\mc B$-module.
\item The dimension of the generalized $\mc B$-eigenspace associated to $\mc E_{y,\bLa,\bm b}(x)$ is 
\[
\prod_{a\in \C}\binom{\mathrm{Mult}_a(\varphi_{\bLa,\bm b})}{\mathrm{Mult}_a(y)},
\]where $\mathrm{Mult}_a(p)$ is the multiplicity of $a$ as a root of the polynomial $p$.
\end{enumerate}
\end{thm}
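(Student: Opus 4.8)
\emph{Overall strategy.} Using the reduction of Section \ref{sec BA} I may assume every $\bla^{(s)}=(n_s,0)$, so that $L_{\bla^{(s)}}(b_s)$ is the irreducible evaluation quotient of the Weyl module $W_{n_s}(b_s)$. The plan is to transport the universal statement of Theorem \ref{thm VS} to $\mc M_{l,\bLa,\bm b}$ in two specializations: first to Weyl modules, then to their irreducible tensor-product quotients. Since $\mc M_l=(\mc V^\fkS)^\sing_{(n-l,l)}$ is free over $\C[z_1,\dots,z_n]^\fkS$ (Lemma \ref{lem free sing graded}) and Theorem \ref{thm VS} identifies it, compatibly with that module structure, with the rank-one regular representation of $\mc O_l$, I would tensor the isomorphisms $\eta_l,\rho_l$ over $\C[z]^\fkS$ with the residue field $\C[z]^\fkS/I_{\bm a}$. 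By flatness and faithfulness of the regular representation this yields $\mc B_{l,\bm a}\cong\mc O_l/I_{\bm a}\mc O_l$ with $\mc M_{l,\bm a}$ (identified via Lemma \ref{lem local weyl}) its regular representation; unwinding \eqref{eq si q=1} under $\varSigma_i\mapsto a_i$ realizes $\mc O_l/I_{\bm a}\mc O_l$ as the coordinate ring of the factorization scheme $\{(f,g):\ nf(x)g(x)=\prod_s(x-b_s)^{n_s-1}\varphi_{\bLa,\bm b}(x)\}$ with $f,g$ monic of degrees $l,\,n-l-1$.

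\emph{Passage to the irreducible quotient.} The tensor product of the surjections $W_{n_s}(b_s)\twoheadrightarrow L_{(n_s,0)}(b_s)$ is a $\g[t]$-surjection $\bigotimes_s W_{n_s}(b_s)\twoheadrightarrow\bigotimes_s L_{(n_s,0)}(b_s)$; since $\mc B$ commutes with $\mathrm U(\g)$ it restricts to a $\mc B$-linear map $\mc M_{l,\bm a}\to\mc M_{l,\bLa,\bm b}$ on the weight-$(n-l,l)$ singular subspaces. Granting that this map is \emph{surjective}, one gets a surjection $\mc B_{l,\bm a}\twoheadrightarrow\mc B_{l,\bLa,\bm b}$ and exhibits $\mc M_{l,\bLa,\bm b}$ as a cyclic $\mc B_{l,\bLa,\bm b}$-module (generated by the image of the cyclic vector of $\mc M_{l,\bm a}$); with faithfulness this forces $\mc M_{l,\bLa,\bm b}\cong\mc B_{l,\bLa,\bm b}$ as the regular representation, which will give parts (2), (3), (6) once the algebra is identified, together with $\dim\mc B_{l,\bLa,\bm b}=\dim\mc M_{l,\bLa,\bm b}=\binom{k-1}{l}$ (the dimension coming from the $\g$-decomposition $\bigotimes_s L_{(n_s,0)}\cong\bigoplus_j\binom{k-1}{j}L_{(n-j,j)}$). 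I would prove the surjectivity onto the singular subspace by comparing the graded character of Proposition \ref{prop ch} against this dimension, or by citing the corresponding analysis of $\mc V^\fkS$.

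\emph{Identifying the algebra (the main obstacle).} After $C_i\mapsto a_i$, the definitions \eqref{eq si q=1} and \eqref{eq diff poly} become two operator identities on $\mc M_{l,\bLa,\bm b}$: $nf(x)g(x)=\prod_s(x-b_s)^{n_s-1}\varphi_{\bLa,\bm b}(x)$ and $nf'(x)g(x)=\prod_s(x-b_s)^{n_s}\mc T(x)$. These already hold in the Weyl case and do \emph{not} by themselves force $g$ to absorb the factor $\prod_s(x-b_s)^{n_s-1}$; the decisive extra input is that on an \emph{evaluation} module $\mc T(x)$ has at most simple poles at each $b_s$. Indeed, on $\mc M_{l,\bLa,\bm b}$ the series $e_{11}(x)+e_{22}(x)$ is central and acts by the scalar $\zeta_{\bLa,\bm b}(x)$, and substituting \eqref{eq:H-gen} into \eqref{eq:T-transfer} cancels all double poles. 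Combined with $nf'g=\prod_s(x-b_s)^{n_s}\mc T$ this shows no generalized eigenvalue $f_\chi$ of $f(x)$ can contain a factor $(x-b_s)$, for otherwise $\mc T$ would acquire a genuine double pole at $b_s$; hence each $f(b_s)$ is invertible, and the first identity then yields $\prod_s(x-b_s)^{n_s-1}\mid g(x)$, i.e. $g=\tilde g\prod_s(x-b_s)^{n_s-1}$ with $\tilde g$ monic of degree $k-1-l$ and $nf\tilde g=\varphi_{\bLa,\bm b}$. Writing $f=\prod_{i=1}^l(x-w_i)$ and $\tilde g=\prod_{i=l+1}^{k-1}(x-w_i)$, the coefficients of $f$ and $\tilde g$ generate $\mc B_{l,\bLa,\bm b}$ and satisfy exactly the relations of $\C[w_1,\dots,w_{k-1}]^{\fkS_l\times\fkS_{k-l-1}}/\langle\sigma_i(\bm w)-\varepsilon_i\rangle$, giving a surjection from this algebra onto $\mc B_{l,\bLa,\bm b}$; since both sides have dimension $\binom{k-1}{l}$ it is the isomorphism of part (1). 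I expect this divisibility, which is the representation-theoretic content distinguishing the irreducible module from the Weyl module, to be the hardest step.

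\emph{Frobenius structure and (generalized) eigenspaces.} For part (2), $\C[\bm w]^{\fkS_l\times\fkS_{k-l-1}}$ is a polynomial ring in the $k-1$ elementary symmetric functions of the two groups of variables, and $\sigma_i(\bm w)-\varepsilon_i$ is a homogeneous system of parameters; the quotient is therefore a complete intersection Artinian algebra, hence Frobenius, and its regular representation is $\mc M_{l,\bLa,\bm b}$. Parts (4), (5), (7) then follow from commutative algebra of this Artinian Frobenius algebra: its characters are the points of its spectrum, each local factor is local Frobenius with one-dimensional socle, so every $\mc B$-eigenspace is one-dimensional (part (4)); the points correspond to unordered splittings of the multiset of roots of $\varphi_{\bLa,\bm b}$ into the roots of $y=f$ and those of $\tilde g$, that is to the monic degree-$l$ divisors $y\mid\varphi_{\bLa,\bm b}$ (part (5)), the attached eigenvalue being $\mc E_{y,\bLa,\bm b}(x)$ of \eqref{eq gl11 eigenvalue in y} by Theorem \ref{thm gl11 eigenvalue in y}; and the length of the local factor at $y$ equals the number of infinitesimal distributions of coincident roots between $f$ and $\tilde g$, namely $\prod_{a\in\C}\binom{\mathrm{Mult}_a(\varphi_{\bLa,\bm b})}{\mathrm{Mult}_a(y)}$, which is part (7).
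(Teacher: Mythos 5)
Your proposal is correct, and its skeleton coincides with the paper's: reduce to $\beta_s=0$, transport Theorem \ref{thm VS} to the Weyl module by specializing $\C[z_1,\dots,z_n]^{\fkS}$ along $I_{\bm a}$ (this is exactly Section \ref{sec sec-iso}), pass to the irreducible quotient, identify the algebra with $\C[\bm w]^{\fkS_{l}\times\fkS_{k-l-1}}/\langle \sigma_i(\bm w)-\varepsilon_i\rangle$ by a surjection plus the dimension count $\binom{k-1}{l}$ (the paper's \eqref{eq epi new}, Lemma \ref{lem dimen count}, \eqref{eq:dim=}), and read off (ii)--(vii) from the Frobenius/coregular formalism. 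The genuine divergence is in the step you rightly flag as the crux: proving that the extra relations ($\prod_s(x-b_s)^{n_s-1}\mid g$ and $nf\tilde g=\varphi_{\bLa,\bm b}$) hold on the irreducible module. The paper (Lemma \ref{lem ann ideal}) treats $\bm b$ as variables, notes the relevant operators depend polynomially on $\bm b$, and kills them for generic $\bm b$ via Theorem \ref{thm complete} and the explicit Bethe eigenvalues; you instead work at the fixed $\bm b$ and extract the divisibility from the absence of double poles of $\mc T(x)=\mathcal G_2(x)$ on an evaluation module with $\beta_s=0$ (the double-pole coefficient at $b_s$ is $\beta_s(\alpha_s+\beta_s)=0$), which together with \eqref{eq si q=1} and \eqref{eq diff poly} forces every generalized eigenvalue of $f(x)$ to be nonvanishing at each $b_s$, hence $f(b_s)$ invertible and the exact operator divisibility by induction on Taylor coefficients. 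I checked the order count ($\mathrm{ord}_{b_s}(f_\chi' g_\chi)=n_s-2<n_s-1$ if $f_\chi(b_s)=0$ and $l\gge 1$) and it is sound; it buys a proof intrinsic to the given parameters, with no deformation to generic position. Similarly, for part (vii) you compute the lengths $\prod_a\binom{\mathrm{Mult}_a(\varphi_{\bLa,\bm b})}{\mathrm{Mult}_a(y)}$ of the local factors of the Artinian algebra directly, where the paper argues by continuity of eigenvalues in $\bm b$ plus Theorem \ref{thm complete}; yours is the more self-contained route, the paper's the shorter one given that generic completeness is already available. Two small points to tighten: the surjectivity of $\mc M_{l,\bm a}\to\mc M_{l,\bLa,\bm b}$ is immediate because polynomial $\g$-modules are semisimple, so the $\g[t]$-surjection splits over $\g$ and restricts to a surjection on each $(\cdot)^\sing_{(n-l,l)}$; and in the degenerate case $l=0$ the pole argument is vacuous ($f=1$ has no roots) and the divisibility of $g$ follows from \eqref{eq si q=1} alone.
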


Theorem \ref{thm tensor irr} is proved in Section \ref{sec proof}. 

Note that the results of is quite parallel to that of XXX spin chains, see \cite[Theorem 4.11]{LM21}.

\subsection{Higher Gaudin transfer matrices}

To define higher Gaudin transfer matrices, we first recall basics about pseudo-differential operators. Let $\mathscr A$ be a differential superalgebra with an even derivation $\pa:\mathscr A\to\mathscr A$. For $r\in\Z_{>0}$, denote the $r$-th derivative of $a\in\mathscr A$ by $a_{[r]}$. Define the \emph{superalgebra of pseudo-differential operators} $\mathscr A((\pa^{-1}))$ as follows. Elements of $\mathscr A((\pa^{-1}))$ are Laurent series in $\pa^{-1}$ with coefficients in $\mathscr A$, and the product is given by
\[
\pa\pa^{-1}=\pa^{-1}\pa=1,\quad \pa^r a=\sum_{s=0}^\infty {r \choose s}a_{[s]}\pa^{r-s},\quad r\in\Z,\quad a\in\mathscr A,
\]
where 
$$
{r \choose s}=\frac{r(r-1)\cdots(r-s+1)}{s!}.
$$

Let 
$$
\mathscr A_{x}^{m|n}=\mathrm{U}(\g[t])((x^{-1}))=\Big\{ \sum_{r=-\infty}^s g_rx^{r},\ r\in \Z,\ g_r\in \mathrm{U}(\g[t]) \Big\}
$$
Consider the operator in $\End(\C^{1|1})\otimes \mathscr A_{x}^{m|n}((\pa_x^{-1}))$,
$$
\mathfrak Z(x,\pa_x):=\sum_{a,b=1}^2E_{ab}\otimes\left(\delta_{ab}\pa_x-e_{ab}(x)(-1)^{|a|}\right).
$$
which is a Manin matrix, see \cite[Lemma 3.1]{MR14} and \cite[Lemma 4.2]{HM20}. Define the {\it Berezinian}, see \cite{Naz}, of $\mathfrak Z(x,\pa_x)$ by
\beq\label{eq diff trans}
\mathrm{Ber}\big(\mathfrak Z(x,\pa_x)\big)=\big(\pa_x-e_{11}(x)\big)\Big(\pa_x+e_{22}(x)+e_{21}(x)\big(\pa_x-e_{11}(x)\big)^{-1}e_{12}(x)\Big)^{-1}.
\eeq
Denote the Berezinian by $\mathfrak{D}(x,\pa_x)$ and expand it as an element in $\mathscr A_{x}^{m|n}((\pa_x^{-1}))$,
\beq\label{eq:Ber-Gaudin}
\mathfrak{D}(x,\pa_x)=\sum_{r=0}^{\infty}(-1)^r\mathcal G_{r}(x)\pa_x^{-r}.
\eeq
We call the series $\mathcal G_{r}(x)\in \mathscr A_{x}^{m|n}$, $r\in \Z_{\gge 0}$, the \emph{higher Gaudin transfer matrices}. In particular, we call $\mathcal G_1(x)$ and $\mathcal G_2(x)$ the {\it first and second Gaudin transfer matrices}, respectively.

\begin{eg}\label{eg:transfer}
We have $\mathcal G_0(x)=1$,
\[
\mathcal G_1(x)=e_{11}(x)+e_{22}(x),\quad \mathcal G_2(x)=\big(e_{11}(x)+e_{22}(x)\big)e_{22}(x)-e_{21}(x)e_{12}(x).
\]
Moreover, we have
\[
\mathscr H(x)=\frac{1}{2}\big(\mathcal G_1(x)\big)^2-\mathcal G_2(x)+\frac{1}{2}\pa_x\mathcal G_1(x),\quad \mc T(x)=\mathcal G_2(x),
\]
see \eqref{eq:T-transfer}.
\qed
\end{eg}

\begin{rem}
In principal, the Bethe algebra should be the unital subalgebra of $\ugt$ generated by coefficients $\mathcal G_{r}(x)$, $r\in \Z_{> 0}$, cf. \cite{MM15}. However, it turns out that the first two transfer matrices already give (almost) complete information about the Bethe algebra, see the discussion below.\qed
\end{rem}

Now we describe the eigenvalues of higher Gaudin transfer matrices acting on the on-shell Bethe vector.

Let $\bLa=(\bla^{(1)},\dots,\bla^{(k)})$ be a sequence of $\g$-weights and $\bm b=(b_1,\dots,b_k)$ a sequence of distinct complex numbers, where $\bla^{(s)}=(\alpha_s,\beta_s)$. Let $\bm t=(t_1,\dots,t_l)$, where $0\lle l<k$. Suppose that $y_{\bm t}$ divides the polynomial $\varphi_{\bLa,\bm b}$ (namely $\bm t$ satisfies the Bethe ansatz equation), see \eqref{eq gl11 BAE}.
\begin{thm}\label{thm:D-eigenvalue-b}
If $t_i\ne t_j$ for $1\lle i< j\lle l$, then
\beq\label{eq:eigen-D}
\mathfrak{D}(x,\pa_x)\mathbb B_l(\bm t)=\mathbb B_l(\bm t)\Big(\pa_x-\sum_{s=1}^k\frac{\alpha_s}{x-b_s}+\frac{y_{\bm t}'}{y_{\bm t}}\Big)\Big(\pa_x+\sum_{s=1}^k\frac{\beta_s}{x-b_s}+\frac{y_{\bm t}'}{y_{\bm t}}\Big)^{-1}.
\eeq
\end{thm}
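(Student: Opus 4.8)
The plan is to recast \eqref{eq:eigen-D} as a joint-eigenvector statement and then prove it by a direct computation based on the factorization \eqref{eq diff trans}. Since $\bB_l(\bm t)$ does not depend on $x$, the symbol $\pa_x$ commutes past it, and comparing coefficients of $\pa_x^{-r}$ in $\mathscr A_x^{m|n}((\pa_x^{-1}))$ shows that \eqref{eq:eigen-D} is equivalent to the claim that $\bB_l(\bm t)$ is a common eigenvector of every higher Gaudin transfer matrix $\mc G_r(x)$, the eigenvalue of $\mc G_r(x)$ being $(-1)^r$ times the coefficient of $\pa_x^{-r}$ in the scalar pseudo-differential operator
\[
\mc D_y(x,\pa_x):=\Big(\pa_x-\sum_{s=1}^k\frac{\alpha_s}{x-b_s}+\frac{y_{\bm t}'}{y_{\bm t}}\Big)\Big(\pa_x+\sum_{s=1}^k\frac{\beta_s}{x-b_s}+\frac{y_{\bm t}'}{y_{\bm t}}\Big)^{-1}.
\]
This reformulation is consistent with what is already available: by Example \ref{eg:transfer} the first transfer matrix $\mc G_1(x)=e_{11}(x)+e_{22}(x)$ is central and acts by the scalar $\zeta_{\bLa,\bm b}(x)$, while the eigenvalue of $\mc G_2(x)=\mc T(x)$ is given in \eqref{eq:T-eigenvalue}; one checks that the $\pa_x^{-1}$ and $\pa_x^{-2}$ coefficients of $\mc D_y$ reproduce exactly these two eigenvalues.

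First I would record how the generating currents act on $\bB_l(\bm t)$, by moving them to the right through the creation operators. Using the commutation relations of the diagonal currents with the creation operators together with $e_{11}(x)|0\rangle=\sum_s\frac{\alpha_s}{x-b_s}|0\rangle$ and $e_{22}(x)|0\rangle=\sum_s\frac{\beta_s}{x-b_s}|0\rangle$, one obtains
\[
e_{11}(x)\bB_l(\bm t)=\Big(\sum_{s=1}^k\frac{\alpha_s}{x-b_s}-\frac{y_{\bm t}'}{y_{\bm t}}\Big)\bB_l(\bm t)+U(x),\qquad e_{22}(x)\bB_l(\bm t)=\Big(\sum_{s=1}^k\frac{\beta_s}{x-b_s}+\frac{y_{\bm t}'}{y_{\bm t}}\Big)\bB_l(\bm t)-U(x),
\]
where the unwanted term $U(x)=\sum_{i=1}^l\frac{1}{x-t_i}v_i(x)$ is an explicit sum, over the Bethe roots, of vectors obtained by replacing $t_i$ with the spectral parameter $x$. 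Next I would compute $e_{12}(x)\bB_l(\bm t)$ from the fermionic relations \eqref{eq:anti} and the central anticommutator encoding $e_{11}+e_{22}$; this collapses $e_{12}(x)\bB_l(\bm t)$ to a sum of Bethe vectors with a single root removed, with coefficients proportional to $\zeta_{\bLa,\bm b}(x)-\zeta_{\bLa,\bm b}(t_i)$, at which point the Bethe ansatz equation \eqref{eq gl11 BAE}, i.e.\ $\varphi_{\bLa,\bm b}(t_i)=0$ and hence $\zeta_{\bLa,\bm b}(t_i)=0$, enters.

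The crux is the middle factor of \eqref{eq diff trans}, the sandwiched operator $e_{21}(x)\big(\pa_x-e_{11}(x)\big)^{-1}e_{12}(x)$ applied to $\bB_l(\bm t)$. Neither $\pa_x-e_{11}(x)$ nor $\pa_x+e_{22}(x)$ sends $\bB_l(\bm t)$ to a scalar multiple of itself; each leaves the unwanted vectors $U(x)$ behind. The content of the theorem is that, on shell, these unwanted contributions cancel order by order in $\pa_x^{-1}$ across the whole Berezinian, so that $\bB_l(\bm t)$ is a genuine common eigenvector of every $\mc G_r(x)$. For $r=1,2$ this cancellation is precisely Theorem \ref{thm gl11 eigenvalue in y} together with \eqref{eq:T-eigenvalue}; the hard part will be to propagate it to all orders. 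I would do this by expanding $\big(\pa_x-e_{11}(x)\big)^{-1}$ as a geometric series in $\pa_x^{-1}$, substituting the formula for $e_{12}(x)\bB_l(\bm t)$ and the decompositions above, and showing that at each order the residue of the unwanted terms at $x=t_i$ is proportional to $\varphi_{\bLa,\bm b}(t_i)=0$. Tracking these residues through the nested inverse is the principal obstacle; the feature special to $\g$ that makes it tractable is that $e_{12}(x)$ strictly lowers the number of Bethe roots while $e_{11}+e_{22}$ is central, so the recursion terminates within the span of Bethe vectors with at most $l$ roots. Once the unwanted terms are gone, the surviving scalar factors are read off from the diagonal eigenvalues above and assemble into $\mathfrak D(x,\pa_x)\bB_l(\bm t)=\bB_l(\bm t)\,\mc D_y(x,\pa_x)$, which is \eqref{eq:eigen-D}.

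Finally, I note an alternative route that sidesteps the sandwiched-operator bookkeeping in the generic case: when $\varphi_{\bLa,\bm b}$ has only simple roots, Theorem \ref{thm complete} and Proposition \ref{prop bv nonzero} show that the nonzero on-shell Bethe vectors form an eigenbasis of the simple-spectrum operator $\mathscr H(x)$, and since all coefficients of $\mathfrak D(x,\pa_x)$ lie in the commutative Bethe algebra and hence commute with $\mathscr H(x)$, each $\bB_l(\bm t)$ is automatically a joint eigenvector of every $\mc G_r(x)$; the factorized shape of $\mathfrak D$ then forces the eigenvalue to be $\mc D_y$. I nevertheless prefer the direct computation, since it uses only $\varphi_{\bLa,\bm b}(t_i)=0$ and the distinctness $t_i\ne t_j$, which are exactly the hypotheses of the theorem, and so requires no genericity or continuation argument.
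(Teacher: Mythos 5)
The paper does not actually prove this theorem: it is stated to follow from the classical ($q\to 1$/Gaudin) limit of the XXX analogue \cite[Theorem 6.4]{LM21}, and the proof is explicitly omitted. So there is no internal argument to match yours against; your proposal must stand on its own, and as written it has a genuine gap. The decisive step --- the all-orders cancellation of the unwanted terms through the nested inverse $\bigl(\pa_x-e_{11}(x)\bigr)^{-1}$ in \eqref{eq diff trans} --- is the entire content of the theorem beyond the $r=1,2$ cases already covered by Theorem \ref{thm gl11 eigenvalue in y} and \eqref{eq:T-eigenvalue}, and you defer it as ``the principal obstacle'' without carrying it out. A plan that names the hard step but does not execute it is not a proof. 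The fallback via Theorem \ref{thm complete} has the same problem in a different guise: simple spectrum makes $\bB_l(\bm t)$ a joint eigenvector of every $\mc G_r(x)$, but ``the factorized shape of $\mathfrak D$ then forces the eigenvalue to be $\mc D_y$'' is precisely the computation being avoided, and you explicitly decline the continuation argument needed to pass from generic data to the stated hypothesis $t_i\ne t_j$.

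There is also a substantive error in the intermediate formulas that undermines the proposed mechanism. From $[e_{11}(x),e_{12}(u)]=\bigl(e_{12}(u)-e_{12}(x)\bigr)/(x-u)$ and $[e_{22}(x),e_{12}(u)]=-\bigl(e_{12}(u)-e_{12}(x)\bigr)/(x-u)$, the diagonal (``wanted'') parts are
\[
e_{11}(x)\bB_l(\bm t)=\Bigl(\sum_{s=1}^k\frac{\alpha_s}{x-b_s}+\frac{y_{\bm t}'}{y_{\bm t}}\Bigr)\bB_l(\bm t)+U(x),\qquad
e_{22}(x)\bB_l(\bm t)=\Bigl(\sum_{s=1}^k\frac{\beta_s}{x-b_s}-\frac{y_{\bm t}'}{y_{\bm t}}\Bigr)\bB_l(\bm t)-U(x),
\]
i.e.\ the opposite signs of $y_{\bm t}'/y_{\bm t}$ from what you wrote. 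The symbols $\mu$ and $\nu$ appearing in \eqref{eq:eigen-D} carry the \emph{other} signs, so the final answer is \emph{not} obtained by letting the unwanted terms vanish and reading off the diagonal symbols: the unwanted terms combine with the sandwiched operator $e_{21}(x)\bigl(\pa_x-e_{11}(x)\bigr)^{-1}e_{12}(x)$ to flip the sign of $y_{\bm t}'/y_{\bm t}$ in both factors. You can see this already at order $\pa_x^{-1}$, where $\mc G_2(x)=\zeta_{\bLa,\bm b}(x)e_{22}(x)-e_{21}(x)e_{12}(x)$ has eigenvalue $\zeta_{\bLa,\bm b}(x)\bigl(\sum_s\beta_s/(x-b_s)+y_{\bm t}'/y_{\bm t}\bigr)$ by \eqref{eq:T-eigenvalue}, so the $-e_{21}(x)e_{12}(x)$ term must contribute $2\zeta_{\bLa,\bm b}(x)\,y_{\bm t}'/y_{\bm t}$ rather than cancel. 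Your write-up hides this by assigning the final signs to the diagonal actions from the outset. (A further minor point: $\varphi_{\bLa,\bm b}(t_i)=0$ gives $\zeta_{\bLa,\bm b}(t_i)=0$ only when $t_i\notin\{b_1,\dots,b_k\}$, which requires the weights to be non-degenerate; the theorem as stated allows arbitrary $\g$-weights.) To repair the argument you would either have to do the genuine all-orders bookkeeping with the corrected commutators, or follow the paper's intended route and deduce the identity as a classical limit of the XXX statement.
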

This is a differential analog of \cite[Theorem 6.4]{LM21}. Note that the pseudo-differential operator in the right hand side of \eqref{eq:eigen-D}, denoted by $\mathfrak D_{y,\bLa,\bm b}$, was introduced \cite[Section 5.3]{HMVY}. This theorem can be generalized to the $\glMN$ case where on the right hand side the pseudo-differential operator describing the eigenvalues of higher Gaudin transfer matrices should be replaced by the pseudo-differential operator in \cite[Equation (6.5)]{HMVY}. This generalization is a classical limit of \cite[Conjecture 5.15]{LM:2020} which connects the rational difference operator introduced in \cite[Equation (5.6)]{HLM} with the eigenvalues of higher transfer matrices on the on-shell Bethe vector for XXX spin chains associated with $\gl(m|n)$. 

The proof of Theorem \ref{thm:D-eigenvalue-b} can be obtained from \cite[Theorem 6.4]{LM21} by taking classical limit, see \cite{MTV06} and cf. \cite[Remark 5.4]{HLM}. We shall provide a proof of  \cite[Conjecture 5.15]{LM:2020} and the generalization of of Theorem \ref{thm:D-eigenvalue-b} to the $\glMN$ case in a further publication using nested Bethe ansatz, see \cite{KR83,MTV06,BR08}. Hence the proof of Theorem \ref{thm:D-eigenvalue-b} is omitted here.

\begin{rem}
As shown in \cite[Lemma 5.7]{HMVY}, the odd reflections of $\mathfrak D_{y,\bLa,\bm b}$ cf. \cite[equation (3.1)]{HMVY}, which come from the study of the fermionic reproduction procedure of the Bethe ansatz equation, are compatible with the odd reflections in Lie superalgebras. The difference analog of this fact has been used in \cite{Lu21} to investigate the odd reflections of super Yangian of type A and the fermionic reproduction procedure of the Bethe ansatz equation for XXX spin chains.\qed
\end{rem}

We conclude this section by discussing the connections between $\mathcal G_i(x)$, $i\gge 3$, and $\mathcal G_1(x)$, $\mathcal G_2(x)$.

Let
\[
\mu(x)=\sum_{s=1}^k\frac{\alpha_s}{x-b_s}-\frac{y_{\bm t}'}{y_{\bm t}},\quad  \nu(x)=\sum_{s=1}^k\frac{\beta_s}{x-b_s}+\frac{y_{\bm t}'}{y_{\bm t}}.
\]For simplicity, we do not write the dependence of $\mu(x)$ and $\nu(x)$ on $\bLa,\bm b,\bm t$ explicitly. Then the eigenvalue of $\mathfrak{D}(x,\pa_x)$ acting on $\mathbb B_l(\bm t)$ is given by
\beq\label{eq:D}
(\pa_x-\mu(x))(\pa_x+\nu(x))^{-1}=1-(\mu(x)+\nu(x))(\pa_x+\nu(x))^{-1}.
\eeq
Hence the eigenvalues of $\mathcal G_i(x)$ are essentially determined only by $\mu(x)+\nu(x)$ and $\nu(x)$. Comparing \eqref{eq:Ber-Gaudin} and the expansion of \eqref{eq:D}, we have
\beq\label{eq:D2}
\mathcal G_1(x)\mathbb B_l(\bm t)=(\mu(x)+\nu(x))\mathbb B_l(\bm t), \quad \mathcal G_2(x)\mathbb B_l(\bm t)=(\mu(x)+\nu(x))\nu(x)\mathbb B_l(\bm t), 
\eeq
see also \eqref{eq:T-eigenvalue}. Therefore, the spectrum of all higher transfer matrices are determined simply by that of the first two transfer matrices which justifies our definition of Bethe algebra.
\begin{lem}\label{lem:zero-inter}
Let the complex parameters $c_1,\dots,c_m$ and the positive integer $m$ vary. Then the kernels of the representations $\bigotimes_{i=1}^m\C^{1|1}(c_i)$ of $\ugt$ have the zero intersection.
\end{lem}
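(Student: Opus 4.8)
The plan is to show that the intersection $K:=\bigcap \ker\bigl(\mathrm U(\g[t])\to \End(\bigotimes_{i=1}^m\C^{1|1}(c_i))\bigr)$, taken over all $m\in\Z_{>0}$ and all $\bm c=(c_1,\dots,c_m)\in\C^m$, is the zero ideal. Since $K$ is a two-sided ideal (indeed an intersection of kernels of algebra homomorphisms), it suffices to prove that any fixed nonzero element $X\in\mathrm U(\g[t])$ acts nontrivially on $\bigotimes_{i=1}^m\C^{1|1}(c_i)$ for \emph{some} choice of $m$ and $\bm c$. So the task reduces to a separation-of-points statement: the family of evaluation-tensor representations is faithful in the limit.

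First I would reduce to the modules $\mc V^{\fkS}/I_{\bm a}\mc V^{\fkS}$ studied in Section \ref{sec space VS}. Recall $V=(\C^{1|1})^{\otimes n}$ carries the $\g[t]$-action \eqref{eq:glt-action} in which $e_{ij}[r]$ acts through the variables $z_s^r$; specializing $\bm z=\bm c$ recovers precisely the evaluation tensor product $\bigotimes_{s=1}^n\C^{1|1}(c_s)$ (up to the harmless reordering/grading coming from parities). Thus for a fixed $X\in\mathrm U(\g[t])$ of degree $\lle d$, the operator by which $X$ acts on $V$ is a polynomial function $X(\bm z)\in\End(V)\otimes\C[z_1,\dots,z_n]$, and $X$ annihilates every $\bigotimes_{s=1}^n\C^{1|1}(c_s)$ if and only if $X(\bm z)\equiv 0$ as a polynomial. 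Hence it is enough to show: \emph{if $X$ acts as the zero polynomial on $V=(\C^{1|1})^{\otimes n}$ for all $n$, then $X=0$ in $\mathrm U(\g[t])$.}

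The key step is to make this vanishing precise using a PBW basis of $\mathrm U(\g[t])$. Write $X$ as a linear combination of ordered monomials in the $e_{ij}[r]$, and let $N$ exceed every index $r$ appearing and every total number of $e_{21}$/$e_{12}$ factors. Taking $n=N$ large enough, I would evaluate $X(\bm z)$ on the distinguished spanning vectors of $V$ built from $v^+=v_1^{\otimes n}$ by successive applications of $e_{21}[r]$, exactly the vectors appearing in Lemmas \ref{lem explicit basis} and \ref{lem free sing graded}. Because $e_{12}v_1=0$ and $e_{11},e_{22}$ act diagonally, acting by a monomial on $v^+$ produces an explicit polynomial in $\bm z$ (a monomial symmetric-type expression in the $z_s$) tensored with a fixed coordinate vector of $V$; distinct PBW monomials of $X$ produce linearly independent polynomial-valued vectors once $n=N$ is large, so $X(\bm z)\equiv0$ forces all coefficients of $X$ to vanish. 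In effect, the graded pieces $V$ for increasing $n$ jointly see all PBW monomials, and the freeness statements of Lemmas \ref{lem free graded}--\ref{lem free sing graded} over $\C[z_1,\dots,z_n]^{\fkS}$ guarantee that the relevant vectors remain independent.

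The main obstacle, and the step deserving the most care, is the bookkeeping of the noncommutative product of $e_{ij}[r]$ on the weight-graded vectors: one must track the parity signs in \eqref{eq:glt-action} and the relations \eqref{eq:anti} so as to confirm that inequivalent ordered monomials in $X$ really do contribute linearly independent $V$-valued polynomials, rather than collapsing through the anticommutativity $e_{21}[r]e_{21}[s]=-e_{21}[s]e_{21}[r]$ or the nilpotency $(e_{21}[r])^2=0$. I expect this to be routine but delicate; once the independence is established, the conclusion $K=0$ is immediate, since no nonzero $X$ can lie in every kernel.
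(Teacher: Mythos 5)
Your strategy is sound, and it is genuinely a different route from the paper only in the sense that the paper does not argue at all: its entire proof is the citation to \cite[Proposition~1.7]{Naz20}, whose proof establishes the analogous asymptotic faithfulness of tensor products of evaluation vector representations for the super Yangian; your argument is the current-algebra analogue of that. Your two reductions are correct and are the right ones: a common kernel element is detected by a single nonzero $X$, and since $X$ acts on $V\otimes\C[z_1,\dots,z_n]$ by a polynomial-valued operator $X(\bm z)$ whose specialization at $\bm z=\bm c$ is the action on $\bigotimes_{s}\C^{1|1}(c_s)$, killing every evaluation is equivalent to $X(\bm z)\equiv 0$. The one step you assert rather than prove is exactly the one you flag, and as literally phrased (``distinct PBW monomials produce linearly independent polynomial-valued vectors'') it is not quite right: tested against $v^+$ alone it fails (both $e_{22}[0]$ and $e_{12}[0]$ annihilate $v^+$), and even $e_{21}[r]v^+$ is a sum of coordinate vectors, not a single one. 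What closes the gap is a triangularity argument in the number of $e_{12}$-factors: write $X$ in PBW order as $\sum c_{A,B,D}\,E_{21}^{A}H^{B}E_{12}^{D}$, let $d$ be the largest number of $e_{12}$-factors occurring, and apply $X$ to $e_{21}[r_1]\cdots e_{21}[r_d]v^+$; the terms with fewer $e_{12}$'s land in a different $\h$-weight space, each contraction $[e_{12}[a],e_{21}[b]]=e_{11}[a+b]+e_{22}[a+b]$ acts by the power sum $p_{a+b}(\bm z)$, and the Cartan factors likewise act through power sums, which are algebraically independent once $n$ is large relative to the degrees appearing in $X$; this recovers the coefficients with $d$ factors of $e_{12}$ and lets you induct downward on $d$. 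With that computation written out your proof is complete and self-contained, which is more than the paper provides.
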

\begin{proof}
The proof is contained in the proof of \cite[Proposition 1.7]{Naz20}.
\end{proof}
\begin{cor}\label{universal oper}
We have
\beq\label{eq:D-alt}
\mathfrak{D}(x,\pa_x)=\Big(\pa_x-\mathcal G_1(x)+\frac{\mathcal G_2(x)}{\mathcal G_1(x)}\Big)\Big(\pa_x+\frac{\mathcal G_2(x)}{\mathcal G_1(x)}\Big)^{-1}.
\eeq
\end{cor}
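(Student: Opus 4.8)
The plan is to prove the operator identity \eqref{eq:D-alt} by reducing it to an identity of rational functions that holds on every tensor product $\bigotimes_{i=1}^m\C^{1|1}(c_i)$, and then invoking Lemma \ref{lem:zero-inter} to upgrade this to an identity in $\mathscr A_x^{m|n}((\pa_x^{-1}))$. First I would observe that both sides of \eqref{eq:D-alt} are elements of $\mathscr A_x^{m|n}((\pa_x^{-1}))$, so their equality is equivalent to the equality of all the coefficients $\mathcal G_r(x)$ (on the left) with the corresponding expansion coefficients of the right-hand side. Since $\mathcal G_1(x)$ is not a zero divisor and appears in a denominator, the cleanest route is to clear denominators: multiplying \eqref{eq:D-alt} on the right by $\big(\pa_x+\mathcal G_2(x)/\mathcal G_1(x)\big)$ turns the claim into a purely polynomial (in $\pa_x$) identity $\mathfrak D(x,\pa_x)\big(\pa_x+\mathcal G_2(x)/\mathcal G_1(x)\big)=\pa_x-\mathcal G_1(x)+\mathcal G_2(x)/\mathcal G_1(x)$, which sidesteps convergence subtleties of the inverse series.

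Next I would reduce the operator identity to a spectral statement. By Lemma \ref{lem:zero-inter}, the intersection of the kernels of the representations $\bigotimes_{i=1}^m\C^{1|1}(c_i)$, as $m$ and $c_1,\dots,c_m$ vary, is zero; hence an element of $\mathrm U(\g[t])$ (equivalently, each coefficient appearing in the difference of the two sides of the cleared identity) is zero as soon as it annihilates every such module. On each module $V(\bm c)=\bigotimes_{i=1}^m\C^{1|1}(c_i)$, Theorem \ref{thm complete} guarantees that for generic $\bm c$ the space $(V(\bm c))^{\sing}$ has an eigenbasis of on-shell Bethe vectors $\mathbb B_l(\bm t)$ with pairwise distinct $t_i$. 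Therefore it suffices to check that both sides of \eqref{eq:D-alt} act identically on each such $\mathbb B_l(\bm t)$, and that these vectors together with the $\g$-action generate $V(\bm c)$, so that agreement on singular Bethe vectors forces agreement on all of $V(\bm c)$.

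The action on $\mathbb B_l(\bm t)$ is where the earlier computations pay off. By Theorem \ref{thm:D-eigenvalue-b}, the left-hand side $\mathfrak D(x,\pa_x)$ acts on $\mathbb B_l(\bm t)$ as the scalar pseudo-differential operator $(\pa_x-\mu(x))(\pa_x+\nu(x))^{-1}$ with $\mu,\nu$ as defined above \eqref{eq:D}. By \eqref{eq:D2}, $\mathcal G_1(x)$ acts as $\mu(x)+\nu(x)$ and $\mathcal G_2(x)$ acts as $(\mu(x)+\nu(x))\nu(x)$; consequently $\mathcal G_2(x)/\mathcal G_1(x)$ acts as $\nu(x)$ and $-\mathcal G_1(x)+\mathcal G_2(x)/\mathcal G_1(x)$ acts as $-(\mu(x)+\nu(x))+\nu(x)=-\mu(x)$. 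Thus the right-hand side of \eqref{eq:D-alt} acts on $\mathbb B_l(\bm t)$ exactly as $(\pa_x-\mu(x))(\pa_x+\nu(x))^{-1}$, matching the left-hand side. Matching the two sides coefficient by coefficient in the $\pa_x$-expansion then gives the equality of eigenvalues of all $\mathcal G_r(x)$, completing the verification on Bethe vectors.

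The main obstacle is the passage from eigenvalue agreement on singular Bethe vectors to an operator identity valid universally in $\mathrm U(\g[t])((x^{-1}))$. Two points require care. First, one must ensure that the appearance of $\mathcal G_1(x)^{-1}$ is legitimate: since $\mathcal G_1(x)=e_{11}(x)+e_{22}(x)$ has central coefficients acting by nonzero scalars on each $V(\bm c)$, it is invertible in the localized algebra, but I would either work in the cleared-denominator form throughout or verify that no spurious poles are introduced. Second, and more delicately, agreement of scalar eigenvalues on the singular subspace $(V(\bm c))^{\sing}$ only controls the two sides on that subspace; to conclude equality as operators on all of $V(\bm c)$ I would use that $\mathcal G_r(x)$ and $\mathcal G_1,\mathcal G_2$ all commute with the $\g$-action (Example \ref{eg:transfer} and the defining properties of the Berezinian), so that equality on highest-weight (singular) vectors propagates to the $\g$-submodules they generate, which exhaust $V(\bm c)$. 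Once the difference of the two sides annihilates every $V(\bm c)$ for generic $\bm c$, continuity in $\bm c$ extends this to all $\bm c$, and Lemma \ref{lem:zero-inter} forces the difference to vanish in $\mathrm U(\g[t])$, yielding \eqref{eq:D-alt}.
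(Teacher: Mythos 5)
Your proposal follows essentially the same route as the paper: reduce via Lemma \ref{lem:zero-inter} to checking the identity on the modules $\bigotimes_{i=1}^m\C^{1|1}(c_i)$ for generic $\bm c$, where Theorem \ref{thm complete} supplies on-shell Bethe vectors and Theorem \ref{thm:D-eigenvalue-b} together with \eqref{eq:D}, \eqref{eq:D2} shows both sides act as $(\pa_x-\mu)(\pa_x+\nu)^{-1}$. Your added care about the invertibility of $\mathcal G_1(x)$ and about propagating agreement from the singular subspace to all of $V(\bm c)$ via commutation with $\mathrm U(\g)$ fills in details the paper's one-line proof leaves implicit, but it is the same argument.
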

\begin{proof}
By Lemma \ref{lem:zero-inter}, it suffices to check that the left hand side and the right hand side of \eqref{eq:D-alt} act identically on a basis of $\bigotimes_{i=1}^m\C^{1|1}(c_i)$ for all $m\in \Z_{>0}$ and generic $\bm c=(c_1,\dots,c_m)$. 

By Theorem \ref{thm complete}, there is a basis of $\bigotimes_{i=1}^m\C^{1|1}(c_i)$ consisting of on-shell Bethe vectors for generic $\bm c$. Therefore, the statement follows from Theorem \ref{thm:D-eigenvalue-b}, \eqref{eq:D}, \eqref{eq:D2}.
\end{proof}

\section{Proof of main theorems}\label{sec proof}
In this section, we prove the main theorems. 
\subsection{The first isomorphism}
\begin{proof}[Proof of Theorem \ref{thm VS}]
We first show the homomorphism defined by $\eta_l$ is well-defined. 

Consider the tensor product $V(\bs c)=\bigotimes_{i=1}^n\C^{1|1}(c_i)$, where $c_i\in \C$, and the corresponding Bethe ansatz equation associated to weight $(n-l,l)$.
Let $\bs t$ be a solution with distinct coordinates and ${\bB}_l(\bm t)$ the corresponding on-shell Bethe vector. Denote $\mc E_{i,\bs t}$ the eigenvalues of $B_i$ acting on ${\bB}_l(\bm t)$, see Theorem \ref{thm gl11 eigenvalue in y} and \eqref{eq:T-eigenvalue}. 

Define a character $\pi:\mc O_l\to \C$ by sending  
$$ f(x)\mapsto y_{\bs t}(x),\qquad g(x)\mapsto\frac1{ny_{\bs t}(x)}\ \prod_{i=1}^n(x-c_i)\sum_{i=1}^n\frac{1}{x-c_i},\qquad\varSigma_n\mapsto \prod_{i=1}^n c_i.$$

Then
\beq\label{equal on bethe vector} 
\pi(\varSigma_i)= \sigma_i(\bs c), \qquad \pi(G_i)=\mc E_{i,\bs t},
\eeq
by \eqref{eq si q=1} and by \eqref{eq gl11 eigenvalue in y}, \eqref{eq:T-eigenvalue}, \eqref{eq G q=1},  respectively.

\medskip

Let now $P(G_{i},\varSigma_j)$ be a polynomial in $G_{i},\varSigma_j$ such that $P(G_{i},\varSigma_j)$ is equal to zero in $\mc O_{l}$. It suffices to show $P(B_i(\bm z),C_j(\bm z))$ is equal to zero in $\mc B_l$. 

Note that $P(B_i(\bm z),C_j(\bm z))$ is a polynomial in $z_1,\dots,z_n$ with values in $\End((V)_{(n-l,l)}^\sing)$. For any sequence $\bm c$ of complex numbers, we can evaluate $P(B_i(\bm z),C_j(\bm z))$ at $\bs z=\bs c$ to an operator on $(V(\bs c))_{(n-l,l)}^\sing$. By Theorem \ref{thm complete}, the transfer matrix $\mc T(x)$ is diagonalizable and the Bethe ansatz is complete for $(V(\bm c))_{(n-l,l)}^\sing$ when $\bm c\in\C^n$ is generic. Hence by \eqref{equal on bethe vector} the value of $P(B_i(\bm z),C_j(\bm z))$ at $\bs z=\bs c$ is also equal to zero for generic $\bs c$. Therefore $P(B_i(\bm z),C_j(\bm z))$ is identically zero and the map $\eta_l$ is well-defined.

\medskip

Let us now show that the map $\eta_l$ is injective. Let $P(G_{i},\varSigma_j)$ be a polynomial in $G_{i},\varSigma_j$ such that $P(G_{i},\varSigma_j)$ is non-zero in $\mc O_{l}$. Then the value at a generic point of $\Omega_l$ (e.g. the non-vanishing points of $P(G_{i},\varSigma_j)$ such that $f$ and $g$ are relatively prime and have only simple zeros) is not equal to zero. Moreover, at those points the transfer matrix $\mc T(x)$ is diagonalizable and the Bethe ansatz is complete again by Theorem \ref{thm complete}. Therefore, again by \eqref{equal on bethe vector}, the polynomial $P(B_i(\bm z),C_j(\bm z))$ is a non-zero element in $\mc B_l$. Thus the map $\eta_l$ is injective.

The surjectivity of $\eta_l$ follows from Lemma \ref{lem gen B}. Hence $\eta_l$ is an isomorphism of algebras. 

%Clearly, the isomorphism $\eta_l$ maps $\vSi_i$ to $\sigma_i(\bm z)$ (the image of $C_i$ in $\End((\mc V^\fkS)^\sing_{(n-l,l)})$) and hence maps $G_{ij}$ to $B_{ij}(\bm z)$. 
The fact that $\eta_l$ is an isomorphism of graded algebra respecting the gradation follows from Lemmas \ref{lem deg O} and \ref{lem deg B}. This completes the proof of part (i).

\medskip

The kernel of $\rho_l$ is an ideal of $\mc O_l$. If we identify $\sigma_i(\bm z)$ with $\vSi_i$, then the algebra $\mc O_l$ contains the algebra $\C[z_1,\dots,z_n]^\fkS$, see \eqref{eq inj sym}. The kernel of $\rho_l$ intersects $\C[z_1,\dots,z_n]^\fkS$ trivially. Therefore the kernel of $\rho_l$ is trivial as well. Hence $\rho_l$ is an injective map. Comparing \eqref{eq ch O} and Proposition \ref{prop ch}, we have $\ch\big(\mc M_{l}\big)=q^{l(l+1)/2}\ch(\mc O_l)$. Thus $\rho_l$ is an isomorphism of graded vector spaces which shifts the degree by $l(l+1)/2$, completing the proof of part (ii).
\end{proof}

\subsection{The second isomorphism}\label{sec sec-iso}
Let $\bm a=(a_1,\dots,a_n)$ be a sequence of complex numbers. Define $k\in \Z_{>0}$, a sequence of positive integers $\bm n=(n_1,\dots,n_k)$, and a sequence of distinct complex numbers $\bm b=(b_1,\dots,b_k)$ by \eqref{eq:a-b-relations}. Let $I_{l,\bm a}^{\mc O}$ be the ideal of $\mc O_l$ generated by the elements $\vSi_i-a_i$, $i=1,\dots,n$, where $\vSi_1,\dots,\vSi_{n-1}$ are defined in
\eqref{eq si q=1}. Let $\mc O_{l,\bm a}$ be the quotient algebra
\[
\mc O_{l,\bm a}=\mc O_{l}/I_{l,\bm a}^{\mc O}.
\]

Let $I_{l,\bm a}^{\mc B}$ be the ideal of $\mc B_l $ generated by $C_i(\bm z)-a_i$, $i=1,\dots,n$. Consider the subspace 
\[
I_{l,\bm a}^{\mc M}=I_{l,\bm a}^{\mc B}\mc M_l=(I_{\bm a}\mc V^\fkS)_{(n-l,l)}^\sing,
\]
where $I_{\bm a}$ as before is the ideal of $\C[z_1,\dots,z_n]^\fkS$ generated by $\sigma_i(\bm z)-a_i$, $i=1,\dots,n$. 

\begin{lem}\label{lem eta rho weyl}
We have
\[
\eta_l(I_{l,\bm a}^{\mc O})=I_{l,\bm a}^{\mc B},\quad \rho_l(I_{l,\bm a}^{\mc O})=I_{l,\bm a}^{\mc M},\quad \mc B_{l,\bm a}=\mc B_{l}/I_{l,\bm a}^{\mc B},\quad \mc M_{l,\bm a}=(\mc V^\fkS)^\sing_{(n-l,l)}/ I_{l,\bm a}^{\mc M}.
\]
\end{lem}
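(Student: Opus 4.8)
The plan is to establish the four claimed identities by transporting everything through the two isomorphisms $\eta_l$ and $\rho_l$ of Theorem \ref{thm VS}, so that the statement reduces to elementary bookkeeping about how ideals behave under module isomorphisms. First I would prove the first identity $\eta_l(I_{l,\bm a}^{\mc O})=I_{l,\bm a}^{\mc B}$. Since $\eta_l$ is an algebra isomorphism sending $\vSi_i\mapsto C_i(\bm z)$ for $i=1,\dots,n$ (recall that $\vSi_n$ is a generator of $\mc O_l$ and the remaining $\vSi_i$ are the specific elements defined in \eqref{eq si q=1}, all lying in the image of $\pi_l$), the generators $\vSi_i-a_i$ of $I_{l,\bm a}^{\mc O}$ map exactly to the generators $C_i(\bm z)-a_i$ of $I_{l,\bm a}^{\mc B}$; because an algebra isomorphism carries the ideal generated by a set onto the ideal generated by the image of that set, the first equality follows immediately.

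Next I would treat $\rho_l(I_{l,\bm a}^{\mc O})=I_{l,\bm a}^{\mc M}$. By part (ii) of Theorem \ref{thm VS}, $\rho_l(F)=\eta_l(F)\mathfrak u_l$ identifies $\mc M_l$ with the regular representation of $\mc O_l$ via $\eta_l$. Under this identification an ideal of $\mc O_l$ is sent to the corresponding $\mc B_l$-submodule of $\mc M_l$, so $\rho_l(I_{l,\bm a}^{\mc O})=\eta_l(I_{l,\bm a}^{\mc O})\,\mathfrak u_l=I_{l,\bm a}^{\mc B}\mathfrak u_l=I_{l,\bm a}^{\mc B}\mc M_l=I_{l,\bm a}^{\mc M}$, using the first identity together with the fact that $\mathfrak u_l$ generates $\mc M_l$ as a $\mc B_l$-module (this cyclicity is exactly what part (ii) records). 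The second description $I_{l,\bm a}^{\mc M}=(I_{\bm a}\mc V^\fkS)_{(n-l,l)}^\sing$ requires checking that the ideal $I_{l,\bm a}^{\mc B}$ of $\mc B_l$, which is generated by the central elements $C_i(\bm z)-a_i$, produces on $\mc M_l$ precisely the subspace cut out by $I_{\bm a}$; this holds because $C_i$ acts on $\mc V^\fkS$ as multiplication by $\sigma_i(\bm z)$, so multiplication by $C_i(\bm z)-a_i$ coincides with multiplication by $\sigma_i(\bm z)-a_i$, and taking the weight-$(n-l,l)$ singular part commutes with this multiplication.

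The remaining two identities are then quotient statements. For $\mc B_{l,\bm a}=\mc B_l/I_{l,\bm a}^{\mc B}$ I would simply invoke the definition of $\mc B_{l,\bm a}$ as the image of $\mc B$ in $\End(\mc M_{l,\bm a})$, where $\mc M_{l,\bm a}=(\bigotimes_s W_{n_s}(b_s))_{(n-l,l)}^\sing$; by Lemma \ref{lem local weyl} this Weyl module is $\mc V^\fkS/I_{\bm a}\mc V^\fkS$, so $\mc M_{l,\bm a}=\mc M_l/I_{l,\bm a}^{\mc M}$, which is the last identity, and the action of $\mc B_l$ on this quotient factors exactly through the elements acting by zero, namely $I_{l,\bm a}^{\mc B}$. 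I expect the only genuinely delicate point to be the exact identification $I_{l,\bm a}^{\mc B}\mc M_l=(I_{\bm a}\mc V^\fkS)_{(n-l,l)}^\sing$, since one must confirm that applying the central operators $C_i(\bm z)-a_i$ to the singular weight subspace $\mc M_l$ generates the same subspace that one obtains by first forming the $\C[z_1,\dots,z_n]^\fkS$-submodule $I_{\bm a}\mc V^\fkS$ of all of $\mc V^\fkS$ and then intersecting with the singular weight space; this is where the centrality of the $C_i$ (so that they commute with $\mathrm{U}(\g)$ and hence preserve $\mc M_l$) and the freeness results of Lemmas \ref{lem free graded} and \ref{lem free sing graded} are essential, guaranteeing that taking singular vectors and reducing modulo $I_{\bm a}$ are compatible operations with no extra relations appearing.
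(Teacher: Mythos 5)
Your proposal is correct and follows essentially the same route as the paper, whose entire proof is the one line ``follows from Theorem \ref{thm VS} and Lemma \ref{lem local weyl}''; you have simply expanded that citation into the explicit transport of ideals through $\eta_l$ and $\rho_l$ and the identification of $\mc V^\fkS/I_{\bm a}\mc V^\fkS$ with the Weyl module. Your extra care at the one genuinely nontrivial point --- that $I_{l,\bm a}^{\mc B}\mc M_l=(I_{\bm a}\mc V^\fkS)^\sing_{(n-l,l)}$, justified via the centrality of the $C_i$ and the freeness of the singular weight components --- addresses a detail the paper leaves implicit.
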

\begin{proof}
The lemma follows from Theorem \ref{thm VS} and Lemma \ref{lem local weyl}.
\end{proof}

By Lemma \ref{lem eta rho weyl}, the maps $\eta_l$ and $\rho_l$ induce the maps
\[
\eta_{l,\bm a}:\mc O_{l,\bm a}\to \mc B_{l,\bm a},\qquad \rho_{l,\bm a}:\mc O_{l,\bm a}\to \mc M_{l,\bm a}.
\]The map $\eta_{l,\bm a}$ is an isomorphism of algebras. Since $\mc B_{l,\bm a}$ is finite-dimensional, by e.g. \cite[Lemma 3.9]{MTV09}, $\mc O_{l,\bm a}$ is a Frobenius algebra, so is 
$\mc B_{l,\bm a}$. The map $\rho_{l,\bm a}$ is an isomorphism of vector spaces. Moverover, it follows from Theorem \ref{thm VS} and Lemma \ref{lem eta rho weyl} that $\rho_{l,\bm a}$ identifies the regular representation of $\mc O_{l,\bm a}$ with the $\mc B_{l,\bm a}$-module $\mc M_{l,\bm a}$.

The statement of this section implies, by e.g. \cite[Lemma 1.3]{Lu20}, the following. Set 
$$
\zeta_{\bm n,\bm b}(x)=\sum_{s=1}^k\frac{n_s}{x-b_s},\qquad \psi_{\bm n,\bm b}(x):=\zeta_{\bm n,\bm b}(x)\prod_{r=1}^k(x-b_r)^{n_s}.
$$
\begin{thm}\label{thm:weyl}
Suppose $\bm b=(b_1,\dots, b_k)$ is a sequence of distinct complex numbers. Then the Gaudin transfer matrix $\mathscr H(x)$ has a simple spectrum in $(\bigotimes_{s=1}^k W_{n_s}(b_s))^\sing$. There exists a bijective correspondence between the monic divisors $y$ of the polynomial $\psi_{\bm n,\bm b}$ and the eigenvectors $v_y$ of the
Gaudin transfer matrix $\mathscr H(x)$ (up to multiplication by a non-zero
constant). Moreover, this bijection is such that \[
\mathscr H(x)v_y = \Big(\frac{1}{2}\zeta_{\bm n,\bm b}'(x)-\zeta_{\bm n,\bm b}(x)\frac{y'(x)}{y(x)}+\frac{1}{2}\big(\zeta_{\bm n,\bm b}(x)\big)^2\Big)v_y.\qedd
\]
\end{thm}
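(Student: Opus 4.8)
The plan is to read off Theorem~\ref{thm:weyl} from the ``second isomorphism'' just established in this section, combined with the general facts about Frobenius algebras recorded in \cite[Lemma 1.3]{Lu20}. By Lemma~\ref{lem eta rho weyl} and the discussion following it, $\eta_{l,\bm a}\colon\mc O_{l,\bm a}\to\mc B_{l,\bm a}$ is an isomorphism of finite-dimensional (hence Frobenius) algebras, and $\rho_{l,\bm a}$ identifies $\mc M_{l,\bm a}=(\bigotimes_{s=1}^kW_{n_s}(b_s))^\sing_{(n-l,l)}$ with the regular representation of $\mc O_{l,\bm a}$. For a Frobenius algebra acting through its regular representation, \cite[Lemma 1.3]{Lu20} yields at once that the common eigenvectors of $\mc B_{l,\bm a}$ (up to scalar) are in bijection with the characters $\mc O_{l,\bm a}\to\C$, i.e. with the closed points of $\operatorname{Spec}\mc O_{l,\bm a}$, and that each eigenspace is one-dimensional. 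Since $\mathscr H(x)\in\mc B$, its eigenvectors on $\mc M_{l,\bm a}$ are precisely these common eigenvectors.

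The substantive point is to describe $\operatorname{Spec}\mc O_{l,\bm a}$. First I would unwind the defining relations of $I_{l,\bm a}^{\mc O}$: imposing $\vSi_i=a_i$ in \eqref{eq si q=1}, and recognizing via \eqref{eq:a-b-relations} that the resulting right-hand side is the derivative of $T(x):=\prod_{s=1}^k(x-b_s)^{n_s}$, collapses those relations to the single polynomial identity
\[
n\,f(x)\,g(x)=T'(x)=\psi_{\bm n,\bm b}(x),
\]
since $T'(x)=T(x)\,\zeta_{\bm n,\bm b}(x)=\psi_{\bm n,\bm b}(x)$. Thus $\mc O_{l,\bm a}$ is the coordinate ring of the variety of factorizations of $\psi_{\bm n,\bm b}$ into a monic degree-$l$ factor $f$ and a monic degree-$(n-l-1)$ factor $g$, the leading coefficients matching automatically as $\deg\psi_{\bm n,\bm b}=n-1$. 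Its closed points are therefore exactly the monic degree-$l$ divisors $y=f$ of $\psi_{\bm n,\bm b}$, with $g=\psi_{\bm n,\bm b}/(ny)$ forced; this produces the asserted bijection between eigenvectors $v_y$ and monic divisors $y$.

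It then remains to identify the eigenvalue and to check simplicity. The character $\pi$ attached to $y$ is the one used in the proof of Theorem~\ref{thm VS}, now specialized by $\prod_i(x-c_i)=T(x)$ and $\sum_i(x-c_i)^{-1}=\zeta_{\bm n,\bm b}(x)$, so that $\pi(f)=y$ and $\pi(g)=\psi_{\bm n,\bm b}/(ny)$. Feeding these through \eqref{eq gl11 eigenvalue in y} with all $\beta_s=0$, $\alpha_s=n_s$, and using $\sum_{r,s}n_rn_s/((x-b_r)(x-b_s))=\zeta_{\bm n,\bm b}(x)^2$, reproduces exactly
\[
\frac{1}{2}\zeta_{\bm n,\bm b}'(x)-\zeta_{\bm n,\bm b}(x)\frac{y'(x)}{y(x)}+\frac{1}{2}\big(\zeta_{\bm n,\bm b}(x)\big)^2.
\]
Distinct divisors $y$ give distinct characters and hence distinct eigenvectors; and since the only $y$-dependent term in this eigenvalue is $-\zeta_{\bm n,\bm b}(x)\,y'(x)/y(x)$, while the logarithmic derivative $y'/y$ determines the monic polynomial $y$, the values of $\mathscr H(x)$ on different $v_y$ are genuinely different, giving the simple spectrum.

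The step I expect to be the main obstacle is the precise identification of $\operatorname{Spec}\mc O_{l,\bm a}$ with the set of monic divisors, namely verifying that imposing $\vSi_i=a_i$ truly reduces to the factorization constraint $nfg=\psi_{\bm n,\bm b}$ and that this constraint produces no spurious closed points beyond the divisors; once this is in hand, the remainder is a transcription of \cite[Lemma 1.3]{Lu20} together with the substitution into the known eigenvalue formula \eqref{eq gl11 eigenvalue in y}.
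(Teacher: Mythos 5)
Your proposal is correct and follows the same route the paper takes: the paper derives Theorem~\ref{thm:weyl} precisely from the second isomorphism ($\mc O_{l,\bm a}\cong\mc B_{l,\bm a}$ Frobenius, $\mc M_{l,\bm a}$ its regular representation) together with \cite[Lemma 1.3]{Lu20}, and you correctly fill in the implicit steps, namely that specializing $\vSi_i\mapsto a_i$ in \eqref{eq si q=1} turns the defining relation into the factorization $nf(x)g(x)=\psi_{\bm n,\bm b}(x)$ so that characters of $\mc O_{l,\bm a}$ are the monic degree-$l$ divisors, and that the resulting character reproduces the eigenvalue formula via \eqref{eq gl11 eigenvalue in y} with $\alpha_s=n_s$, $\beta_s=0$.
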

\begin{rem}\label{rem:bae-weyl}
Fix $l\in\Z_{\gge 0}$ and set $\bm t=(t_1,\dots,t_l)$. Let $\bm y_{\bm t}$ represent $\bm t$. Then the Bethe ansatz equation for $(\bigotimes_{s=1}^k W_{n_s}(b_s))^\sing$ is
\[
y_{\bm t}(x) \quad \text{ divides the polynomial } \quad \psi_{\bm n,\bm b}(x).
\]
Note that in this case, $y_{\bm t}$ may have multiple roots. If there are multiple roots in $y_{\bm t}$, then the corresponding on-shell Bethe vector is zero. Therefore an actual eigenvector should be obtained via an appropriate derivative as pointed out in \cite[Section 8.2]{HMVY}.
\qed
\end{rem}

\subsection{The third isomorphism}\label{sec third}
Recall from Section \ref{sec BA}, that without loss of generality, 
we can assume that $\beta_s=0$, $1\lle s\lle k$.  In this case, $\alpha_s=n_s$, $1\lle s\lle k$.

\begin{lem}\label{lem quotient cyc}
There exists a surjective $\g[t]$-module homomorphism from $\bigotimes_{s=1}^k W_{n_s}(b_k)$ to $\bigotimes_{s=1}^k L_{\bla^{(s)}}(b_k)$ which maps vacuum vector to vacuum vector.
\end{lem}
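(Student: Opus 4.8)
The plan is to produce the desired surjection by first exhibiting a $\g[t]$-module homomorphism from each factor $W_{n_s}(b_s)$ onto $L_{\bla^{(s)}}(b_s)$ sending highest weight vector to highest weight vector, and then tensoring these maps together using the coproduct. The key observation is that both modules are highest weight modules for $\g[t]$ generated by an even singular vector, so the homomorphism is forced once we check that the generator of $L_{\bla^{(s)}}(b_s)$ satisfies the defining relations \eqref{eq:weyl1} and \eqref{eq:weyl2} of the Weyl module $W_{n_s}(b_s)$.

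First I would fix $s$ and write $\bla^{(s)}=(n_s,0)$ (recall we have reduced to $\beta_s=0$, so $\alpha_s=n_s$). Let $v_s$ be the highest weight vector of $L_{\bla^{(s)}}(b_s)$. On the evaluation module $L_{\bla^{(s)}}(b_s)$ the series $e_{ij}(x)$ acts via $\mathsf{ev}_{b_s}$ as $e_{ij}/(x-b_s)$, so $e_{11}(x)v_s = n_s/(x-b_s)\,v_s$, $e_{22}(x)v_s=0$, and $e_{12}(x)v_s=0$. Taking $\eta_s(x)=(x-b_s)^{n_s}$, one checks $\eta_s'(x)/\eta_s(x)=n_s/(x-b_s)$, so $v_s$ satisfies \eqref{eq:weyl1}. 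For \eqref{eq:weyl2}, written as $(e_{21}\otimes\eta_s(t))v_s=0$, note that $e_{21}[i]$ acts as $b_s^{\,i}\,e_{21}$ on the evaluation module, so $(e_{21}\otimes\eta_s(t))v_s=\eta_s(b_s)\,e_{21}v_s=0$ because $\eta_s(b_s)=0$. By the universal property of the Weyl module $W_{n_s}(b_s)$ stated in Section \ref{sec space VS}, there is a unique $\g[t]$-module homomorphism $W_{n_s}(b_s)\to L_{\bla^{(s)}}(b_s)$ sending the generator $w_s$ to $v_s$; it is surjective since $v_s$ generates the irreducible module $L_{\bla^{(s)}}(b_s)$.

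Next I would assemble the tensor product. The coproduct $\Delta$ on $\mathrm{U}(\g[t])$ makes the tensor product of the individual homomorphisms into a $\g[t]$-module homomorphism
\[
\bigotimes_{s=1}^k W_{n_s}(b_s)\longrightarrow \bigotimes_{s=1}^k L_{\bla^{(s)}}(b_s),\qquad
\bigotimes_{s=1}^k w_s\longmapsto \bigotimes_{s=1}^k v_s=|0\rangle,
\]
so the vacuum vector maps to the vacuum vector as required. Surjectivity is immediate: since each factor map is onto and the tensor product of surjections of vector spaces is a surjection, the assembled map is surjective.

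The only point requiring genuine care — and hence the main obstacle — is confirming that tensoring the individual maps does yield a $\g[t]$-module homomorphism rather than merely a linear map; this rests on the fact that $\Delta$ is an algebra homomorphism and that each $f_s\colon W_{n_s}(b_s)\to L_{\bla^{(s)}}(b_s)$ intertwines the $\g[t]$-action. Since $\g[t]$ generates $\mathrm{U}(\g[t])$ and $\Delta(X)=\sum_r 1^{\otimes(r-1)}\otimes X\otimes 1^{\otimes(k-r)}$ for $X\in\g[t]$, the intertwining of the tensor map reduces factorwise to the intertwining of each $f_s$, which we have already established. Thus the assembled map is the desired surjective $\g[t]$-module homomorphism.
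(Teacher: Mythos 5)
Your proof is correct. It differs mildly from the paper's in how the factorwise surjection $W_{n_s}(b_s)\twoheadrightarrow L_{\bla^{(s)}}(b_s)$ is obtained: the paper deduces it in one line from Lemma \ref{lem:weyl-b-more} (the classification of irreducible subquotients of $W_{n_s}(b_s)$, which identifies $L_{(n_s,0)}(b_s)$ as the simple quotient of this cyclic highest-weight module) together with the standing reduction to $\beta_s=0$, whereas you verify directly that the highest weight vector of $L_{(n_s,0)}(b_s)$ satisfies the defining relations \eqref{eq:weyl1} and \eqref{eq:weyl2} and then invoke the universal property of the Weyl module. Your route is self-contained and makes explicit the computation $(e_{21}\otimes\eta_s(t))v_s=\eta_s(b_s)e_{21}v_s=0$, at the cost of being longer; the paper's route leans on a lemma it has already established. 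The assembly step via the coproduct and the surjectivity of a tensor product of surjections is common to both and is unproblematic (the factor maps are even, so no sign issues arise). Note also that you correctly read the evaluation parameters in the statement as $b_s$ rather than the typographical $b_k$.
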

\begin{proof}
It follows from Lemma \ref{lem:weyl-b-more} and our assumption that $\beta_s=0$ for all $1\lle s\lle k$.
\end{proof}
 
By Lemma \ref{lem local weyl}, the surjective $\g[t]$-module homomorphism $$\bigotimes_{s=1}^k W_{n_s}(b_k)\twoheadrightarrow\bigotimes_{s=1}^k L_{\bla^{(s)}}(b_k)$$ induces a surjective $\g[t]$-module homomorphism $$\mc V^\fkS\twoheadrightarrow \bigotimes_{s=1}^k L_{\bla^{(s)}}(b_k).$$ The second map then induces a projection of the Bethe algebras $\mc B_{l}\twoheadrightarrow \mc B_{l,\bLa,\bm b}$. We describe the kernel of this projection. We consider the corresponding ideal in the algebra $\mc O_l$.

Suppose $l\lle k-1$. Define the polynomial $h(x)$ by
\[
h(x)=\prod_{s=1}^k(x-b_s)^{n_s-1}.
\]
Divide the polynomial $g(x)$ in \eqref{eq poly f g} by $h(x)$ and let
\beq\label{eq coeff p}
p(x)=x^{k-l-1}+p_1x^{k-l-2}+\dots+p_{k-l-2}x+p_{k-l-1},
\eeq
\beq\label{eq coeff r}
r(x)=r_{1}x^{n-k-1}+r_2x^{n-k-2}+\dots+r_{n-k-1}x+r_{n-k}
\eeq
be the quotient and the remainder, respectively. Clearly, $p_i,r_j\in\mc O_l$. 

Denote by $I_{l,\bLa,\bm b}^{\mc O}$ the ideal of $\mc O_l$ generated by $r_1,\dots,r_{n-k}$, $\vSi_n- a_n$, and the coefficients of polynomial
\[
\varphi_{\bLa,\bm b}(x)-np(x)f(x)=\prod_{s=1}^k(x-b_s)\sum_{s=1}^k\frac{n_s}{x-b_s}-np(x)f(x).
\]
Let $\mc O_{l,\bLa,\bm b}$ be the quotient algebra
\[
\mc O_{l,\bLa,\bm b}=\mc O_l/I_{l,\bLa,\bm b}^{\mc O}.
\]
Clearly, if $\mc O_{l,\bLa,\bm b}$ is finite-dimensional, then it is a Frobenius algebra. 

Let $I_{l,\bLa,\bm b}^{\mc B}$ be the image of $I_{l,\bLa,\bm b}^{\mc O}$ under the isomorphism $\eta_{l}$.

\begin{lem}\label{lem ann ideal}
The ideal $I_{l,\bLa,\bm b}^{\mc B}$ is contained in the kernel of the projection $\mc B_{l} \twoheadrightarrow \mc B_{l,\bLa,\bm b}$.
\end{lem}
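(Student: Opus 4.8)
The plan is to establish the containment one generator at a time. By definition $\mc B_{l,\bLa,\bm b}$ is the image of $\mc B_l$ in $\End(\mc M_{l,\bLa,\bm b})$, so the kernel of $\mc B_l\twoheadrightarrow\mc B_{l,\bLa,\bm b}$ is exactly the set of elements of $\mc B_l$ that annihilate $\mc M_{l,\bLa,\bm b}$, and this set is an ideal. Hence it suffices to show that $\eta_l$ sends each generator of $I_{l,\bLa,\bm b}^{\mc O}$ — namely $r_1,\dots,r_{n-k}$, the element $\vSi_n-a_n$, and the coefficients of $\varphi_{\bLa,\bm b}-npf$ — to an operator acting by zero on $\mc M_{l,\bLa,\bm b}$. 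I would prove this first for generic $\bm b$, imitating the eigenvalue computation in the proof of Theorem \ref{thm VS}, and then remove the genericity by a polynomiality argument.

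The first step: fix $\bm b$ generic, meaning $\varphi_{\bLa,\bm b}$ has only simple roots. By Theorem \ref{thm complete} and Proposition \ref{prop bv nonzero}, the nonzero on-shell Bethe vectors $\bB_l(\bm t)$, indexed by the monic degree-$l$ divisors $y_{\bm t}$ of $\varphi_{\bLa,\bm b}$, form an eigenbasis of $\mc M_{l,\bLa,\bm b}$. To each such $\bm t$ I attach the point $P_{\bm t}\in\Omega_l$ with coordinates $f=y_{\bm t}$, $g=D'/(n\,y_{\bm t})$ and $\vSi_n=a_n$, where $D(x)=\prod_{s=1}^k(x-b_s)^{n_s}=x^n+\sum_{i=1}^n(-1)^ia_ix^{n-i}$ as in \eqref{eq:a-b-relations}; note $g$ is a genuine monic polynomial of degree $n-l-1$ because $D'=h\,\varphi_{\bLa,\bm b}$ and $y_{\bm t}\mid\varphi_{\bLa,\bm b}$. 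The crux is to check that $\eta_l(F)$ acts on $\bB_l(\bm t)$ by the scalar $F(P_{\bm t})$ for every $F\in\mc O_l$; this is the natural degeneration of the character $\pi$ used for $V(\bm c)$ in the proof of Theorem \ref{thm VS}, with $\prod_i(x-c_i)\sum_i\frac{1}{x-c_i}$ replaced by $D\,\zeta_{\bm n,\bm b}=D'$. Concretely, the central generators $C_j=\eta_l(\vSi_j)$ act by the scalars $a_j$ by Lemmas \ref{lem quotient cyc} and \ref{lem local weyl}, matching $\vSi_j(P_{\bm t})=a_j$ since $nfg|_{P_{\bm t}}=D'$; and, using that on $\mc M_{l,\bLa,\bm b}$ the operator $x^n+\sum_i(-1)^iC_ix^{n-i}$ is $D(x)$, relation \eqref{eq diff poly} identifies $\eta_l(nf'g)=\sum_{i\gge2}B_ix^{n-i}$ with $D(x)\,\mc T(x)$, whose eigenvalue on $\bB_l(\bm t)$ is $D\,\zeta_{\bm n,\bm b}\,y_{\bm t}'/y_{\bm t}=nf'g|_{P_{\bm t}}$ by \eqref{eq:T-eigenvalue}. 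As $\bB_l(\bm t)$ is a common eigenvector of the commutative algebra $\mc B_{l,\bLa,\bm b}$, the eigenvalue of $\eta_l(F)$ is then $F$ evaluated at the eigenvalues of the generators, i.e. $F(P_{\bm t})$.

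The second step is to verify that every generator of $I_{l,\bLa,\bm b}^{\mc O}$ vanishes at each $P_{\bm t}$. Since $y_{\bm t}\mid\varphi_{\bLa,\bm b}$, the ratio $\varphi_{\bLa,\bm b}/(n\,y_{\bm t})$ is a polynomial of degree $k-l-1$, and at $P_{\bm t}$ we have $g=D'/(ny_{\bm t})=h\cdot\varphi_{\bLa,\bm b}/(ny_{\bm t})$; by uniqueness in the division of $g$ by $h$ this forces the remainder $r$ to vanish and $p=\varphi_{\bLa,\bm b}/(ny_{\bm t})$, whence $\varphi_{\bLa,\bm b}-npf=\varphi_{\bLa,\bm b}-n\,p\,y_{\bm t}=0$, while $\vSi_n-a_n=0$ by construction of $P_{\bm t}$. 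Therefore all generators have zero eigenvalue on every basis vector $\bB_l(\bm t)$, so their $\eta_l$-images annihilate $\mc M_{l,\bLa,\bm b}$ for all generic $\bm b$.

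Finally, I would remove the genericity. Identifying $\bigotimes_{s=1}^kL_{\bla^{(s)}}(b_s)$ with $\bigotimes_{s=1}^kL_{\bla^{(s)}}$ as $\g$-modules, the space $\mc M_{l,\bLa,\bm b}$ becomes a fixed vector space, and each generator $F$, together with its image $\eta_l(F)$ (a fixed polynomial in the $\ugt$-elements $B_i,C_j$ with $\bm b$-dependent coefficients), acts on it by an operator depending polynomially on $\bm b$, since $e_{ij}[r]$ acts on the $s$-th factor by $b_s^r e_{ij}$. An operator-valued polynomial in $\bm b$ that vanishes on the nonempty Zariski-open simple-root locus vanishes on the whole irreducible locus of distinct $\bm b$, giving the claim for all admissible $\bm b$. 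The main obstacle is the eigenvalue identification $\eta_l(F)\bB_l(\bm t)=F(P_{\bm t})\bB_l(\bm t)$ of the second step — in particular, pinning down the value $g=D'/(ny_{\bm t})$ of the coordinate $g$ at $P_{\bm t}$ through \eqref{eq diff poly} and \eqref{eq:T-eigenvalue}; once that is in place, the vanishing of the generators at $P_{\bm t}$ and the specialization in $\bm b$ are routine.
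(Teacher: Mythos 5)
Your proposal is correct and follows essentially the same route as the paper: reduce to generic $\bm b$ by polynomiality of the operators in $\bm b$, then use the eigenbasis of on-shell Bethe vectors together with the eigenvalue formulas \eqref{eq gl11 eigenvalue in y}, \eqref{eq:T-eigenvalue} to see that the image of $f$ acts by $y_{\bm t}$, a divisor of $\varphi_{\bLa,\bm b}$, which forces the generators of $I_{l,\bLa,\bm b}^{\mc O}$ to vanish. Your version is somewhat more explicit than the paper's (which compresses the generator-by-generator check at the point $P_{\bm t}$ into the divisibility statement and a citation of Theorem \ref{thm:weyl}), but the underlying argument is the same.
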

\begin{proof}
We treat $\bm b=(b_1,\dots,b_k)$ as variables. Note that the elements of $I_{l,\bLa,\bm b}^{\mc B}$ act on $\mc M_{l,\bLa,\bm b}$ as polynomials in $\bm b$ with values in $\End((L_{\bLa})_{(n-l,l)}^\sing)$. Therefore it suffices to show it for generic $\bm b$. Let $\mathfrak f(x)$ be the image of $f(x)$ under $\eta_{l}$. The condition that $I_{l,\bLa,\bm b}^{\mc B}$ vanishes is equivalent to the condition that $\varphi_{\bLa,\bm b}(x)$ is divisible by $\mathfrak f(x)$.

By Theorem \ref{thm complete}, there exists an eigenbasis of the operator $\cT(x)$ in $\mc M_{l,\bLa,\bm b}$ for generic $\bm b$. Clearly, a solution of Bethe ansatz equation associated to $\bLa,\bm b, l$ is also a solution to Bethe ansatz equation for $\mc M_{l,\bm a}$, see Theorem \ref{thm:weyl} and Remark \ref{rem:bae-weyl}. Moreover, the expressions of corresponding on-shell Bethe vectors coincide (with different vacuum vectors). By Lemma \ref{lem quotient cyc} and Theorems \ref{thm gl11 eigenvalue in y}, \ref{thm:weyl}, $\varphi_{\bLa,\bm b}(x)$ is divisible by $\mathfrak f(x)$ for generic $\bm b$ since the eigenvalue of $\mathfrak f(x)$ corresponds to $y_{\bm t}(x)$ in \eqref{eq gl11 eigenvalue in y}. Therefore $I_{l,\bLa,\bm b}^{\mc B}$ vanishes for generic $\bm b$, thus completing the proof.
\end{proof}

Therefore, we have the epimorphism
\beq\label{eq epi new}
\mc O_{l,\bLa,\bm b}\cong \mc B_{l} /I_{l,\bLa,\bm b}^{\mc B}\twoheadrightarrow \mc B_{l,\bLa,\bm b}.
\eeq
We claim that the surjection in \eqref{eq epi new} is an isomorphism by checking $\dim \mc O_{l,\bLa,\bm b}=\dim \mc B_{l,\bLa,\bm b}$.

\begin{lem}\label{lem dimen count}
We have $\dim \mc O_{l,\bLa,\bm b}=\displaystyle\binom{k-1}{l}$.
\end{lem}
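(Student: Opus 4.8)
The plan is to present $\mc O_{l,\bLa,\bm b}$ as the coordinate ring of the scheme parametrizing factorizations $\varphi_{\bLa,\bm b}(x)=np(x)f(x)$ and to compute its dimension by a freeness argument over a polynomial subring.

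\textbf{Eliminating the superfluous variables.} First I would remove $\vSi_n$ and the $g$-variables. The generator $\vSi_n-a_n$ involves $\vSi_n$ alone, so passing to $\mc O_l/\langle \vSi_n-a_n\rangle$ simply discards $\vSi_n$ and leaves $\C[f_1,\dots,f_l,g_1,\dots,g_{n-l-1}]$. Since $h(x)$ is a fixed monic polynomial of degree $n-k$, the division $g(x)=h(x)p(x)+r(x)$ in \eqref{eq coeff p}--\eqref{eq coeff r} is an invertible linear change of coordinates from the $n-l-1$ coefficients $g_1,\dots,g_{n-l-1}$ to the $n-l-1$ coefficients $p_1,\dots,p_{k-l-1},r_1,\dots,r_{n-k}$; imposing $r_1=\dots=r_{n-k}=0$ therefore identifies the ring with $\C[f_1,\dots,f_l,p_1,\dots,p_{k-l-1}]$, under which $g=hp$. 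Consequently, writing $R:=\C[f_1,\dots,f_l,p_1,\dots,p_{k-l-1}]$,
\[
\mc O_{l,\bLa,\bm b}\cong R/J,\qquad J=\big\langle\,\text{coefficients of } np(x)f(x)-\varphi_{\bLa,\bm b}(x)\,\big\rangle .
\]
Writing $\varphi_{\bLa,\bm b}(x)=n\Phi(x)$ with $\Phi(x)=x^{k-1}+\sum_{i=1}^{k-1}(-1)^i\varepsilon_ix^{k-1-i}$ monic, the ideal $J$ is generated by the $k-1$ coefficients of $x^{k-2},\dots,x^0$ in $p(x)f(x)-\Phi(x)$.

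\textbf{Computing the dimension via a homogeneous system of parameters.} Next I would grade $R$ by $\deg f_i=i$, $\deg p_j=j$, and let $c_i$ be the coefficient of $x^{k-1-i}$ in $p(x)f(x)$, which is homogeneous of degree $i$. The generators of $J$ are $c_i-(-1)^i\varepsilon_i$, whose leading forms are exactly $c_1,\dots,c_{k-1}$. The common zero locus of $c_1,\dots,c_{k-1}$ is the origin, since $p(x)f(x)=x^{k-1}$ forces $f(x)=x^l$ and $p(x)=x^{k-l-1}$; hence $c_1,\dots,c_{k-1}$ is a homogeneous system of parameters, and being $k-1$ elements in the Cohen--Macaulay ring $R$ of dimension $k-1$, they form a regular sequence. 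Therefore $R$ is a free module over the polynomial subring $\C[c_1,\dots,c_{k-1}]$, of rank equal to $\dim_\C R/\langle c_1,\dots,c_{k-1}\rangle$. This latter number is read off from the graded complete intersection Hilbert series
\[
\frac{(q)_{k-1}}{(q)_l\,(q)_{k-l-1}},
\]
which is the $q$-binomial coefficient and specializes at $q=1$ to $\binom{k-1}{l}$.

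\textbf{Base change.} Finally, since $J$ is generated by $c_i-(-1)^i\varepsilon_i$, the quotient $R/J$ is the base change of the free $\C[c_1,\dots,c_{k-1}]$-module $R$ along the $\C$-point $c_i\mapsto(-1)^i\varepsilon_i$; base change of a free module preserves rank, so $\dim_\C \mc O_{l,\bLa,\bm b}=\binom{k-1}{l}$, as claimed. The routine-but-delicate step is the first one: verifying that the division map is an invertible linear substitution, so that the generators $r_j$ and $\vSi_n-a_n$ genuinely eliminate variables and leave precisely the factorization ideal $J$. The conceptual heart is recognizing that the homogeneous leading forms $c_i$ form a regular sequence, which converts the dimension count into the $q=1$ specialization of a $q$-binomial coefficient and, via freeness, yields the same value for every choice of $\varepsilon_i$ without invoking any semicontinuity argument.
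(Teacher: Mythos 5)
Your proof is correct. The elimination step (discarding $\vSi_n$, and using that polynomial division by the fixed monic $h$ gives an invertible affine change of coordinates from $g_1,\dots,g_{n-l-1}$ to $p_1,\dots,p_{k-l-1},r_1,\dots,r_{n-k}$, so that $\mc O_{l,\bLa,\bm b}\cong R/J$ with $J$ the factorization ideal) is exactly what the paper does in \eqref{eq iso algebra}. Where you diverge is the final count. The paper substitutes root variables, $f(x)=\prod_{i=1}^{l}(x-w_i)$ and $p(x)=\prod_{i=1}^{k-l-1}(x-w_{l+i})$, identifies $R/J$ with $\C[w_1,\dots,w_{k-1}]^{\fkS_{l}\times \fkS_{k-l-1}}/\langle \sigma_i(\bm w)-\varepsilon_i\rangle$, and quotes the freeness of the partial invariants over the full symmetric functions with rank $\binom{k-1}{l}$. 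You instead stay in the coefficient variables and prove the needed freeness from scratch: the leading forms $c_1,\dots,c_{k-1}$ of the generators of $J$ cut out only the origin, hence form a homogeneous system of parameters and thus a regular sequence in the Cohen--Macaulay ring $R$, so $R$ is free over $\C[c_1,\dots,c_{k-1}]$ of rank given by the $q=1$ value of the complete-intersection Hilbert series $(q)_{k-1}/\bigl((q)_l(q)_{k-l-1}\bigr)$, and base change preserves the rank. The two arguments prove the same underlying fact (under the identification $R\cong\C[\bm w]^{\fkS_l\times\fkS_{k-l-1}}$, your subring $\C[c_1,\dots,c_{k-1}]$ is $\C[\bm w]^{\fkS_{k-1}}$), but yours is self-contained and avoids invariant theory, at the cost of invoking Cohen--Macaulayness; the paper's route via the $w$-variables additionally produces the explicit presentation \eqref{eq iso algebra new} that is reused in part (i) of Theorem \ref{thm tensor irr}, which your version would have to supply separately.
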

\begin{proof}
Note that $\C[p_1,\dots,p_{k-l-1},r_1,\dots,r_{n-k}]\cong \C[g_1,\dots,g_{n-l-1}]$, where $p_i$ and $r_j$ are defined in \eqref{eq coeff p} and \eqref{eq coeff r}. It is not hard to check that
\beq\label{eq iso algebra}
\mc O_{l,\bLa,\bm b}\cong \C[f_1,\dots,f_l,p_1,\dots,p_{k-l-1}]/\tilde I_{l,\bLa,\bm b}^{\mc O},
\eeq
where $\tilde I_{l,\bLa,\bm b}^{\mc O}$ is the ideal of $\C[f_1,\dots,f_l,p_1,\dots,p_{k-l-1}]$ generated by the coefficients of the polynomial $\varphi_{\bLa,\bm b}(x)-np(x)f(x)$.

Introduce new variables $\bm w=(w_1,\dots,w_{k-1})$ such that
\[
f(x)=\prod_{i=1}^{l}(x-w_i),\quad p(x)=\prod_{i=1}^{k-l-1}(x-w_{l+i}).
\]Let $\bm \varepsilon=(\varepsilon_1,\dots,\varepsilon_{k-1})$ be complex numbers such that
\[
\varphi_{\bLa,\bm b}(x)=\prod_{s=1}^k(x-b_s)\sum_{s=1}^k\frac{n_s}{x-b_s}=n\Big(x^{k-1}+\sum_{i=1}^{k-1}(-1)^i\varepsilon_ix^{k-1-i}\Big).
\]
Then 
\beq\label{eq iso algebra new}
\C[f_1,\dots,f_l,p_1,\dots,p_{k-l-1}]/\tilde I_{l,\bLa,\bm b}^{\mc O}\cong \C[w_1,\dots,w_{k-1}]^{\fkS_{l}\times \fkS_{k-l-1}}/\langle \sigma_i(\bm w)-\varepsilon_i\rangle_{i=1,\dots,k-1}.
\eeq
The lemma now follows from the fact that $\C[w_1,\dots,w_{k-1}]^{\fkS_{l}\times \fkS_{k-l-1}}$ is a free $\C[w_1,\dots,w_{k-1}]^\fkS$-module of rank $\binom{k-1}{l}$.\end{proof}

Note that we have the projection $(\mc V^\fkS)_{(n-l,l)}^\sing \twoheadrightarrow \mc M_{l,\bLa,\bm b}$. Since by Theorem \ref{thm VS} the Bethe algebra $\mc B_l $ acts on $(\mc V^\fkS)_{(n-l,l)}^\sing$ cyclically, the Bethe algebra $\mc B_{l,\bLa,\bm b}$ acts on $\mc M_{l,\bLa,\bm b}$ cyclically as well. Therefore we have
\beq\label{eq:dim=}
\dim \mc B_{l,\bLa,\bm b}=\dim \mc M_{l,\bLa,\bm b}=\binom{k-1}{l}.
\eeq

\begin{proof}[Proof of Theorem \ref{thm tensor irr}]
Part (i) follows from Lemma \ref{lem dimen count} and \eqref{eq epi new}, \eqref{eq iso algebra}, \eqref{eq iso algebra new}, \eqref{eq:dim=}. Clearly, we have $\mc B_{l,\bLa,\bm b}\cong \mc O_{l,\bLa,\bm b}$ is a Frobenius algebra. Moreover, the map $\rho_l$ from Theorem \ref{thm VS} induces a map
\[
\rho_{l,\bLa,\bm b}:\mc O_{l,\bLa,\bm b}\to \mc M_{l,\bLa,\bm b}
\]
which identifies the regular representation of $\mc O_{l,\bLa,\bm b}$ with the $\mc B_{l,\bLa,\bm b}$-module $\mc M_{l,\bLa,\bm b}$. Therefore part (ii) is proved.

Since $\mc B_{l,\bLa,\bm b}$ is a Frobenius algebra, the regular and coregular representations of the algebra $\mc B_{l,\bLa,\bm b}$ are isomorphic to each other. Parts (iii)--(vi) follow from the general facts about the coregular representations, see e.g. \cite[Section 3.3]{MTV09} or \cite[Lemma 1.3]{Lu20}. 

Due to part (iv), it suffices to consider the algebraic multiplicity of every eigenvalue. It is well known that roots of a polynomial depend continuously on its coefficients. Hence the eigenvalues of $\mc T(x)$ depend continuously on $\bm b$. Part (vii) follows from the deformation argument and Theorem \ref{thm complete}.
\end{proof}

\end{document}